\title{Positivity-hardness results  on Markov decision processes}
\pgfplotsset{width=10cm,compat=1.9}
\newcommand{\PE}{\mathbb{PE}}
\newcommand{\CE}{\mathbb{CE}}
\newcommand{\VaR}{\mathit{VaR}}
\newcommand{\CVaR}{\mathit{CVaR}}
\newcommand{\cC}{\mathcal{C}}
\newcommand{\cG}{\mathsf{G}}
\newcommand{\cI}{\mathcal{I}}
\newcommand{\cM}{\mathcal{M}}
\newcommand{\cN}{\mathcal{N}}
\newcommand{\cO}{\mathcal{O}}
\newcommand{\cP}{\mathcal{P}}
\newcommand{\cT}{\mathcal{T}}
\newcommand{\eqdef}{\ensuremath{\stackrel{\text{\tiny def}}{=}}}
\renewcommand{\Pr}{\mathrm{Pr}}
\newcommand{\AP}{\mathsf{AP}}
\newcommand{\sinit}{s_{\mathit{\scriptscriptstyle init}}}
\newcommand{\Act}{\mathit{Act}}
\newcommand{\last}{\mathit{last}}
\newcommand{\length}{\mathit{length}}
\newcommand{\Cyl}{\mathit{Cyl}}
\newcommand{\sched}{\mathfrak{S}}
\newcommand{\tsched}{\mathfrak{T}}
\newcommand{\wgt}{\mathit{wgt}}
\newcommand{\act}{\alpha}
\newcommand{\Size}{\mathit{size}}
\newcommand{\goal}{\mathit{goal}}
\newcommand{\fail}{\mathit{fail}}
\newcommand{\trap}{\mathit{trap}}
\newcommand{\Goal}{\mathit{Goal}}
\newcommand{\fpath}{\pi}
\newcommand{\rawdiaplus}{%
  \begin{tikzpicture}
    \useasboundingbox (-0.7ex, -0.9ex) rectangle (0.7ex, 0.9ex);
    \node (w) at (-0.7ex,0) {};
    \node (e) at (+0.7ex,0) {};
    \node (s) at (0,-0.9ex) {};
    \node (n) at (0,+0.9ex) {};
    \draw (n.center) -- (e.center) -- (s.center) -- (w.center) -- (n.center);
    \draw (n.center) -- (s.center);
    \draw (e.center) -- (w.center);
  \end{tikzpicture}}
\newcommand{\rawdiaminus}{%
  \begin{tikzpicture}
    \useasboundingbox (-0.7ex, -0.9ex) rectangle (0.7ex, 0.9ex);
    \node (w) at (-0.7ex,0) {};
    \node (e) at (+0.7ex,0) {};
    \node (s) at (0,-0.9ex) {};
    \node (n) at (0,+0.9ex) {};
    \draw (n.center) -- (e.center) -- (s.center) -- (w.center) -- (n.center);
    \draw (e.center) -- (w.center);
  \end{tikzpicture}}
\begin{document}

\maketitle

\begin{abstract}

This paper investigates a series of optimization problems for one-counter Markov decision processes (MDPs) and integer-weighted MDPs with finite state space. Specifically, it considers problems addressing termination probabilities and expected termination times for one-counter MDPs, as well as satisfaction probabilities of energy objectives, conditional and partial expectations,  satisfaction probabilities of constraints on the total accumulated weight, the computation of quantiles for the accumulated weight, and  the conditional value-at-risk for accumulated weights for integer-weighted MDPs.  Although algorithmic results are available for some special instances, the decidability status of the decision versions of these problems is unknown in general.

The paper demonstrates that these optimization problems are inherently mathematically difficult by providing polynomial-time reductions from the Positivity problem for linear recurrence sequences. This problem is a well-known number-theoretic problem whose decidability status has been open for decades and it is known that  decidability of the Positivity problem would have far-reaching consequences in analytic number theory. So, the reductions presented in the paper show that an algorithmic solution to any of the investigated problems is not possible without a major breakthrough in analytic number theory.
The reductions rely on the construction of MDP-gadgets that encode the initial values and linear recurrence relations of linear recurrence sequences. 
These gadgets can flexibly be adjusted to prove the various Positivity-hardness results.

\end{abstract}

\section{Introduction}
\label{chap:intro}

When modelling and analyzing computer systems and their interactions with their environment, two qualitatively different kinds of uncertainty about the evolution of the system execution play a central role: non-determinism and probabilism.
If a system is, for example, employed in an unknown environment or depends on user inputs or concurrent processes, modelling the system as non-deterministic accounts for all possible external influences, sequences of user inputs, or possible orders in which concurrent events take place.  If transition probabilities between the states of a system, such as the failure probability of components or the probabilities in a  probabilistic choice employed in a randomized algorithm, are known or can be estimated, it is appropriate to model this behavior as probabilistic. A pure worst- or best-case analysis is not very informative in such cases and the additional probabilistic information available should be put to use.
\emph{Markov decision processes (MDPs)} are a standard operational model combining non-deterministic and probabilistic behavior and are widely used in operations research, artificial intelligence, and verification among others.

In each state of an MDP, there is a non-deterministic choice from a set of actions. Each action specifies a probability distribution over the possible successor states according to which a transition is chosen randomly.
Typical optimization problems on MDPs  require resolving the non-deterministic choices by specifying a \emph{scheduler} such that a quantitative objective function is optimized.
For example, the standard model-checking problem asks for the minimal or maximal probability  that an execution satisfies a given linear-time property. Here, minimum and maximum range over all resolutions of the non-deterministic choices, i.e., over all schedulers. 
This model-checking problem is known to be 2EXPTIME-complete if the property is given in linear temporal logic (LTL) \cite{CY95} and solvable in polynomial time if the property is given by a deterministic automaton \cite{deAlfaro1999,BaierKatoen08}.
Many quantitative aspects of a system can be modeled by equipping an MDP with weights that are collected in each step. These weights might represent time, energy consumption, utilities, or  generally speaking any sort of costs or rewards incurred. Classical optimization problems in this context that are known to be solvable in polynomial time include the optimization of the expected value of 
the total accumulated weight before a target state is reached, the so-called \emph{stochastic shortest path problem} (SSPP) \cite{bertsekas1991,deAlfaro1999,lics2018}, the expected value of the reward earned on average per step, the so-called expected \emph{mean payoff} or \emph{long-run average}, or the expected \emph{discounted accumulated weight} where after each step a discount factor is applied to all future weights (for the latter two, see, e.g., \cite{hordijk1979,puterman1994}). 

Of course, there is a vast landscape of further optimization problems on finite-state MDPs that have been analyzed.
We are, nevertheless, not aware of natural decision problems for standard (finite-state) MDPs with a single weight function and single objective that are known to be undecidable.
 Undecidability results have been established for more expressive models. This applies, e.g., to recursive MDPs \cite{etessami2015}, MDPs with two or more weight functions \cite{BKKW14,randour2017}, or partially observable MDPs \cite{madani1999,BaiGroeBer12}. 

 In this paper, we will investigate a series of optimization problems that have been studied in the literature, but  are open in general.
We will show that these problems possess an inherent mathematical difficulty that makes  algorithmic solutions impossible without a major breakthrough in analytic number theory.
Formally, this result is obtained by reductions from the \emph{Positivity problem} for \emph{linear recurrence sequences}, a number theoretic problem whose decidability status has been open for many decades (see, e.g., \cite{halava2005,ouaknine2012decision,ouaknine2015linear}).

\subsection{{Positivity problem}}
\begin{definition}[Positivity problem]
The Positivity problem for linear recurrence sequences asks whether such a sequence stays non-negative. More formally,
given a natural number $k\geq 2$, and rationals $\alpha_i$ and $\beta_j$ with $1\leq i \leq k$ and $0\leq j \leq k-1$, let $(u_n)_{n\geq0}$ be defined  by the initial values $u_0=\beta_0$, \dots, $u_{k-1}=\beta_{k-1}$ and the linear recurrence relation
\[ u_{n+k} = \alpha_1 u_{n+k-1} + \dots + \alpha_k u_n \]
for all $n\geq 0$. The Positivity problem asks to decide whether $u_n \geq  0$ for all $n$.\footnote{We do not distinguish between the Positivity problem and its complement in the sequel. So, we also refer to the problem whether there is an $n$ such that $u_n<0$ as the Positivity problem.} The number $k$ is called the order of the linear recurrence sequence.
\end{definition}

The Positivity problem is closely related to  the famous \emph{Skolem problem}. The Skolem problem asks whether there is an $n$ such that $u_n= 0$ for a given linear recurrence sequence $(u_n)_{n\geq 0}$.  It is well-known that the 
 Skolem problem is polynomial-time reducible to the Positivity problem (see, e.g., \cite{everest2003}). 
The Positivity problem and the Skolem problem are outstanding problems in the fields of number theory and theoretical computer science (see, e.g., \cite{halava2005,ouaknine2012decision,ouaknine2015linear}) and their decidability has been open for many decades. 
Deep results establish decidability for both problems for linear recurrence sequences of low order or for restricted classes of sequences 
\cite{shorey1984distance,vereshchagin1985problem,ouaknine2014positivity,ouaknine2014,ouaknine2014ultimate}. A proof of decidability or undecidability of the Positivity problem for arbitrary sequences, however, withstands all known number-theoretic techniques. In \cite{ouaknine2014}, it is shown that decidability of the Positivity problem (already for linear recurrence sequences of order $6$) would entail a major breakthrough 
in the field of Diophantine approximation of transcendental numbers, an area of analytic number theory.

We call a problem to which the Positivity problem is reducible \emph{Positivity-hard}. 
From a complexity theoretic point of view, the Positivity problem is known to be at least as hard as the decision problem for the universal fragment of the theory of the reals with addition, multiplication, and order \cite{ouaknine2014ultimate}, a problem known to be coNP-hard and to lie in PSPACE \cite{canny1988some}. As most of the problems we will address are PSPACE-hard, the reductions in this paper do not provide new lower bounds on the computational complexity.
The hardness results in this paper hence refer to the far-reaching consequences on major open problems that a decidability result would imply.
Furthermore, of course, the undecidability of the Positivity problem would entail the undecidability of any Positivity-hard problem.

\subsection{Problems under investigation and related work on these problems}
In the sequel, we briefly describe the problems studied in this paper and describe related work on these problems. In general, the decidability status of all of these problems is open and we will prove them to be Positivity-hard.

\paragraph{Energy objectives, one-counter MDPs, and quantiles.}
If weights model a resource like energy that can be consumed and gained during a system execution, a natural problem is to determine the worst- or best-case probability that the system never runs out of the resource. This is known as the \emph{energy objective}. 
There has been work on combinations of the energy objective with further objectives such as parity objectives \cite{ChatDoy11,MaySchTozWoj17} and expected mean payoffs  \cite{BKN16}. Previous work on this objective focused on the possibility to satisfy the objective (or the combination of objectives) almost surely. 
The quantitative problem whether it is possible to satisfy an energy objective with probability greater than some threshold $p$ is open.

The complement of the energy objective can be found in the context of \emph{one-counter MDPs} (see \cite{brazdil2010,brazdil2011,brazdil2012}): Equipping an MDP with a counter that can be increased and decreased can be used to model a simple form of recursion and  can be seen as a special case of pushdown MDPs. The process is said to terminate as soon as the counter value drops below $0$ and the standard task is to compute maximal or minimal termination probabilities. In one-counter MDPs that terminate almost surely, one furthermore can ask for the extremal expected termination times, i.e. the expected number of steps until termination.
On the positive side, for one-counter MDPs, it is decidable in polynomial time whether there is a scheduler that ensures termination with probability $1$ \cite{brazdil2010}. Furthermore, \emph{selective termination}, which requires termination 
to occur inside a specified set of states  can be decided in exponential time \cite{brazdil2010}.
On the other hand, the computation of the optimal value and the quantitative decision problem whether the optimal value exceeds a threshold $p$ are left open in the literature. 
For selective termination, even the question whether the supremum of termination probabilities over all schedulers is $1$ is open.
Furthermore, also the problem to compute the minimal or maximal expected termination time of a one-counter MDP that terminates almost surely under any scheduler is open.
There are, however, approximation algorithms for the optimal termination probability \cite{brazdil2011}
and for the expected termination time of  almost surely terminating one-counter MDPs  \cite{brazdil2012}.
One-counter MDPs can be seen as a special case of 
recursive MDPs \cite{etessami2015}. For general recursive MDPs, the qualitative decision problem whether the maximal termination probability is $1$ is undecidable while for restricted forms, so-called 1-exit recursive MDPs, the qualitative and also the quantitative  problem is decidable in polynomial space \cite{etessami2015}. One-counter MDPs can be seen as a special case of 1-box recursive MDPs in the terminology of \cite{etessami2015}, a restriction orthogonal to 1-exit recursive MDPs.

The termination probability of one-counter MDPs and the satisfaction probability of the energy objective are closely related to the computation of \emph{quantiles} (see  \cite{UB13,baier2014energy,randour2017}).
Given a probability value $p$, here the task is to compute the best bound $b$ such that the maximal or minimal probability that the accumulated weight exceeds the bound is at most or at least $p$. The decision version whether the maximal or minimal probability that the accumulated weight before reaching a target state exceeds $b$ is at least or at most $p$ is also known as the \emph{cost problem} (see \cite{haase2015,haase2017computing,lics2018}).
The computation of quantiles and the cost problem have been addressed for MDPs with non-negative weights and  are solvable in exponential time in this setting
\cite{UB13,haase2015}. The decision version of the cost problem with non-negative weights is furthermore PSPACE-hard for a single inequality on the accumulated weight and EXPTIME-complete if a Boolean combination of inequality constraints on the accumulated weight is considered \cite{haase2015}.
For the setting with arbitrary weights, 
\cite{lics2018} provides solutions to the qualitative question whether a constraint on the accumulated weight is satisfied with probability $1$ (or $>0$). Further,
it is  known that the quantitative problem is undecidable if multiple objectives with multiple  weight functions have to be satisfied simultaneously \cite{randour2017}.

\paragraph{Non-classical stochastic shortest path problems (SSPPs).}
The classical SSPP described above requires that a goal state is reached almost surely. In many situations, however, there might be no schedulers reaching the target with probability $1$ or schedulers that miss the target with positive probability are of interest, too.
Two non-classical variants that drop this requirement are the conditional SSPP (see \cite{tacas2017,fossacs2019}) and the partial SSPP (see \cite{chen2013,chen2013prism}). In the conditional SSPP, the goal is to optimize the conditional expected accumulated weight before reaching the target under the condition that the target is reached. In other words, the average weight of all paths reaching the target has to be optimized.
In the partial SSPP, paths not reaching the target are not ignored, but assigned weight $0$.
Possible applications for these non-classical SSPPs include  the analysis of probabilistic programs where no guarantees on almost sure termination can be given (see, e.g., \cite{gretz2014,katoen2015,barthe2016,chatterjee2016,olmedo2018}), the analysis of fault-tolerant systems where error scenarios might occur with small, but positive probability, or the trade-off analysis with conjunctions of utility and cost constraints that are achievable with positive probability, but not almost surely (see, e.g., \cite{baier2014}).
In \cite{chen2013} and \cite{tacas2017}, partial and conditional expectations, respectively, have been addressed in the setting of non-negative weights. In both-cases, the optimal value can be computed in exponential time
\cite{chen2013,tacas2017} while the threshold problem is PSPACE-hard \cite{fossacs2019,tacas2017}. In MDPs with positive and negative weights, it is known that the optimal values might be irrational and that optimal schedulers might require infinite memory \cite{fossacs2019}.

Conditional expectations also play an important role for some risk measures. The \emph{conditional value-at-risk (CVaR)} is an established risk measure (see, e.g., \cite{Uryasev00,AcerbiTasche02}) defined as the conditional expected outcome under the condition that the outcome belongs to the $p$ worst outcomes for a given probability value $p$.
In the context of optimization problems on weighted MDPs, the CVaR has been studied  for mean-payoffs and weighted reachability where only one terminal weight is collected per run (see \cite{kretinsky2018}), and for the accumulated weight before reaching a target state in MDPs with non-negative weights (see \cite{ahmadi2021}).
The CVaR for accumulated weights can be optimized in MDPs with non-negative weights in exponential time \cite{icalp2020,Meggendorfer22}.

\subsection{{Contribution}}

We develop a technique to provide reductions from the Positivity problem to threshold problems on MDPs, asking whether the optimal value of a quantity \emph{strictly} exceeds a given rational threshold. The resulting reductions are based on the construction of MDP-gadgets that allow to encode the linear recurrence relation of a linear recurrence sequence and the initial values, respectively. 
The approach turns out to be quite flexible. By adjusting the gadgets encoding initial values, we can provide reductions of the same overall structure for several of the optimization problems we discussed. Through further chains of reductions depicted in Figure \ref{fig:overview_positivity}, we establish Positivity-hardness for the full series of optimization problems under investigation. The main result of this paper  consequently is the following:

 \paragraph{Main result.}
 The Positivity problem is polynomial-time reducible to the  threshold problems  for the optimal values of the following quantities:
 \begin{itemize}
 \item termination probabilities  of one-counter MDPs,
 \item  expected termination times of almost surely terminating one-counter MDPs,
  \item the satisfaction probabilities of  energy objectives in MDPs with weights in $\mathbb{Z}$,
  \item the probability to satisfy an inequality on the accumulated weight (cost problem) in MDPs with weights in $\mathbb{Z}$,
 \item conditional expectations (conditional SSPP) in MDPs with weights in $\mathbb{Z}$,
 \item partial expectations (partial SSPP) in MDPs with weights in $\mathbb{Z}$,
  \item conditional values-at-risk for accumulated weights (before reaching a goal) in MDPs with weights in $\mathbb{Z}$, and
 \item a two-sided version of partial expectations in MDPs with two non-negative weight functions with values in $\mathbb{N}$.
 \end{itemize}
 Furthermore, an algorithm for 
 \begin{itemize}
 \item the computation of quantiles for accumulated weights in MDPs with weights in $\mathbb{Z}$
 \end{itemize}
would imply the decidability of the Positivity-problem.

\begin{figure}[p]
  \begin{center}
    \includegraphics[width=\linewidth]{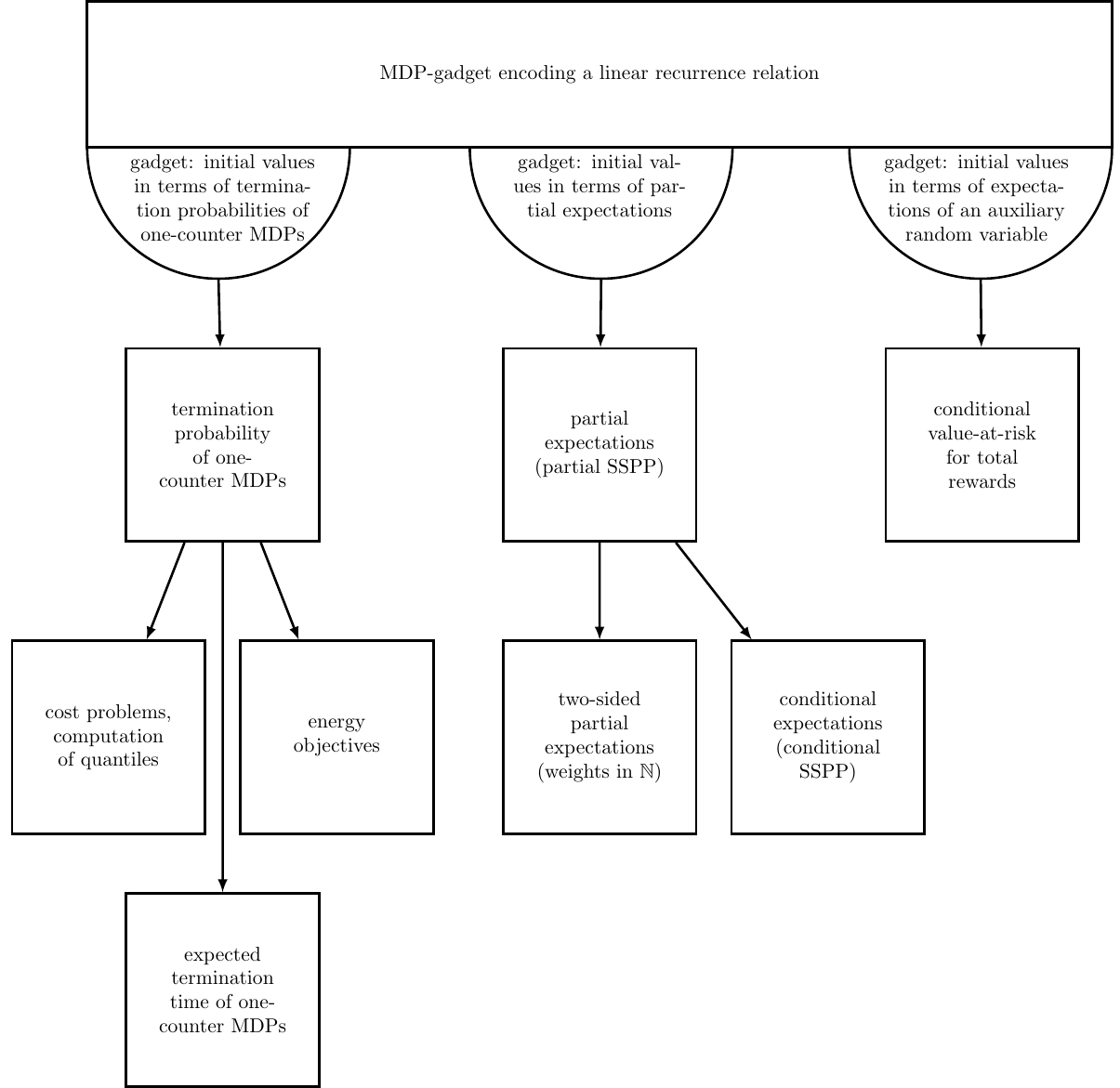}
\end{center}
\caption[Overview of the dependencies between the Positivity-hardness results.]{Overview of the dependencies between the Positivity-hardness results. The squares refer to the threshold problems for the respective quantities.}
\label{fig:overview_positivity}
\end{figure}

\subsection{Related work on Skolem- and Positivity-hardness in  verification}

In
\cite{akshay2015}, the Positivity-hardness of decision problems
for Markov chains has been established. The problems studied in \cite{akshay2015} are (1) to decide whether for given states $s$, $t$
and rational number~$p$, there is a positive integer $n$ such that the 
probability to reach $t$ from $s$ in $n$ steps  is at least~$p$,
and (2)
the model checking problem for a probabilistic variant of monadic logic 
and  a variant of LTL that treats Markov chains as linear transformers of
probability distributions. A  connection between similar problems  and the Skolem problem and Positivity problem has also been conjectured  in \cite{beauquier2006logic,agrawal2015approximate}.
These decision problems are of quite different nature than the
problems studied here. In particular, the problems are shown to be Positivity-hard in Markov chains. In contrast, e.g.,
partial and conditional expectations in Markov chains can be computed in polynomial time \cite{fossacs2019} and
the threshold problem for the termination probability of recursive Markov chains, which subsume one-counter Markov chains, can be solved in polynomial space \cite{DBLP:journals/jacm/EtessamiY09}. So, the Positivity-hardness of the corresponding problems on MDPs is not inherited from Positivity-hardness on Markov chains. Instead, our reductions show how the non-determinism in MDPs allows encoding linear recurrence sequences in terms of optimal values of various quantitative objectives by forcing an optimal scheduler to take certain decisions.
 Consequently,  the reductions are of a different nature than the reductions in \cite{akshay2015}. There, the behavior of a Markov chains in $n$ steps can directly be expressed by $P^n$ where $P$ is the transition probability matrix, which resembles the matrix formulation of the Positivity problem, which asks for a matrix $M$ and an initial  vector  $v$ whether there is an $n$ such that $M^n v $ lies within a half-space $H$.

In this context also the results of 
\cite{ChonOuakWor16} and \cite{MSS20} are remarkable
as they show the decidability, subject to Schanuel's conjecture, of 
reachability problems in
continuous linear dynamical systems and continuous-time MDPs, respectively,
as  instances of the continuous Skolem problem.
In other areas of formal verification, the Skolem problem and the Positivity problem play an important role in the context of the 
termination of linear programs \cite{ben2012termination,tiwari2004termination,braverman2006termination,ouaknine2015linear}.

The Positivity-hardness results leave the possibility open that the problems under consideration are undecidable.
Remarkable undecidability results in this context are  presented in  \cite{kaminski2015hardness}: 
The hardness of deciding almost sure termination and almost sure termination with finite expected termination time for purely probabilistic programs formulated in the probabilistic fragment of probabilistic guarded command language (pGCL) \cite{mciver2005abstraction} is pinpointed to levels of the arithmetical hierarchy (for details on the arithmetical hierarchy, see, e.g., \cite{odifreddi1992classical}).
The results reach up to $\Pi^0_3$-completeness for deciding universal almost sure termination with finite expected termination time ($\Pi_1^0$-complete problems are already undecidable while still co-recursively enumerable).  Undecidability is not surprising as the programs subsume ordinary programs. But the universal halting problem for ordinary programs is only $\Pi_2^0$-complete showing that deciding universal termination  with finite expected termination time of probabilistic programs is strictly harder. Similarly deciding termination from a given initial configuration is $\Sigma_1^0$-complete for ordinary programs (halting problem) while deciding almost sure termination with finite expected termination time for probabilistic programs from a given initial configuration is $\Sigma_2^0$-complete.
 Operational semantics of  pGCL-programs can be given as infinite-state MDPs \cite{gretz2014}. Applied to the purely probabilistic fragment, this leads to infinite-state Markov chains.

\subsection{Outline}

In the following Section \ref{sec:prelim}, we provide necessary definitions and present our notation.
In Section~\ref{sec:outline}, we outline the general structure of the gadget-based reductions from the Positivity-problem and construct 
 an MDP-gadget in which a linear recurrence relation can be encoded in terms of the optimal values for a variety of optimization problems (Section \ref{sec:gadget_recurrence}).
Afterwards, we  construct gadgets encoding also the initial values of a linear recurrence sequence and provide the reductions from the Positivity problems and all subsequent reductions as depicted in  Figure~\ref{fig:overview_positivity} (Section \ref{sec:reduction}).
We conclude with final remarks and an outlook on future work (Section~\ref{sec:conclusion}).


\section{Preliminaries}
\label{sec:prelim}

We assume some familiarity with Markov decision processes and briefly introduce our notation in the sequel. More details can be found in text books such as \cite{puterman1994}.

\paragraph{Markov decision process.}
A \emph{Markov decision process} (MDP) is a tuple $\mathcal{M} = (S,\Act,P,\sinit)$
where 
$S$ is a finite set of states,
$\Act$ is a finite set of actions,
$P \colon S \times \Act \times S \to [0,1] \cap \mathbb{Q}$ is the
transition probability function for which we require that
$\sum_{t\in S}P(s,\act,t) \in \{0,1\}$
for all $(s,\alpha)\in S\times \Act$, and
$\sinit \in S$ is the initial state.
Depending on the context, we enrich  MDPs with 
a weight function $\wgt \colon S \times \Act \to \mathbb{Z}$, 
a finite set of atomic propositions $\AP$ and a labeling function
$L\colon S\to 2^{\AP}$, or
a designated set of goal states $\Goal$.
The \emph{size} of an MDP $\cM$, denoted by $\Size(\cM)$,
is the sum of the number of states
plus the total sum of the  lengths of the encodings of the non-zero
probability values
$P(s,\alpha,s')$ as fractions of co-prime integers in binary and, if present, the  lengths of the encodings of the weight values $\wgt(s,\alpha)$ in binary.

We write $\Act(s)$ for the set of actions that are enabled in a state $s$,
i.e., $\act \in \Act(s)$ if and only if $\sum_{t\in S}P(s,\act,t) =1$. Whenever the process is in a state $s$, a non-deterministic choice between the enabled actions $\Act(s)$ has to be made.
We call a state \emph{absorbing} if the only enabled actions lead to the state itself with probability $1$ and weight $0$. If there are no enabled actions, we call a state \emph{terminal} or a \emph{trap state}.
The paths of $\cM$ are finite or
infinite sequences $s_0 \, \act_0 \, s_1 \, \act_1 \, s_2 \, \act_2 \ldots$
where states and actions alternate such that
$P(s_i,\act_i,s_{i+1}) >0$ for all $i\geq0$. Throughout this section, we  assume that all states are reachable from the initial state  in any MDP, i.e., that there is a finite path from $\sinit$ to each state $s$.
We extend the weight function to finite paths.
For a finite path $\fpath =
    s_0 \, \act_0 \, s_1 \, \act_1 \,  \ldots \act_{k-1} \, s_k$, 
we denote its accumulated weight  by
  \[\wgt(\fpath)=
   \wgt(s_0,\act_0) + \ldots + \wgt(s_{k-1},\act_{k-1}).\]
Similarly, we extend the transition probability function to finite paths and write
\[ P(\fpath) =
   P(s_0,\act_0,s_1) 
   \cdot \ldots \cdot P(s_{k-1},\act_{k-1},s_k).\]

   A \emph{one-counter MDP} is an MDP equipped with a counter. Each state-action pair increases or decreases the counter or leaves the counter unchanged. A one-counter MDP is said to terminate if the counter value drops below zero. We  view one-counter  MDPs as MDPs with a weight-function $\wgt\colon S\times \Act \to \{-1,0,+1\}$. In this formulation a one-counter MDP terminates when a prefix $\pi$ of a path satisfies $\wgt(\pi)<0$.

 A \emph{Markov chain} is an MDP in which the set of actions is a singleton. There are no non-deterministic choices in a Markov chain and hence we drop the set of actions. Consequently, a Markov chain is a tuple $\cM=(S,P,\sinit)$, possibly extended with a weight function, a labeling, or a designated set of goal states. The transition probability function $P$ is a function from $S\times S$ to $ [0,1]\cap \mathbb{Q}$ such that $\sum_{t\in S} P(s,t)\in \{0,1\}$ for all $s\in S$.

\paragraph{Scheduler.}
A \emph{scheduler} for an MDP $\cM=(S,\Act,P,\sinit)$
is a function $\sched$ that assigns to each finite path $\fpath$ not ending in trap state
a probability distribution over $\Act(\last(\fpath))$ where  $\last(\fpath)$ denotes the last state of $\fpath$. This probability distribution indicates which of the enabled actions is chosen with which probability under $\sched$ after the process has followed the finite path $\fpath$.

We allow schedulers to be \emph{randomized} and \emph{history-dependent}. By restricting the possibility to randomize over actions or by restricting the amount of information from the history of a run that can affect the choice of a scheduler, we obtain the following types of schedulers:
A scheduler $\sched$ is called \emph{deterministic} if it does not make use of the possibility to randomize over actions, i.e., if $\sched(\fpath)$ is a Dirac distribution
for each path $\fpath$. 
Such a scheduler $\sched$ can be viewed as a function that assigns an action
to each finite path $\fpath$. 
A scheduler
$\sched$ is called \emph{memoryless} if $\sched(\fpath)=\sched(\fpath^\prime)$ for
all finite paths $\fpath$, $\fpath^\prime$ with $\last(\fpath)=\last(\fpath^\prime)$.
In this case, $\sched$ can be viewed as a function
that assigns to each state $s$ a distribution over $\Act(s)$.
A memoryless deterministic scheduler hence can be seen as a function from states to actions.
In an MDP with a weight function, a scheduler $\sched$ is said to be \emph{weight-based} if
$\sched(\fpath)=\sched(\fpath')$ for all finite paths $\fpath$, $\fpath'$
with $\wgt(\fpath)=\wgt(\fpath')$ and $\last(\fpath)=\last(\fpath')$.
Such a scheduler assigns distributions over actions to state-weight pairs from $S\times \mathbb{Z}$.

\paragraph{Probability measure.} Given an MDP $\cM=(S,\Act,P,\sinit)$ and a scheduler $\sched$, we obtain a probability measure $\Pr^{\sched}_{\cM,s}$ on the set of maximal paths of $\cM$ that start in $s$:
For each finite path $\pi = s_0 \, \act_0 \, s_1 \, \act_1 \,  \ldots \act_{k-1} \, s_k$ with $s_0=s$, we denote the cylinder set of all its maximal extensions by $\Cyl(\pi)$. The probability mass of this cylinder set is then given by
\[
\Pr^{\sched}_{\cM,s}(\Cyl(\pi)) = P(\pi) \cdot \Pi_{i=0}^{k-1} \sched(s_0 \, \dots \, s_i) (\alpha_i).
\]
Recall that $\sched(s_0 \, \dots \, s_i) $ is a probability distribution over actions and that $\sched(s_0 \, \dots \, s_i) (\alpha_i)$ denotes the probability that the scheduler $\sched$ chooses action $\alpha$ after the prefix $s_0 \, \dots \, s_i$ of $\pi$. 
The set of cylinder sets forms the basis of the standard tree topology on the set of maximal paths. By Carath\'eodory's extension theorem, we can  extend the pre-measure $\Pr^{\sched}_{\cM,s}(\Cyl(\pi))$ defined on the cylinder sets to a probability measure on the Borel $\sigma$-algebra of the space of  maximal paths with the standard tree topology.
We sometimes drop the subscript $s$ if $s$ is the initial state $\sinit$ of~$\cM$. In a Markov chain $\cN$, we drop the reference to a scheduler and write $\Pr_{\cN,s}$.

Let $X$ be a random variable on the set of maximal paths of $\cM$ starting in $s$, i.e., $X$ is a function assigning values from $\mathbb{R}\cup\{-\infty,+\infty\}$ to maximal paths. We denote the expected value of $X$ under the probability measure $\Pr^{\sched}_{\cM,s}$ by $\mathbb{E}_{\cM,s}^{\sched} (X)$.

The values we are typically interested in are the worst- or best-case  probabilities of an event or the worst- or best-case expected values of a random variable. Worst or best case refers to the possible ways to resolve the non-deterministic choices. Hence, these values are formally expressed by taking the supremum or infimum over all schedulers. Given an MDP $\cM$, a state $s$, and an event, i.e., a  set of maximal paths, $E$, or a random variable $X$ on the maximal paths of~$\cM$, we define
\begin{align*}
\Pr^{\max}_{\cM,s} (E) &=\sup_\sched \Pr^{\sched}_{\cM,s} (E),  &&&
\Pr^{\min}_{\cM,s} (E) &=\inf_\sched \Pr^{\sched}_{\cM,s} (E), \\
\mathbb{E}^{\max}_{\cM,s} (X) &=\sup_\sched \mathbb{E}^{\sched}_{\cM,s} (X), \text{ and}&&&
\mathbb{E}^{\min}_{\cM,s} (X) &=\inf_\sched \mathbb{E}^{\sched}_{\cM,s} (X),
\end{align*}
 where $\inf$ and $\sup$ range over all schedulers $\sched$ for $\cM$.
 
 We use LTL-like notation  such as ``$\lozenge$(accumulated weight $<0$)'' to denote the event that a prefix of a path has a negative accumulated weight. Note that this event expresses the termination of a one-counter MDP in our view of one-counter MDPs as MDPs with a weight-function taking only values in $\{-1,0,+1\}$.
 
 \paragraph{Classical stochastic shortest path problem.}
 Let $\cM$ be an MDP with
 a weight function $\wgt \colon S \times \Act \to \mathbb{Z}$ and
  a designated set of terminal goal states $\Goal$.
 We define the following random variable $\rawdiaplus \Goal$ on maximal paths $\zeta$ of $\cM$ as follows:
 \[
 \rawdiaplus \Goal (\zeta) = \begin{cases}
 \wgt(\zeta) & \text{ if }\zeta\vDash \Diamond \Goal,\\
 \mathit{undefined} & \text{ otherwise}. 
 \end{cases}
 \]
The expected accumulated weight before reaching $\Goal$ under a scheduler $\sched$ is given by the expected value $\mathbb{E}^{\sched}_{\cM,\sinit}(\rawdiaplus\Goal)$. It is evident that this expected value is only defined if $\Pr^{\sched}_{\cM,\sinit}(\rawdiaplus\Goal)=1$. The \emph{classical stochastic shortest path problem} asks for the optimal value 
\[
\mathbb{E}^{\max}_{\cM,\sinit}(\rawdiaplus\Goal)=\sup_\sched\mathbb{E}^{\sched}_{\cM,\sinit}(\rawdiaplus\Goal)
\]
where the supremum ranges over all schedulers $\sched$ with $\Pr^{\sched}_{\cM,\sinit}(\rawdiaplus\Goal)=1$.
The classical stochastic shortest path problem can be solved in polynomial time \cite{bertsekas1991,deAlfaro1999,lics2018}.

\section{Outline of the Positivity-hardness proofs}
\label{sec:outline}

The Positivity-hardness results in this paper are obtained by  sequences of reductions depicted in Figure \ref{fig:overview_positivity}. The key steps for these sequences  are the three direct reductions from the Positivity-problem to the threshold problems for the maximal termination probability of one-counter MDPs, the maximal partial expectation, and the maximal conditional value-at-risk, respectively.

\subsection{Structure of the MDP constructed for the direct reductions from the Positivity problem}\label{sec:structure_MDP}

\begin{figure}[t] 
\begin{center}
  \includegraphics[width=0.4\linewidth]{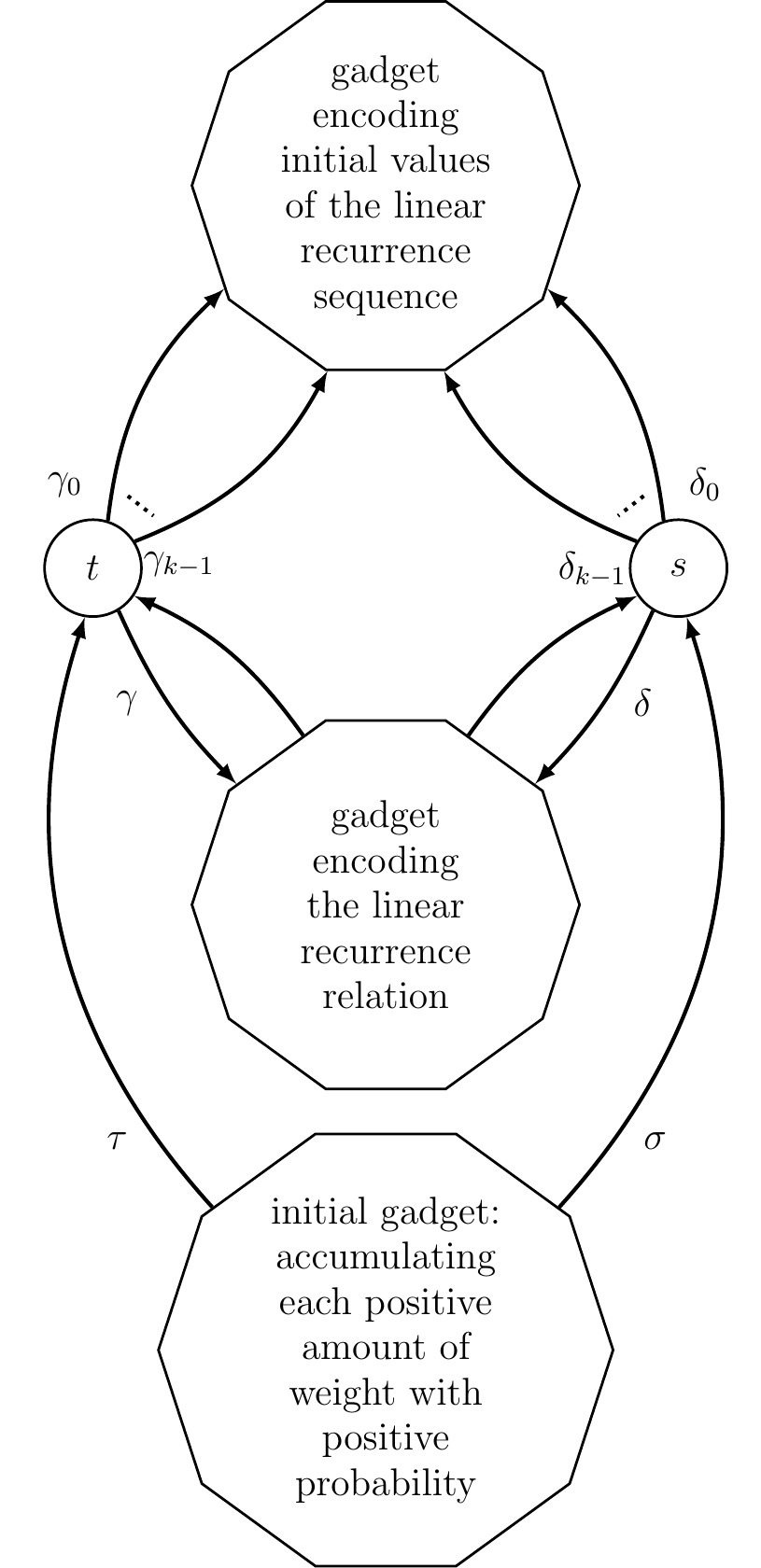}
\end{center}
\caption{Interplay between the MDP-gadgets.}
\label{fig:structure_gadgets}
\end{figure}

The three direct reductions from the Positivity problem (at the top of Figure \ref{fig:overview_positivity}) follow a modular approach: The MDPs constructed for the reductions are obtained by putting together three gadgets as sketched in  Figure~\ref{fig:structure_gadgets}. One gadget encodes a linear recurrence relation exploiting the dependency of optimal values from different starting states after different amounts of weight have been accumulated in the history of a run onto each other. A second gadget encodes the initial values of a linear recurrence sequence. Together, these two gadget allow us to encode  linear recurrence sequences. Finally, an initial gadget is added in which each positive amount of weight $w$ is accumulated with positive probability. Afterwards, the gadget is left and a scheduler has to 
decide how to leave the initial gadget. The optimal decision if weight~$w$ has been accumulated directly corresponds to whether the $w$th member of the  given linear recurrence sequence is non-negative.

More precisely, let a rational linear recurrence sequence be given in terms of the initial values $u_0,\dots, u_{k-1}$ and the coefficients $\alpha_1,\dots,\alpha_k$ of the linear recurrence relation.
The three gadgets are connected via two states $s$ and $t$ as depicted in Figure \ref{fig:structure_gadgets}. 
In state $t$ and $s$, actions $\gamma_0,\dots,\gamma_{k-1}$ and  $\delta_0,\dots,\delta_{k-1}$, respectively, leading to the gadget encoding the initial values and action $\gamma$ and $\delta$, respectively leading to the gadget encoding the linear recurrence relation are enabled. The gadgets will be constructed such that an optimal scheduler has to choose action~$\gamma_i$ or $\delta_i$ if the accumulated weight in state $t$ or $s$ is a value $i$ with $0\leq i < k$ and that it has to choose action $\gamma$ if the accumulated weight is at least $k$.
After $\gamma$ or $\delta$ is chosen, the accumulated weight is decreased within the gadget encoding the linear recurrence relation before the MDP moves back to the states $s$ and $t$ with positive probability.

Let us now denote the maximal possible value for the quantity of interest when starting in one of the states $t$ and $s$ with accumulated weight $w$ by $V(t,w)$ and $V(s,w)$. The linear recurrence relation will be found in the difference $d(w)\eqdef V(t,w)-V(s,w)$. If the accumulated weight is $0\leq i <k$, the gadget encoding the initial values will make sure that $d(i)=V(t,i)-V(s,i)=u_i$. For each of the three direct reductions from the Positivity problem, we construct one such gadget tailored to the three respective quantities.

For accumulated weights $w$ of at least $k$, the gadget encoding the recurrence will exploit the dependency of the optimal values $V(t,w)$ and $V(s,w)$ on the optimal values when starting with lower accumulated weight. This gadget can be used in all reductions and will be described in the next subsection.

Put together, these two gadgets ensure that $d(w)=u_w$ for all $w\geq 0$. To complete the reductions, we add an initial gadget $\cI$ depicted in Figure \ref{fig:gadget_initial} in which each positive amount of weight $w$ is accumulated with positive probability. Afterwards, a scheduler has to choose whether to move to state $t$ or state $s$ via the actions $\tau$ and $\sigma$, respectively. It is optimal to move to $t$ if and only if $u_w\geq 0$. Let now $\sched$ be the scheduler  always choosing $\tau$ in the initial gadget and afterwards behaving optimally when choosing from $\gamma_0,\dots, \gamma_{k-1}$ and $\gamma$ or $\delta_0,\dots, \delta_{k-1}$ and $\delta$ as described above. This scheduler is optimal if and only if the given linear recurrence sequence is non-negative.
The final step to complete the reduction is to compute the value $V^{\sched}(\sinit,0)$ that is achieved by $\sched$ starting from the initial state. In all three reductions, we can compute this rational value via converging matrix series. The optimal value $V^{\max}(\sinit,0)$ that can be achieved from the initial state now satisfies 
\[
V^{\max}(\sinit,0) \leq V^{\sched}(\sinit,0)
\]
if and only if the given linear recurrence sequence is non-negative.

\begin{figure}[t]
  \begin{center}
    \includegraphics[width=0.35\textwidth]{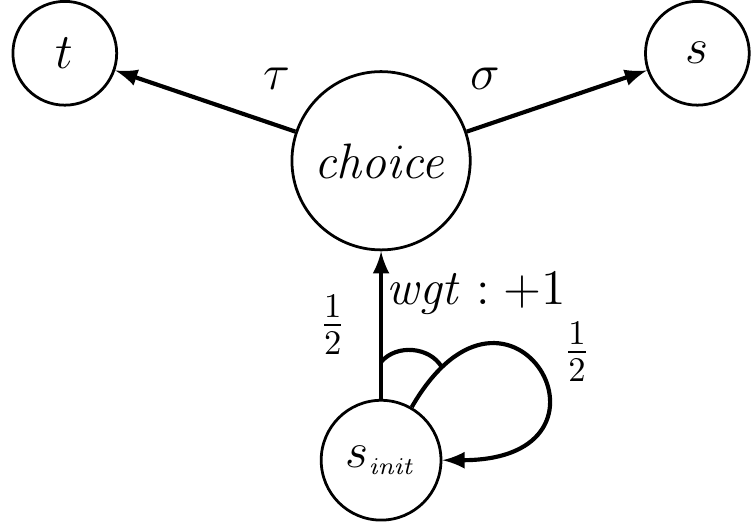}
   \end{center}

 \caption{The initial gadget $  \cI$.}\label{fig:gadget_initial}
\end{figure}

\subsection{MDP-gadget for linear recurrence relations}\label{sec:gadget_recurrence}

In this section, we demonstrate how to construct the gadget ensuring that the difference of optimal values $V(t,w)-V(s,w)$ follows a given linear recurrence \emph{relation} with respect to different weight levels $w$. 
In the next section, the initial values of a linear recurrence sequence will be encoded in MDP-gadgets tailored to the different  quantities we address.

\paragraph{Optimality equations.}
Let us start by the following observations on the well-known relation between the optimal values at different states in the classical stochastic shortest path  problem, i.e., the maximal expected accumulated weights before reaching a goal state (defined in Section~\ref{sec:prelim}).
Let $\cM=(S,\Act,P,\sinit,\wgt, \Goal)$ be an MDP.
The solution to the classical stochastic shortest path problem satisfies the so-called \emph{Bellman equation}. If $V(s)$ denotes the value when starting in state $s$, i.e., the maximal expected accumulated weight before reaching $\Goal$ from state $s$, then
\[
V(s)=\max_{\alpha\in \Act(s)} \wgt(s,\alpha)+\sum_{t\in S} P(s,\alpha,t)\cdot V(t)
\]
for $s\not\in \Goal$ and $V(s)=0$ for $s\in \Goal$. This simple form of optimality equation implies   the existence of optimal memoryless deterministic schedulers for the classical stochastic shortest path problem
(in case optimal schedulers exist, i.e., if the optimal values are finite).

For problems like the optimization of the termination probability of one-counter MDPs, it is, however, clearly not sufficient to consider the optimal values only in dependency of  the starting state. The  counter-value, i.e. the weight that has been accumulated so far, is essential.
 So, let $V(s,w)$ denote the maximal termination probability of a one-counter MDP when starting in state $s$ with counter value $w$. Letting $V(s,w)=1$ if $w<0$, we obtain the following equation for all states $s$ and all values $w\geq 0$:
\[
V(s,w)=\max_{\alpha\in \Act(s)} \sum_{t\in S} P(s,\alpha,t)\cdot V(t,w+\wgt(s,\alpha)).\tag{$\ast$}
\]
Already in this equation, the value $V(s,w)$ hence possibly depends on  values of the form $V(s,w-i)$ for some $i$. 
We want to exploit this interrelation to encode linear recurrence relations
\[
u_{n+k} = \alpha_1 u_{n+k-1} + \dots + \alpha_k u_n 
\]
 into the optimal values $V(s,w)$.
Of course, the values $P(s,\alpha,t)$ are all non-negative. So, we cannot directly encode a linear recurrence into the optimal values for different weight levels at one state as the coefficients might be negative.
To overcome this problem, we instead consider the difference $V(t,w)-V(s,w)$ for two different states $s$ and $t$.

\paragraph{Scaling down coefficients of a linear recurrence sequence.}
Given the coefficients $\alpha_1,\dots, \alpha_k$, and initial values $u_0=\beta_0$, \dots, $u_{k-1}=\beta_{k-1}$ of a linear recurrence sequence, we have to assume that these are all sufficiently small for the following constructions.
So, let us  clarify why we can  assume this without loss of generality and let us provide precise bounds.
Let  $(u_n)_{n\geq 0}$ be a linear recurrence sequence specified by the initial values $u_0=\beta_0$, \dots, $u_{k-1}=\beta_{k-1}$ and the linear recurrence relation
$ u_{n+k} = \alpha_1 u_{n+k-1} + \dots + \alpha_k u_n $
for all $n\geq 0$. 
For any $\mu> 0$ and $\lambda> 0$, the sequence $(v_n)_{n\geq0}$ defined by $v_n=\mu \cdot \lambda^n \cdot u_n$ for all $n$
is non-negative if and only if $(u_n)_{n\geq 0}$ is non-negative.  Furthermore, it satisfies $v_i = \mu \cdot \lambda^i \cdot \beta_i$ for $i<k$ and
\[
v_{n+k} = \lambda\cdot \alpha_1\cdot  v_{n+k-1} + \lambda^2\cdot \alpha_2 \cdot v_{n+k-2}  +\dots+ \lambda^k \cdot \alpha_k \cdot v_{n}.
\]
By choosing $\lambda$ and $\mu$ appropriately, we can scale down the initial values and coefficients of the recurrence relation for any given input.

To obtain precise bounds that will be used throughout the following sections,  let $\alpha \eqdef \sum_{i=1}^k |\alpha_i|$.
and let  $\lambda \eqdef \min\left( \frac{1}{\alpha \cdot (5k+5)},  \frac{1}{  (5k+5)} \right)$. So, if $\alpha>1$, then $\lambda =  \frac{1}{\alpha \cdot (5k+5)}$ and else $\lambda =  \frac{1}{  (5k+5)}$.
 The value $\lambda$ can be computed in polynomial time.
As the numerical value of $k$ is linear in the size of the given original input, the coefficients $\alpha_1^\prime\eqdef \lambda\cdot \alpha_1, \alpha_2^\prime \eqdef  \lambda^2\cdot \alpha_2, \dots, \alpha_k^\prime \eqdef \lambda^k \cdot \alpha_k $ of the linear recurrence of the sequence 
$(v_n)_{n\geq0}$ can be computed in polynomial time as well. The choice of~$\lambda$ ensures that $\sum_{i=1}^k|\alpha_i^\prime| < \frac{1}{5k+5}$.

Let now $\alpha^\prime \eqdef \sum_{i=1}^k |\alpha_i^\prime|$ and $\beta \eqdef \max_{0\leq j < k} |\beta_j|$.  
We can choose $\mu \eqdef \frac{\min(\alpha^\prime,1)}{4k^{2k+2}\cdot \beta}$. Again, since the value $k$ is linear in the size of the original input, $\mu$ can be computed in polynomial time.
The initial values of the new sequence $(v_n)_{n\geq0}$ are now  $\beta_i^\prime \eqdef v_i = \mu \cdot \lambda^i \cdot \beta_i$ for $i<k$, computable in polynomial time.
The choice of $\mu$ guarantees that $\max_{0\leq j <k} \beta_j^\prime< \min(\frac{1}{4k^{2k+2}},\frac{\alpha^\prime}{4})$.

Since this transformation can be carried out in polynomial time, we can w.l.o.g. from now on work under the following assumption:
\begin{assumption}
\label{ass:1}
Given the coefficients $\alpha_1,\dots, \alpha_k$, and initial values $u_0=\beta_0$, \dots, $u_{k-1}=\beta_{k-1}$ of a linear recurrence sequence, we assume that 
\[
\alpha\eqdef \sum_{i=1}^k|\alpha_i| < \frac{1}{5k+5}\text{ and that }\max_{0\leq j <k} \beta_j< \min(\frac{1}{4k^{2k+2}},\frac{\alpha}{4}).
\]
\end{assumption}

\paragraph{MDP-gadget for linear recurrence relations.}
Given the coefficients $\alpha_1,\dots, \alpha_k$ of a linear recurrence relation satisfying Assumption \ref{ass:1},
we construct  the  MDP-gadget  depicted in Figure \ref{fig:gadget}.
The gadget  contains states  $s$,  $t$, and $\trap$ as well as $s_1,\dots, s_k$ and $t_1,\dots, t_k$. In state $t$, an action~$\gamma$ is enabled which has weight $0$ and leads to state $t_i$ with probability $\alpha_i$ if $\alpha_i>0$ and to state $s_i$ with probability $|\alpha_i|$ if $\alpha_i<0$ for all $i$. The remaining probability leads to $\trap$. From each state~$t_i$, there is an action leading to $t$ with weight $-i$. The action $\delta$ enabled in $s$ as well as the actions leading from states $s_i$ to $s$ are constructed  analogously. If $\alpha_i$ is negative, action $\delta$ reaches state $t_i$ with probability $|\alpha_i|$. Otherwise it reaches $s_i$ with probability $\alpha_i$.
The state $\trap$ is absorbing.
As the gadget depends on the inputs $\bar{\alpha}=(\alpha_1,\dots,\alpha_k)$, we call it $\cG_{\bar{\alpha}}$.

\begin{figure}[t]
  \includegraphics[width=0.7\linewidth]{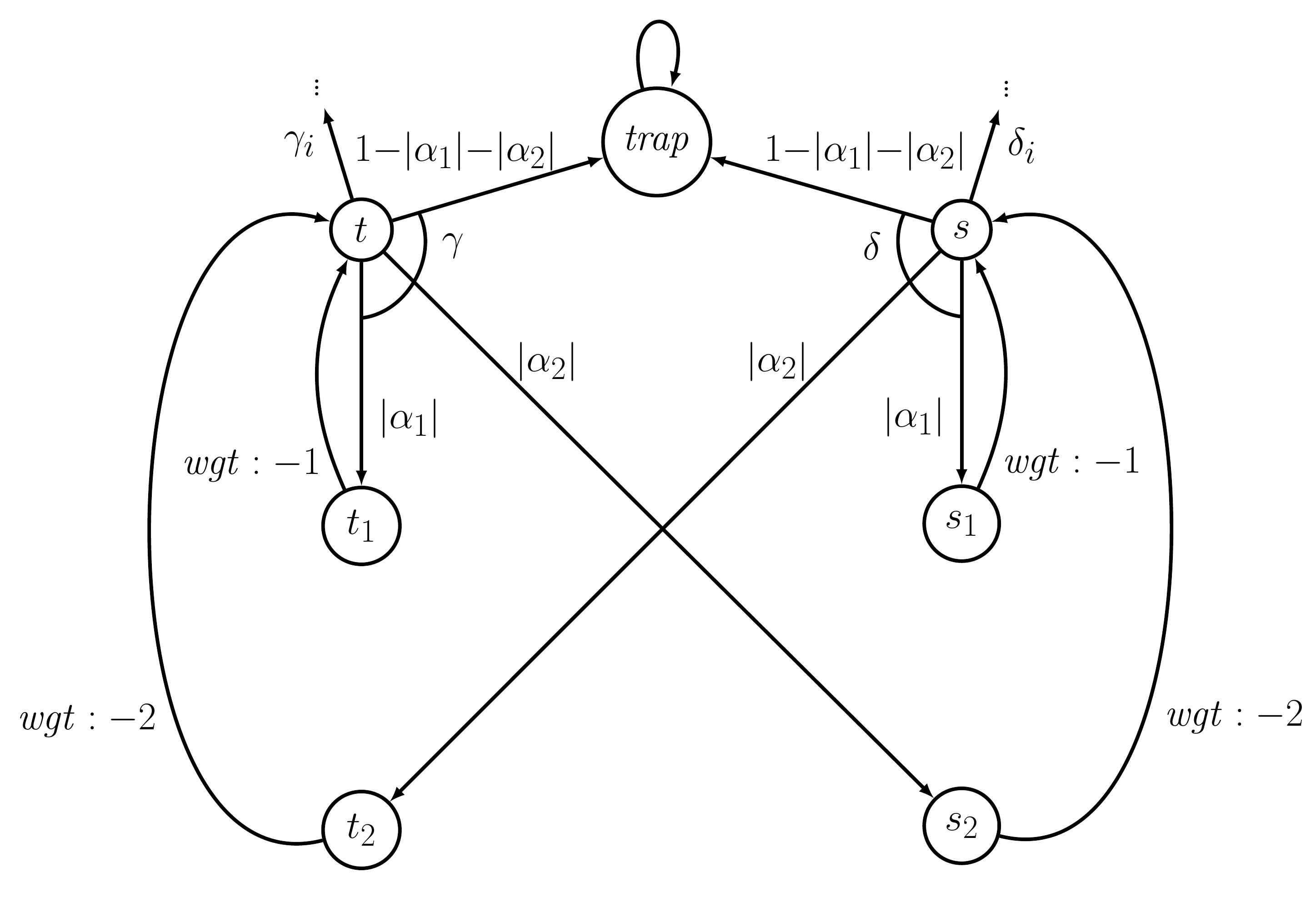}
     \centering

        \caption[The gadget to encode  linear recurrence relations.]{The gadget $\cG_{\bar{\alpha}}$ to encode  linear recurrence relations. The example here is depicted for a linear recurrence of depth $2$ with $\alpha_1\geq 0$ and $\alpha_2< 0$. The outgoing actions $\gamma_i$ and $\delta_i$ lead to the gadget encoding initial values as depicted in Figure \ref{fig:structure_gadgets}. }
        \label{fig:gadget}
\end{figure}

This gadget $\cG_{\bar{\alpha}}$ will be integrated into  MDPs without further outgoing edges from states $s_1,\dots,s_k,t_1,\dots,t_k$.
For any optimization problem for which the optimal values $V$  depend on the state and the weight accumulated so far and satisfy equation ($\ast$), we can encode a linear recurrence in an MDP containing this gadget (and possibly further actions for state $t$ and $s$):
If we know that an optimal scheduler chooses action $\gamma$ in state $t$ and action $\delta$ in state $s$ if the accumulated weight is $w$, then
\allowdisplaybreaks[0]
\begin{align*}
V(t,w)-V(s,w) 
=&\left(1-\sum_{i=1}^k |\alpha_i| \right)\left( V(\goal,w)-V(\goal,w) \right)\,\,\, + \\
&\,\,\, \sum_{1\leq i \leq k, \, \alpha_i\geq 0} \alpha_i V(t,w{-}i) - \alpha_i V(s,w{-}i)\,\,\,\,\, +  \\
& \,\,\, \sum_{1\leq i \leq k,\, \alpha_i< 0} (-\alpha_i )V(s,w{-}i)+ (-\alpha_i) V(t,w{-}i) \\
=& \sum_{i=1}^k \alpha_i \cdot (V(t,w{-}i) -V(s,w{-}i)). 
\end{align*}

Note that this linear recurrence relation also holds for the optimal values in the classical stochastic shortest path problem for example. So, the gadget alone is not yet enough for a hardness proof. 
The missing ingredient is the encoding of the initial values of a linear recurrence sequence. 
In order to include the encoding of the initial values in our approach, it is necessary that optimal schedulers cannot be chosen to be memoryless.
The optimal decisions have to depend on the weight that has been accumulated in the history of a run. If this is the case, we aim to encode the initial values by  adding further outgoing actions to the states $t$ and $s$. By fine-tuning the weights and probabilities of these actions, we can achieve that for small weights $w$ some of the new actions are optimal while for large weights the actions $\gamma$ and $\delta$ of the gadget are optimal. If we manage to design the other actions such that  the differences $V(t,w+i)-V(s,w+i)$ are equal to given starting  values $\beta_i$  for a sequence of weights $w,w+1,\dots, w+k-1$ while  actions~$\gamma$ and $\delta$ are optimal for weights of at least $w+k$, we can encode arbitrary linear recurrence sequences.
This is the goal of the subsequent section.


\section{Reductions from the Positivity problem}
\label{sec:reduction}

To encode initial values of a linear recurrence sequence, we construct further MDP gadgets. 
For  the termination  probability and expected termination time of one-counter MDPs and for partial expectations, we can construct these gadgets directly.
For the conditional value-at-risk, we use an intermediate auxiliary random variable. 
Putting together these gadgets with the gadget $\cG_{\bar{\alpha}}$ from the previous section, we obtain the basis for the Positivity-hardness results of the respective threshold problems.
The Positivity-hardness of the remaining problems is  obtained as a consequence of these results via further reductions. An overview of the chains of reductions used is presented in Figure \ref{fig:overview_positivity}.

\subsection{One-counter MDPs, energy objectives, cost problems, and quantiles}\label{sec:positivity_oc}

The first problem we will show to be Positivity-hard is the threshold problem for the optimal termination probability of one-counter MDPs. From this result, Positivity-hardness results for energy objectives, cost problems, and the computation of quantiles follow easily. Afterwards, we adjust the reduction to show Positivity-hardness of the threshold problem for the optimal expected termination time of almost-surely terminating one-counter MDPs.

\paragraph{Termination probability of one-counter MDPs.}
We formulated the termination of a one-counter MDP in terms of weighted MDPs $\cM$. Recall that a one-counter MDP terminates if the counter value drops below zero. If we consider the weight that has been accumulated instead of the counter value, the quantities we are interested are $\Pr^{\mathrm{opt}}_{\cM}(\lozenge \text{ accumulated weight}<0)$ for $\mathrm{opt}=\max$ and $\mathrm{opt}=\min$.
 The main result we prove in  this section is the following:

\begin{theorem}\label{thm:positivity_oc-mdp}
The Positivity problem is reducible in polynomial time to the following problems:
Given an MDP $\cM$ and a rational $\vartheta\in(0,1)$,
 \begin{enumerate}
 \item
 decide whether $\Pr^{\max}_{\cM,\sinit}(  \lozenge (\text{accumulated weight $< 0$}))> \vartheta$.
\item decide whether
$\Pr^{\min}_{\cM,\sinit}(  \lozenge (\text{accumulated weight $< 0$}))< \vartheta$.
\end{enumerate}
\end{theorem}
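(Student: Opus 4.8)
The plan is to instantiate the modular construction of Section~\ref{sec:structure_MDP} for the quantity $\Pr^{\mathrm{opt}}_{\cM}(\lozenge(\text{accumulated weight}<0))$ and to recover the given linear recurrence sequence from the differences of optimal termination probabilities at the connector states $s$ and $t$. First I would apply Assumption~\ref{ass:1}: by the sign-preserving scaling $v_n=\mu\lambda^n u_n$ I may assume $\sum_{i}|\alpha_i|<\frac1{5k+5}$ and $\max_j\beta_j<\min(\frac1{4k^{2k+2}},\frac{\alpha}{4})$, all computable in polynomial time. I would then take the recurrence gadget $\cG_{\bar\alpha}$ of Figure~\ref{fig:gadget} and note that the maximal termination probability $V(s,w)$ obeys the optimality equation $(\ast)$ with the boundary condition $V(s,w)=1$ for $w<0$ and $V(\trap,w)=0$. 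As computed at the end of Section~\ref{sec:gadget_recurrence}, whenever an optimal scheduler chooses $\gamma$ in $t$ and $\delta$ in $s$ at weight level $w$, the difference $d(w)\eqdef V(t,w)-V(s,w)$ satisfies exactly $d(w)=\sum_{i=1}^k\alpha_i\,d(w-i)$, because the common $\trap$-contribution cancels.

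The technical heart is the gadget encoding the initial values, attached to $t$ and $s$ by the extra actions $\gamma_0,\dots,\gamma_{k-1}$ and $\delta_0,\dots,\delta_{k-1}$. I would fix their weights and probabilities so that, simultaneously, (a) at each level $i$ with $0\le i<k$ the actions $\gamma_i$ and $\delta_i$ are the uniquely optimal choices in $t$ and $s$ and enforce $d(i)=u_i$, while (b) at every level $w\ge k$ the recurrence actions $\gamma$ and $\delta$ are optimal, so the recurrence above applies. An induction on $w$ then gives $d(w)=u_w$ for all $w\ge0$, with the base cases supplied by (a) and the step by (b). Crucially, this forces the optimal scheduler to switch from $\gamma_i$ to $\gamma$ precisely at weight $k$, so it is genuinely weight-dependent — a behaviour no memoryless scheduler can realise, which is exactly what makes the encoding of initial values possible. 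The weights $-i$ on the return edges (with $|i|\le k$, hence linear in the input) can finally be replaced by chains of $-1$-steps through auxiliary states, turning $\cM$ into a genuine one-counter MDP without changing any termination probability.

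To finish $\cM$ I would prepend the initial gadget $\cI$ of Figure~\ref{fig:gadget_initial}, in which every weight level $w$ is reached from $\sinit$ with positive probability and a scheduler must then commit to $t$ (action $\tau$) or $s$ (action $\sigma$). Since $V(t,w)-V(s,w)=u_w$ and scaling preserves signs, moving to $t$ is optimal iff $u_w\ge0$; hence the scheduler $\sched$ that always plays $\tau$ and thereafter plays the intended optimal gadget actions is globally optimal iff $u_w\ge0$ for all $w$. Setting $\vartheta\eqdef V^{\sched}(\sinit,0)$ — a rational number computable in polynomial time as a convergent matrix/geometric series, convergence being guaranteed by the smallness from Assumption~\ref{ass:1} — we obtain
\[
\Pr^{\max}_{\cM,\sinit}(\lozenge(\text{accumulated weight}<0))>\vartheta
\]
iff some scheduler strictly beats $\sched$, iff $u_w<0$ for some $w$, i.e. iff the sequence fails to be non-negative. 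As we identify the Positivity problem with its complement, this is the required reduction for item~1, and item~2 follows from the dual construction in which $\min$ replaces $\max$, the default action in $\cI$ is $\sigma$, and $\vartheta$ is the value of the scheduler always moving to $s$.

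The step I expect to be the main obstacle is proving (a) and (b) together, i.e.\ that the intended actions are indeed optimal at every weight level: the comparison of $\gamma_i$ against $\gamma$ at level $w$ depends on the values $V(\cdot,w-i)$ at lower levels, so optimality and the estimate $d(w)=u_w$ must be established by a single coupled induction, and it is precisely the quantitative bounds of Assumption~\ref{ass:1} that keep all the relevant inequalities on the right side.
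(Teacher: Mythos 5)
Your outline reproduces the paper's architecture faithfully (initial gadget $\cI$, recurrence gadget $\cG_{\bar{\alpha}}$, an initial-values gadget glued at $s$, $t$, $\trap$, and a threshold $\vartheta$ equal to the value of the always-$\tau$ scheduler, computed via a convergent Neumann series), but it has a genuine gap exactly where you flag the ``main obstacle'': you never construct the initial-values gadget, and that construction is the real content of the theorem. Moreover, your anticipated difficulty --- that optimality of $\gamma_i$ versus $\gamma$ at level $w$ depends on values at lower levels and so needs a coupled induction --- is a sign that the key design idea is missing. The paper's gadget $\cO_{\bar{\beta}}$ (Figure \ref{fig:O_beta}) makes all comparisons \emph{level-independent}: action $\gamma_j$ from $t$ moves with weight $0$ to a state $x_j$ with probability $\frac{k-j}{k+1}+\beta_j$ and to $\trap$ otherwise, and from $x_j$ the process reaches $\trap$ with probability $1$ and weight $-(j+1)$. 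Hence playing $\gamma_j$ at weight level $\ell$ terminates with the explicit probability $\frac{k-j}{k+1}+\beta_j$ if $\ell\le j$ and with probability $0$ if $\ell>j$, independently of the optimal values at any other level; and playing $\gamma$ at a level $0\le\ell\le k-1$ terminates with probability at most $\sum_i|\alpha_i|<\frac{1}{k+1}$, again uniformly, since termination can only occur on the paths that avoid $\trap$. Optimality of $\gamma_\ell$ at level $\ell<k$ and of $\gamma$ at levels $\ge k$ (where every $\gamma_j$ gives probability $0$) is then a direct comparison of constants, and $d(\ell)=\beta_\ell$ falls out from $p(t,\ell)=\frac{k-\ell}{k+1}+\beta_\ell$ and $p(s,\ell)=\frac{k-\ell}{k+1}$. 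Without a concrete gadget of this kind, your properties (a) and (b) are postulated rather than proved, and there is nothing for your proposed induction to work with.

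A second gap is item 2: it does not follow ``by duality'' from the same gadgets. If you keep $\cO_{\bar{\beta}}$ and merely minimize (with default $\sigma$ in $\cI$), the minimizing scheduler destroys the encoding: at state $t$ with weight $\ell\ge 1$ it plays some $\gamma_j$ with $j<\ell$ and terminates with probability $0$, and at weight $0$ both $t$ and $s$ yield the same minimal value $\sum_i|\alpha_i|$ via $\gamma$ and $\delta$, so $d(w)$ is identically $0$ rather than $u_w$. The paper instead rebuilds the gadget for minimization (Figure \ref{fig:MDP_minimal_oc}): the old trap becomes $\trap^\prime$ with an escape of weight $-k$ to a new absorbing $\trap$, the states $x_j,y_j$ lead to $\trap$ with weight $-j$, and the probability of reaching $x_j$ under $\gamma_j$ becomes $\frac{j+1}{k+1}+\beta_j$, so that now the \emph{minimizing} scheduler's optimal choices encode the sequence. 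This redesign is a genuine piece of work that your proposal replaces with a one-line claim. The remaining ingredients of your proposal (scaling via Assumption \ref{ass:1}, cancellation of the $\trap$-contribution in $d(w)$, the Neumann-series computation of $\vartheta$, and the conversion to $\pm 1$ counter steps) do match the paper.
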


Note that if weights are encoded in unary, we can transform a weighted MDP to a one-counter MDP that can only increase or decrease the counter value by $1$ in each step in polynomial time.
The MDPs that are constructed from a linear recurrence sequence of depth $k$ in the proof of Theorem \ref{thm:positivity_oc-mdp} will contain only weights with an absolute value of at most $k$.
So, they can be transformed to one-counter MDPs in time linear in the size of the original input and we conclude that the following two threshold problems for the optimal termination probability of one-counter MDPs are Positivity-hard:

\begin{corollary}
\label{cor:one-counter}
The Positivity problem is reducible in polynomial time to the following problems:
Given a one-counter MDP $\cM$ viewed as an MDP with weights in $\{-1,0,+1\}$ and a rational $\vartheta\in(0,1)$,
 \begin{enumerate}
 \item
 decide whether 
$\Pr^{\max}_{\cM,\sinit}(  \lozenge (\text{accumulated weight $< 0$}))> \vartheta$.
\item decide whether
$\Pr^{\min}_{\cM,\sinit}(  \lozenge (\text{accumulated weight $< 0$}))< \vartheta$.
\end{enumerate}
\end{corollary}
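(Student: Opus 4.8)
\section*{Proof proposal}

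The plan is to instantiate the modular construction sketched in Section~\ref{sec:structure_MDP} with the quantity $V(s,w)\eqdef\Pr^{\max}_{\cM,s}(\lozenge(\text{accumulated weight}<0))$, the maximal termination probability from state $s$ when the weight accumulated so far is $w$. Setting $V(s,w)=1$ for $w<0$, this quantity satisfies the optimality equation ($\ast$), which is exactly the property required to plug in the recurrence gadget $\cG_{\bar\alpha}$. First I would invoke Assumption~\ref{ass:1} to pass, in polynomial time, to a scaled sequence with $\alpha\eqdef\sum_i|\alpha_i|$ and all $|\beta_j|$ as small as the stated bounds demand; non-negativity is preserved, and these bounds are what later forces the intended scheduler behaviour. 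Building $\cG_{\bar\alpha}$ as in Figure~\ref{fig:gadget} and letting $\trap$ be absorbing (hence $V(\trap,w)=0$), the computation preceding this statement shows that, whenever an optimal scheduler selects $\gamma$ in $t$ and $\delta$ in $s$ at weight $w$, the difference $d(w)\eqdef V(t,w)-V(s,w)$ obeys the linear recurrence $d(w)=\sum_{i=1}^k\alpha_i\,d(w-i)$.

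The core of the argument is the construction of the gadget encoding the initial values, attached to $t$ and $s$ through the actions $\gamma_0,\dots,\gamma_{k-1}$ and $\delta_0,\dots,\delta_{k-1}$. The aim is to tune their weights and probabilities so that (i) at weight level $i<k$ the pair $\gamma_i,\delta_i$ is the uniquely optimal choice and realizes $d(i)=u_i$, while (ii) at every weight level $w\geq k$ the recurrence actions $\gamma,\delta$ are optimal. The idea I would pursue is to let each $\gamma_i$ terminate (push the counter below $0$) with a probability that is a threshold function of the current weight: by decrementing the counter by a fixed amount and branching to $\trap$ or to an immediately terminating state, one obtains a value that becomes constant in $w$ once $w\geq i$ but is larger for smaller $w$, so that the contribution of $\gamma_i$ to $V(t,w)$ matches the desired level exactly at $w=i$ and is dominated by $\gamma$ for larger $w$. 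The bounds of Assumption~\ref{ass:1} guarantee that the $u_i$ and the whole orbit of $d$ stay in a range where these comparisons are strict, so that the optimal scheduler is pinned down to $\gamma_i/\delta_i$ at level $i<k$ and to $\gamma/\delta$ at level $w\geq k$; together with the recurrence this yields $d(w)=u_w$ for all $w\geq0$.

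It then remains to prepend the initial gadget $\cI$ of Figure~\ref{fig:gadget_initial}, in which each weight level $w\geq0$ is reached with positive, geometrically decaying probability before the scheduler commits to $t$ via $\tau$ or to $s$ via $\sigma$. At accumulated weight $w$ the choice $\tau$ yields $V(t,w)$ and $\sigma$ yields $V(s,w)$, so $\tau$ is optimal exactly when $d(w)=u_w\geq0$; consequently the scheduler $\sched$ that always plays $\tau$ and is optimal thereafter is globally optimal if and only if $(u_n)_{n\geq0}$ is non-negative. Since each application of $\gamma$ loses a constant probability mass $1-\alpha$ to $\trap$, the values $V(t,w)$ decay geometrically in $w$ and the series $V^{\sched}(\sinit,0)=\sum_{w\geq0}\Pr(\text{weight}=w)\cdot V(t,w)$ converges; I would compute it exactly as a rational number in polynomial time by solving the order-$k$ vector recurrence for $(V(t,w),V(s,w))$ through the associated convergent matrix series, and set $\vartheta\eqdef V^{\sched}(\sinit,0)$. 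Then $V^{\max}(\sinit,0)>\vartheta$ holds precisely when some $u_n<0$, establishing part~(1); part~(2) follows from the symmetric construction in which the roles of $\tau$ and $\sigma$, equivalently of $\max$ and $\min$, are exchanged.

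I expect the main obstacle to be step~(ii): proving rigorously that the optimal scheduler is forced to behave as intended across all weight levels simultaneously, ruling out randomized and weight-dependent deviations and confirming that the intended actions strictly dominate. This is a global fixed-point argument on the values $V(s,w)$, and it is precisely here that the quantitative smallness guaranteed by Assumption~\ref{ass:1} is indispensable, since it keeps the entire orbit $(d(w))_{w}$ and the individual values within the narrow window in which the encoded comparisons remain correct. A secondary technical point is justifying that $V^{\sched}(\sinit,0)$ is an exactly polynomial-time-computable rational, which relies on the geometric decay of $V(t,w)$ and the rationality of all matrix operations involved.
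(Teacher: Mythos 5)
Your part (1) is essentially the paper's own proof in outline: the same three-gadget architecture, the same observation that maximal termination probabilities satisfy the optimality equation ($\ast$), an initial-values gadget with exactly the threshold shape of the paper's $\cO_{\bar{\beta}}$ (action $\gamma_j$ reaches, with probability $\frac{k-j}{k+1}+\beta_j$, a state that decrements the counter by $j+1$ and then traps, so the action pays off precisely when the current weight is at most $j$), and the computation of $\vartheta$ as a rational via a convergent Neumann series. The verifications you defer are the content of Lemmas \ref{lem:recurrence_OC} and \ref{lem:compute_threshold_OC}, and your plan for them is viable.

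Two genuine gaps remain, however. First, the corollary is specifically about MDPs with weights in $\{-1,0,+1\}$, whereas your construction (like the paper's) uses weights of magnitude up to $k$. You still need the closing observation of the paper: since all weights in the constructed MDP have absolute value at most $k$, every weighted transition can be replaced by a chain of at most $k$ unit-weight steps, with only polynomial blow-up. Without this step you have argued for Theorem \ref{thm:positivity_oc-mdp}, but not for Corollary \ref{cor:one-counter}.

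Second, and more seriously, part (2) does not follow by ``exchanging the roles of $\tau$ and $\sigma$, equivalently of $\max$ and $\min$.'' In the MDP built for maximization, the minimal termination probabilities carry no information about the sequence at all: at any weight $\ell\geq 1$, a minimizing scheduler in state $t$ (or $s$) can pick $\gamma_0$ (or $\delta_0$), whose only weight change is a decrement by $1$ followed by absorption in $\trap$; since $\ell-1\geq 0$, this yields termination probability $0$. Hence the minimal values from $t$ and from $s$ coincide at every weight level, their difference vanishes identically, and no recurrence is encoded; swapping $\tau$ and $\sigma$ changes nothing about this. The paper therefore redesigns the initial-values gadget for the minimization case: $\trap$ is renamed $\trap'$ and routed by a weight $-k$ transition to a new absorbing trap, the states $x_j$, $y_j$ lead to the new trap with weight $-j$, and the branch probabilities become $\frac{j+1}{k+1}+\beta_j$, so that at the relevant weight levels termination becomes the default outcome and the actions $\gamma_j$, $\delta_j$ provide weight-dependent escapes; the reference scheduler is then the one that always chooses $\sigma$ in $\mathit{choice}$. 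Some such inversion of the roles of termination and non-termination is unavoidable; a literal symmetry argument fails.
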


Among the direct reductions from the Positivity problem  we present, the construction of the gadget encoding the initial values of a linear recurrence sequence is arguably the simplest for these optimal termination probabilities.
In the formulation with weighted MDPs, the termination of a one-counter MDP  is moreover the complement of the energy objective ``$\square \text{ accumulated weight}\geq 0$''.
We will first prove Positivity-hardness for the threshold problem for  maximal termination probabilities and outline the necessary adjustments to show Positivity-hardness also for the threshold problem for minimal termination probabilities afterwards.

We split the proof of Theorem \ref{thm:positivity_oc-mdp} into four parts. First, we provide the construction of an MDP from a linear recurrence sequence. Then, we show that the linear recurrence sequence is correctly encoded in this MDP in terms of the maximal termination probabilities. To complete the proof of item 1, we then show how to compute the threshold $\vartheta$ for the threshold problem and how this establishes the correctness of the reduction. Finally, we show how to adapt the construction to prove hardness of the threshold problem for minimal termination probabilities.

\paragraph{Proof of Theorem \ref{thm:positivity_oc-mdp}(1): construction of the MDP.}
Given a linear recurrence sequence in terms of the rational coefficients $\alpha_1,\dots,\alpha_k$ of the linear recurrence relation as well as the rational initial values 
$\beta_0,\dots, \beta_{k-1}$ for $k\geq 2$, our first goal is to construct an MDP $\cM$ and a rational $\vartheta\in (0,1)$ such that 
\[\Pr^{\max}_{\cM,\sinit}( \lozenge (\text{ accumulated weight $< 0$}))> \vartheta \qquad\text{ if and only if }\qquad u_n< 0\text{ for some }n\geq 0.\]
By Assumption \ref{ass:1}, we can assume that the input values are sufficiently small. More precisely, we assume that
$\sum_{i=1}^k |\alpha_i| <1/(k+1)$ and that $0\leq \beta_j< 1/(k+1)$ for all $0\leq j \leq k-1$, which is ensured by the bounds in Assumption \ref{ass:1}, and because the Positivity problem becomes trivial if one of the values $\beta_j$ with $0\leq j \leq k-1$ is  negative.

We denote the supremum of possible termination probabilities in terms of the current state $s$ and counter value  (accumulated weight) $w$ by $p(s,w)$.
More precisely, in an MDP $\cM$ for $w\geq 0$, we define 
\[
p(s,w)\eqdef\Pr^{\max}_{\cM,s}(\lozenge \text{ accumulated weight }<-w).
\]
 The values $p(s,w)$ in an MDP with state space $S$ now satisfy the  optimality equation ($\ast$) from Section \ref{sec:gadget_recurrence} (where  $p(s,w)$ takes  the role of $V(s,w)$ in ($\ast$)), which we restate here for convenience.  We have $p(s,w)=1$ for all states $s$ and all $w<0$ and 
\[
p(s,w)=\max_{\alpha\in \Act(s)} \sum_{t\in S} P(s,\alpha,t)\cdot p(t,w+\wgt(s,\alpha)) \qquad \text{ for all $s\in S$ and $w\geq 0$.}
\]
So, to capture the linear recurrence relation, we will be able to make use of the gadget $\cG_{\bar{\alpha}}$ from Section \ref{sec:gadget_recurrence}.
The missing ingredient is  a gadget to encode the initial values of a linear recurrence sequence.

\begin{figure}[t]
  \begin{center}
    \includegraphics[width=0.6\linewidth]{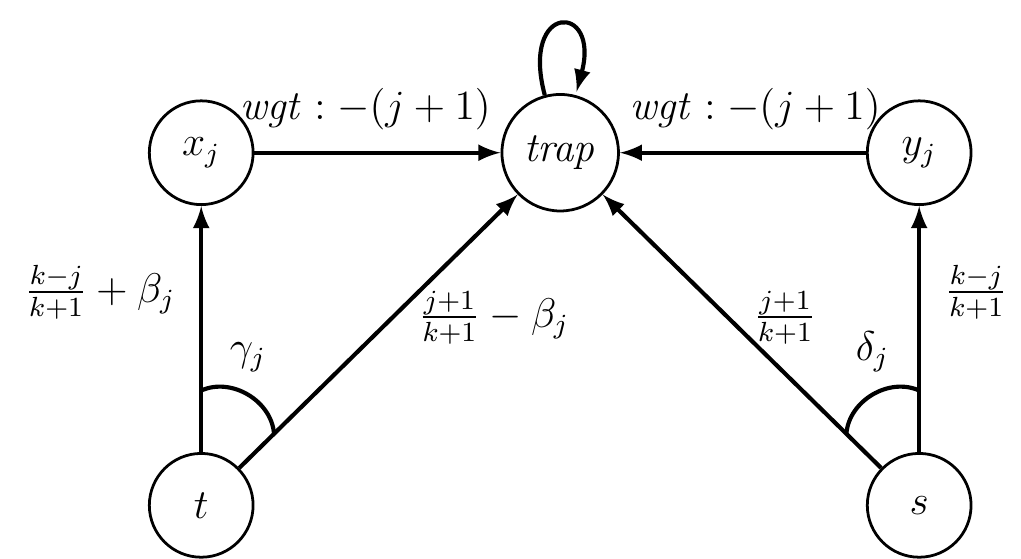}

   \end{center}

\caption{Gadget $\cO_{\bar{\beta}}$ encoding initial values of a linear recurrence sequence in terms of maximal termination probabilities of one-counter MDPs.}\label{fig:O_beta}
\end{figure}

The new gadget $\cO_{\bar{\beta}}$ encoding the initial values $\bar{\beta}$ is depicted in Figure \ref{fig:O_beta} and works as follows:
For $0\leq j \leq k-1$, the action $\gamma_j$ enabled in $t$ leads to state $x_j$ with probability $\frac{k-j}{k+1}+\beta_j$. By assumption on $\beta_j$, this probability is less than  $ \frac{k-j+1}{k+1}$. The remaining probability leads to $\trap$.
In state $s$, the action $\delta_j$ leads to $y_j$ with probability $\frac{k-j}{k+1}$ and to $\trap$ with the remaining probability.
For $0\leq j \leq k-1$, one reaches $\trap$ from $x_j$ and $y_j$ with probability $1$ and a counter change of~$-(j+1)$.

\begin{figure}[t]
  \begin{center}
    \includegraphics[width=\linewidth]{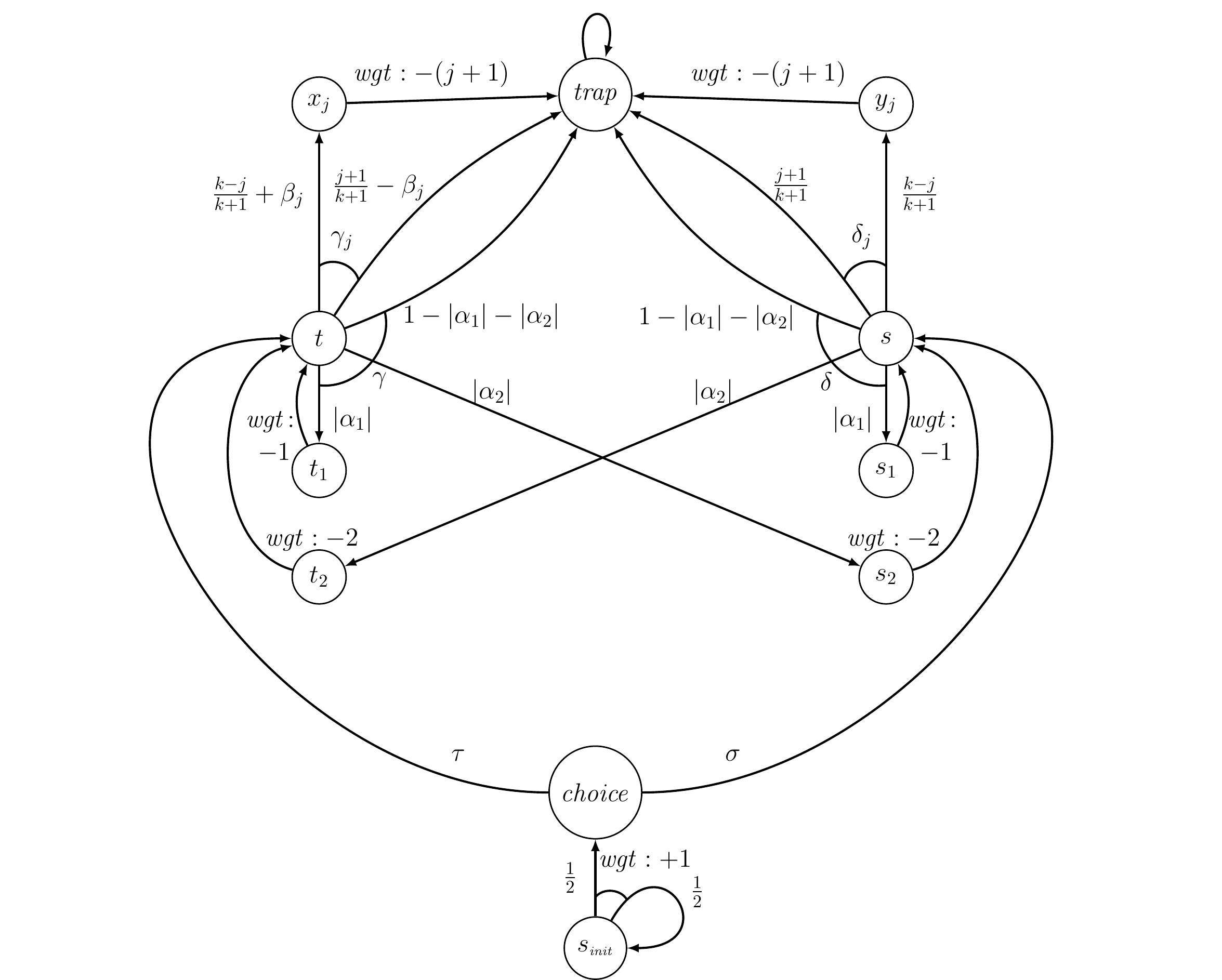}

   \end{center}

\caption[Full MDP for the reduction to the threshold problem for termination probabilities of one-counter MDPs. ]{Full MDP for the reduction to the threshold problem for termination probabilities of one-counter MDPs. The MDP contains the upper part for all $0\leq j \leq k-1$. The middle part is depicted for $k=2$, $\alpha_1\geq 0$, and $\alpha_2<0$.}\label{fig:MDP}
\end{figure}

Now, we glue together the initial gadget $\cI$ defined in Section \ref{sec:structure_MDP}, the gadget encoding the linear recurrence relation $\cG_{\bar{\alpha}}$ from Section \ref{sec:gadget_recurrence}, and the new gadget $\cO_{\bar{\beta}}$ at states $t$, $s$, and $\trap$.
The resulting MDP $\cM$ is depicted in Figure \ref{fig:MDP} -- for better readability, it is depicted for $k=2$ and assuming that $\alpha_1\geq0$ while $\alpha_2<0$.

\paragraph{Proof of Theorem \ref{thm:positivity_oc-mdp}(1): correctness of the encoding of the linear recurrence sequence.}
In this paragraph, we show that the initial linear recurrence sequence is indeed encoded in the maximal termination probabilities when starting from states $t$ and $s$ with different counter values, i.e., values of accumulated weight as described in Section \ref{sec:structure_MDP}.
More precisely, let $(u_n)_{n\geq 0}$ be the linear recurrence sequence given by the initial values $\beta_0,\dots \beta_{k-1}$ and the coefficients $\alpha_1,\dots, \alpha_k$ of the linear recurrence relation. We prove the following:

\begin{lemma}\label{lem:recurrence_OC}
For each $w\geq 0$, we have 
\[
p(t,w)-p(s,w)=u_w
\]
where $p(r,w)$ denotes the maximal termination probability from state $r\in \{s,t\}$ when starting with  accumulated weight $w$ as defined above. 
\end{lemma}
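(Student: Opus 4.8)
The plan is to prove the identity $p(t,w)-p(s,w)=u_w$ for all $w\geq 0$ by a single induction on $w$ that simultaneously establishes two things: first, the correct base values for $0\leq w < k$, and second, the recurrence relation carrying the identity forward for $w\geq k$. Throughout, I would rely on the optimality equation $(\ast)$ restated in this section, which $p(s,w)$ satisfies with $p(s,w)=1$ for $w<0$, together with the computation already carried out in Section~\ref{sec:gadget_recurrence} showing that \emph{if} an optimal scheduler chooses $\gamma$ in $t$ and $\delta$ in $s$ at weight level $w$, then $V(t,w)-V(s,w)=\sum_{i=1}^k\alpha_i(V(t,w{-}i)-V(s,w{-}i))$. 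The whole argument therefore reduces to two separate concerns: pinning down the base cases via the gadget $\cO_{\bar\beta}$, and justifying that $\gamma,\delta$ are indeed the optimal choices once $w\geq k$.

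First I would handle the base cases $0\leq j \leq k-1$. For these weight levels the relevant actions are $\gamma_j$ and $\delta_j$ in the gadget $\cO_{\bar\beta}$. Taking action $\gamma_j$ in $t$ with accumulated weight $j$ leads to $x_j$ with probability $\frac{k-j}{k+1}+\beta_j$, and from $x_j$ the counter changes by $-(j+1)$, so the accumulated weight becomes $j-(j+1)=-1<0$ and termination is certain; the residual probability goes to $\trap$ where termination never occurs. Hence choosing $\gamma_j$ yields termination probability exactly $\frac{k-j}{k+1}+\beta_j$, and similarly $\delta_j$ yields $\frac{k-j}{k+1}$. The key subclaim here is that among all enabled actions in $t$ (respectively $s$) at weight level $j$, the action $\gamma_j$ (respectively $\delta_j$) is the \emph{optimal} one, so that $p(t,j)=\frac{k-j}{k+1}+\beta_j$ and $p(s,j)=\frac{k-j}{k+1}$, giving the difference $\beta_j=u_j$. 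I would verify this by checking that the alternative actions $\gamma_{j'}$ with $j'\neq j$, as well as the recurrence action $\gamma$, all give strictly smaller termination probability at weight level $j$; this is where the smallness assumptions on the $\beta_j$ and $\alpha_i$ (Assumption~\ref{ass:1}, in the strengthened form $\sum|\alpha_i|<1/(k+1)$ and $0\leq\beta_j<1/(k+1)$) are exploited, since they force the geometric spacing $\frac{k-j}{k+1}$ to dominate.

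For the inductive step $w\geq k$ I would argue that the recurrence actions $\gamma$ in $t$ and $\delta$ in $s$ are optimal, so that the computation from Section~\ref{sec:gadget_recurrence} applies verbatim and gives $p(t,w)-p(s,w)=\sum_{i=1}^k\alpha_i(p(t,w{-}i)-p(s,w{-}i))$, which by the induction hypothesis equals $\sum_{i=1}^k\alpha_i u_{w-i}=u_w$. I expect \textbf{the optimality of $\gamma$ and $\delta$ for large $w$ to be the main obstacle}, and the most delicate part of the whole proof. Unlike the base cases, where one compares finitely many explicit numbers, here one must show that the other outgoing actions $\gamma_j,\delta_j$ become suboptimal once the accumulated weight exceeds $k$: taking such an action at large weight $w$ changes the counter by $-(j+1)\geq -k$, which can no longer drive the weight below zero, so the termination contribution of those actions decays, whereas the recurrence gadget keeps the value alive. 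Controlling this comparison uniformly in $w$ requires bounding the magnitudes of the values $p(s,w)$ themselves and showing they stay in a regime where the $\gamma_j,\delta_j$ alternatives cannot overtake $\gamma,\delta$; this is precisely what the careful choice of the scaling constants $\lambda,\mu$ and the resulting bounds $\max_j\beta_j<\min(\tfrac{1}{4k^{2k+2}},\tfrac{\alpha}{4})$ in Assumption~\ref{ass:1} are engineered to guarantee. I would therefore expect the author's proof to interleave an auxiliary estimate on the size of $p(s,w)$ (e.g.\ that $|p(t,w)-p(s,w)|$ stays small enough that the recurrence never makes the difference blow up) with the induction, rather than proving the pure identity in isolation.
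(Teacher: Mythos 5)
Your overall plan coincides with the paper's proof: determine the optimal action in $t$ and $s$ at each weight level, read off the base values $p(t,j)-p(s,j)=\beta_j$ for $0\leq j\leq k-1$ from the gadget $\cO_{\bar\beta}$, and invoke the computation of Section~\ref{sec:gadget_recurrence} once $\gamma$ and $\delta$ are known to be optimal for $w\geq k$. Your base-case analysis is exactly the paper's, including the use of $\sum_i|\alpha_i|<1/(k+1)$ and $\beta_j<1/(k+1)$ to rule out $\gamma$ and the actions $\gamma_{j'}$ with $j'\neq j$ at weight level $j$.

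The place where you go astray is precisely the step you single out as ``the main obstacle'': the optimality of $\gamma,\delta$ for $w\geq k$ is not delicate at all, and the machinery you propose for it (uniform-in-$w$ bounds on $p(s,w)$, an interleaved auxiliary estimate on $|p(t,w)-p(s,w)|$, the finer constants of Assumption~\ref{ass:1}) is neither needed nor used in the paper. The reason is structural: by construction of $\cO_{\bar\beta}$, action $\gamma_j$ leads either to $\trap$ directly or to $x_j$, and from $x_j$ the process moves with probability $1$, and a weight change of $-(j+1)\geq -k$, into the \emph{absorbing} state $\trap$, after which the weight never changes again. Hence for $w\geq k$ the termination probability under $\gamma_j$ is not merely ``decaying'' as you write --- it is exactly $0$, for every $j$. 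Since termination probabilities are non-negative, $\gamma$ trivially attains the maximum in the optimality equation $(\ast)$, and the recurrence computation from Section~\ref{sec:gadget_recurrence} applies; the same holds for $\delta_j$ and $\delta$. The smallness assumptions are exploited only at the small weight levels $0\leq \ell\leq k-1$, and the stronger bound $\max_j\beta_j<\min\bigl(\tfrac{1}{4k^{2k+2}},\tfrac{\alpha}{4}\bigr)$ plays no role in this particular reduction (it is engineered for the partial-expectation gadget of Section~\ref{sec:positivity_sspp}). So your outline is correct, but as written the inductive step is left open, and the route you sketch for closing it points away from the one-line argument that actually does the job.
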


\begin{proof}
For the correct interplay of the gadgets $\cG_{\bar\alpha}$ and $\cO_{\bar\beta}$, the optimal decisions in states $t$ and $s$ for different values of accumulated weights, i.e., different counter-values, are crucial. 
In order to terminate, the accumulated weight has to drop below $0$  before reaching $\trap$. As soon as the trap state is reached with non-negative accumulated weight, the process cannot terminate anymore. The optimal decision in order to maximize the termination probability in state $t$ is now easy to determine. 
Let $\ell$ be the current weight. If $0\leq \ell \leq k-1$, choosing action $\gamma$ leads to termination with probability less than $1/(k+1)$ as $\trap$ is reached immediately with probability at least $k/(k+1)$ due to our assumption that $\sum_{i\leq k}|\alpha_i|<1/(k+1)$. Choosing action~$\gamma_j$ makes it impossible to terminate if $\ell>j$. If $\ell \leq j$, then choosing $\gamma_j$  lets the process terminate if~$x_j$ is reached. This happens with probability $\frac{k-j}{k+1}+\beta_j$. As $\beta_j<1/(k+1)$ for all $j$, the maximal termination probability is reached when choosing $\gamma_{\ell}$.
If $\ell\geq k$, then $\gamma_j$ leads to termination with probability $0$ for all $j$. Hence, action $\gamma$ is optimal. Analogously, we see that the optimal choice in state $s$ with weight $\ell$ is $\delta_{\ell}$ if $\ell\leq k-1$ and $\delta$ otherwise.

The linear recurrence sequence $(u_n)_{n\geq 0}$ now can be found in terms of the  difference 
\[
d(w)\eqdef p(t,w)-p(s,w).
\] For counter value $w\leq k-1$, we have seen that $\gamma_{w}$ and $\delta_{w}$, respectively, are the optimal actions. Hence, $d(w)=u_{w}$ in this case as we have just seen that the optimal termination probability when starting with weight $w\leq k-1$ is $\frac{k-w}{k+1}+\beta_w $ in $t$ and $\frac{k-w}{k+1}$ in $s$.
Furthermore, for $w>k-1$, actions $\gamma$ and $\delta$ are optimal. So by the construction of gadget $\cG_{\bar\alpha}$,
\begin{align*}
p(t,w)-p(s,w) 
=&\left(1-\sum_{i=1}^k |\alpha_i| \right)\left( p(\trap,w)-p(\trap,w) \right)\,\,\, + \\
&\,\,\, \sum_{1\leq i \leq k, \, \alpha_i\geq 0} \alpha_i p(t,w{-}i) - \alpha_i p(s,w{-}i)\,\,\,\,\, +  \\
& \,\,\, \sum_{1\leq i \leq k,\, \alpha_i< 0} (-\alpha_i )V(s,w{-}i)+ (-\alpha_i) p(t,w{-}i) \\
=& \sum_{i=1}^k \alpha_i \cdot (p(t,w{-}i) -p(s,w{-}i)).
\end{align*}
So, the  sequence of differences satisfies the  linear recurrence relation given by $\alpha_1,\dots, \alpha_k$. 
Therefore, $d(w)=u_{w}$ for all $w\geq 0$.
\end{proof}

\paragraph{Proof of Theorem \ref{thm:positivity_oc-mdp}(1): computation of the threshold $\vartheta$.}

The state $\mathit{choice}$ is reached with any positive accumulated weight with positive probability.
For the optimal choices in the state $\mathit{choice}$ with accumulated weight $w$, we observe that choosing $\tau$ is optimal if and only if $d(w)\geq 0$. By Lemma \ref{lem:recurrence_OC}, this holds if and only if $u_{w}\geq 0$. 

Consider now the scheduler $\sched$ which always chooses $\tau$ in state $\mathit{choice}$ and afterwards behaves according to the optimal choices as described in the proof of Lemma \ref{lem:recurrence_OC}.
This scheduler~$\sched$ is optimal if and only if the sequence $(u_n)_{n\geq 0}$ is non-negative.
To complete the reduction, we will compute the value 
\[\vartheta\eqdef\Pr_{\cM,\sinit}^{\sched}(\lozenge  (\text{accumulated weight }< 0)).\]
 We will see that $\vartheta$ is a rational computable in polynomial time and we know that 
\[
\Pr_{\cM,\sinit}^{\max}(\lozenge (\text{accumulated weight }< 0)) \leq \vartheta
\]
if and only if the scheduler $\sched$ is optimal which is the case if and only if $(u_n)_{n\geq 0}$ is non-negative.

\begin{lemma}\label{lem:compute_threshold_OC}
In the constructed MDP $\cM$,
the value $\vartheta= \Pr_{\cM,\sinit}^{\sched}(\lozenge  (\text{accumulated weight }< 0))$ can be computed in polynomial time.
\end{lemma}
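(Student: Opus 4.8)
The plan is to compute $\vartheta = \Pr^{\sched}_{\cM,\sinit}(\lozenge(\text{accumulated weight}<0))$ by decomposing the computation along the structure of the constructed MDP, exploiting that under the fixed scheduler $\sched$ the MDP becomes a Markov chain (albeit with an infinite state space once we track the accumulated weight). The scheduler $\sched$ always chooses $\tau$ in $\mathit{choice}$ and afterwards plays the optimal weight-dependent actions described in Lemma~\ref{lem:recurrence_OC}: namely $\gamma_w$ (resp.\ $\delta_w$) when the accumulated weight $w$ satisfies $0\leq w \leq k-1$, and $\gamma$ (resp.\ $\delta$) when $w\geq k$. First I would write $\vartheta$ as a sum, over the possible weights $w$ accumulated in the initial gadget $\cI$ before it is exited, of the probability of accumulating exactly weight $w$ in $\cI$ times the conditional termination probability from state $t$ with accumulated weight $w$ under $\sched$. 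Since $\sched$ plays $\tau$, this conditional termination probability is exactly $p(t,w)$, the value analysed above.

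The key observation making this tractable is that the termination probabilities $p(t,w)$ are governed by the linear recurrence $d(w)=p(t,w)-p(s,w)=u_w$ together with the behaviour of $\sched$ inside $\cG_{\bar\alpha}$. I would set up the computation as a matrix recurrence: collect the relevant values $p(t,w),p(s,w)$ for consecutive weight levels into a state vector, and express the transition from weight level $w$ to lower levels via a fixed rational matrix $M$ whose entries are built from the $\alpha_i$ and the probabilities in $\cG_{\bar\alpha}$. The boundary values for $0\leq w \leq k-1$ are the explicitly computed rationals $\tfrac{k-w}{k+1}+\beta_w$ in $t$ and $\tfrac{k-w}{k+1}$ in $s$. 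Because Assumption~\ref{ass:1} forces $\sum_{i=1}^k|\alpha_i|<\tfrac{1}{5k+5}<1$, the spectral radius of $M$ is strictly less than $1$, so the relevant matrix series converges geometrically, and the infinite sum defining $\vartheta$ (over all weights $w\geq 0$ exited from $\cI$, each discounted by the corresponding entry probability of $\cI$) can be evaluated in closed form as a finite rational expression involving $(I-M)^{-1}$ or an analogous resolvent.

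Concretely, I would argue that the contributions of large weights decay geometrically and that the whole sum telescopes into the solution of a finite linear system over $\mathbb{Q}$ of dimension polynomial in $k$. Solving a rational linear system of polynomial size, and inverting a matrix whose entries have polynomially bounded bit-length, are both polynomial-time operations (e.g.\ by Gaussian elimination with the usual bounds on intermediate coefficient growth). This yields that $\vartheta$ is rational and computable in polynomial time, which is exactly the claim.

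The main obstacle I anticipate is \emph{not} the convergence itself --- that is handed to us cleanly by the bound on $\sum|\alpha_i|$ from Assumption~\ref{ass:1} --- but rather setting up the bookkeeping for the infinite sum so that the dependence of $p(t,w)$ on lower weight levels through $\cG_{\bar\alpha}$ is captured by a \emph{finite-dimensional} linear system rather than an honestly infinite one. One must be careful that under $\sched$ the weight strictly decreases each time $\cG_{\bar\alpha}$ is traversed (the transitions from $t_i,s_i$ carry weight $-i$ with $i\geq 1$), guaranteeing that the process eventually reaches the boundary regime $w<k$ or terminates, so that the generating-function/matrix-power formulation is well founded and its sum is finite and rational. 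Once this is phrased as a resolvent of a sub-stochastic matrix with spectral radius below one, the polynomial-time bound follows from standard effective bounds on Gaussian elimination over $\mathbb{Q}$.
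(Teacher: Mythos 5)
Your proposal is correct and follows essentially the same route as the paper: decompose $\vartheta$ as the sum $\sum_{w\geq 1}(1/2)^w\,p(t,w)$ over exit weights of the initial gadget, capture the values $p(t,w),p(s,w)$ by a $2k$-dimensional rational matrix recurrence with the explicit boundary values $\tfrac{k-w}{k+1}+\beta_w$ and $\tfrac{k-w}{k+1}$, use $\sum_i|\alpha_i|<1$ to make the Neumann series converge to a resolvent, and finish with polynomial-time rational linear algebra. The only cosmetic difference is that the paper derives its transition matrix $A$ via reachability probabilities in an auxiliary finite Markov chain $\cC$ that groups weight levels into blocks of size $k$, whereas you set up the (companion-style) matrix directly from the gadget $\cG_{\bar\alpha}$; both yield the same kind of resolvent formula $c\cdot(I-\tfrac{1}{2^k}A)^{-1}v_0 - p(t,0)$.
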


\begin{proof}
In order to compute the value $\vartheta$, 
we first provide a recursive expression of the maximal termination probabilities $p(t,w)$ and $p(s,w)$. By the definition of $\sched$, these are precisely the termination probabilities under $\sched$ when starting from $t$ or $s$ with some positive accumulated weight $w\in \mathbb{N}$ because $\sched$ behaves optimally as soon as state $t$ or $s$ has been reached.

For this recursive expression, we consider the following Markov chain $\cC$ for $n\in \mathbb{N}$ that is also depicted in Figure \ref{fig:Markov_chain} -- for better readability, it is  depicted for the case $k=2$ there:
The Markov chain $\cC$ has $5k$ states named $t_{-k+1}$, \dots, $t_{+k}$, $s_{-k+1}$, \dots, $s_{+k}$, and $\goal_{+1}$, \dots, $\goal_{+k}$. 
States $t_{-k+1}$, \dots, $t_{0}$, $s_{-k+1}$, \dots, $s_{0}$, and $\goal_{+1}$, \dots, $\goal_{+k}$ are terminal.
 For $0<i,j\leq k$, there are transitions from $t_{+i}$ to $t_{+i-j}$ with probability $\alpha_j$ if $\alpha_j>0$, to $s_{+i-j}$ 
 with probability $|\alpha_j|$ if $\alpha_j<0$, and to $\goal_{+i}$ with probability $1-|\alpha_1|-\ldots-|\alpha_k|$. Transitions from $s_{+i}$ are defined analogously.

\begin{figure}[t]
  \begin{center}
    \includegraphics[width=0.45\linewidth]{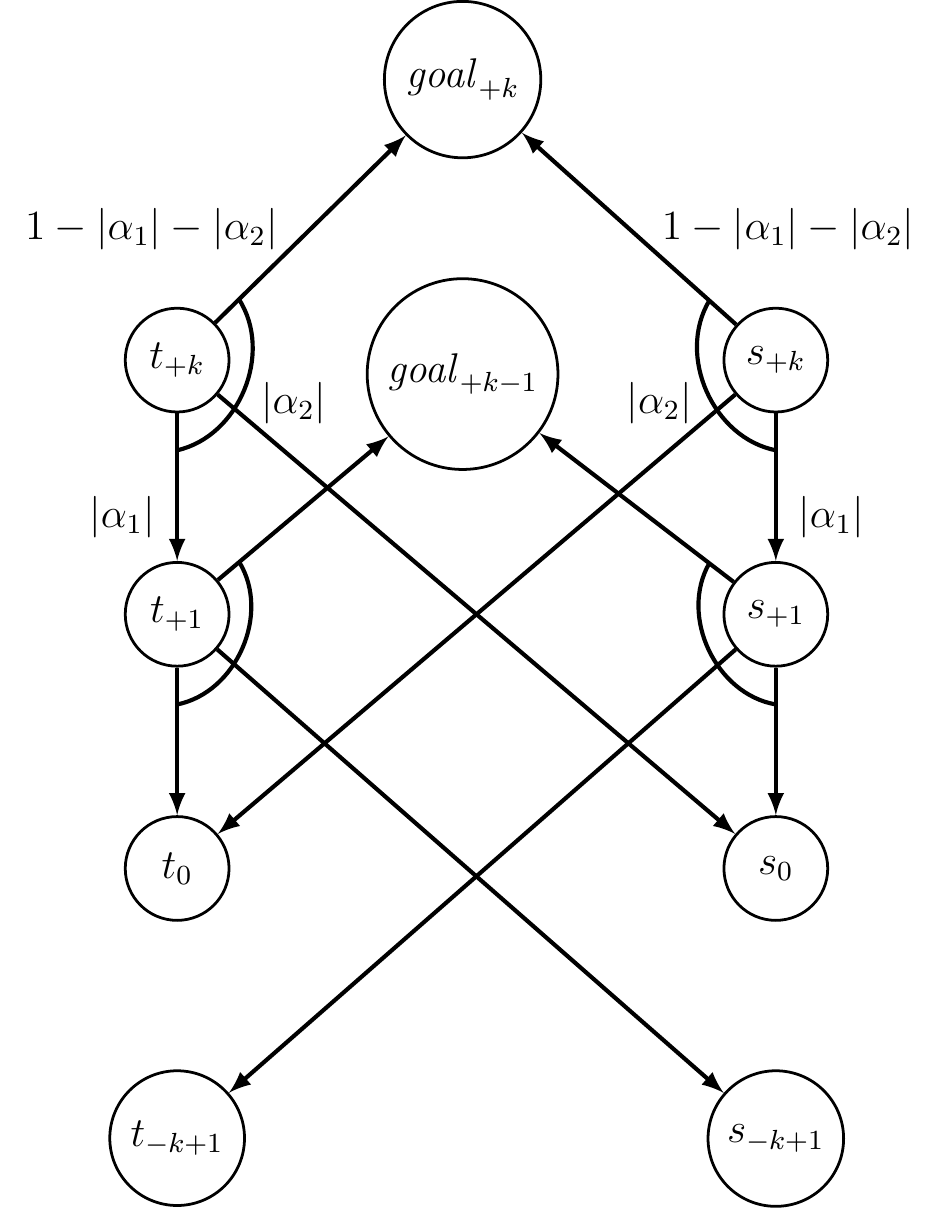}

\end{center}

\caption{The Markov chain $\cC$ depicted for $k=2$ with $\alpha_1\geq 0$ and $\alpha_2<0$.}\label{fig:Markov_chain}
\end{figure}
 
 The idea behind this Markov chain is that the reachability probabilities describe how, for arbitrary $n\in \mathbb{N}$ and $1\leq i \leq k$, 
 the values $p(t,nk+i)$ and $p(s,nk+i)$ depend on   the values $p(t,(n-1)k+j)$ and $p(s,(n-1)k+j)$ for $1\leq j \leq k$. 
 The transitions in $\cC$ behave as $\gamma$ and $\delta$ in $\cM$, but the decrease in the accumulated weight is explicitly encoded into the state space. 
 Namely, for  $n\in\mathbb{N}$ and $0<i\leq k$, we have
{ \begin{align}\label{eqn:cC1}
  p(t,nk+i) = \sum_{j=1}^k &\big(\Pr_{\cC, t_{+i}} (\lozenge t_{-k+j})\cdot p(t,(n{-}1)k+j) 
  +\Pr_{\cC, t_{+i}} (\lozenge s_{-k+j})\cdot p(s,(n{-}1)k+j) \big)  \tag{$\ast$}
\end{align}}
and analogously for $ p(s,nk+i)$.
We now group the optimal values together in the following column vectors
\[v_n = (p(t,nk+k-1) , p(t,nk+k-2), \ldots, p(t,nk), p(s,nk+k-1) , \ldots, p(s,nk))^\top \]
for $n\in \mathbb{N}$. In other words, this vector contains the optimal values for the partial expectation when starting in $t$ or $s$ with an accumulated weight from  $\{nk,\dots,nk+k-1\}$. 
The vector $v_0$ is the column vector
\[(p(t,k-1),\dots, p(t,0), \dots p(s,k-1),\dots,p(s,0))^\top
\]
and these values occur as transition probabilities in $\cM$ under the actions $\gamma_{k-1}, \dots, \gamma_{0}$ and $ \delta_{k-1},\dots, \delta_{0}$.

As the reachability probabilities in $\cC$ are rational and computable in polynomial time, we conclude from equation (\ref{eqn:cC1})
that there is a matrix $A\in \mathbb{Q}^{2k\times2k}$ computable in polynomial time such that 
$ v_{n+1} = A v_{n}$
for all $n\in\mathbb{N}$. 
So, $v_n=A^n v_0$ for all $n\in\mathbb{N}$.

As state $\mathit{choice}$ is reached with weight  $w$ with probability $(1/2)^w$ for all $w\geq 1$,
the value $\vartheta=
\sum_{w=1}^\infty (1/2)^w p(t,w)$.
Let $c=(\frac{1}{2^k} , \frac{1}{2^{k-1}} , \dots, \frac{1}{2^1} , 0 ,\dots, 0)$. 
Observe that for all $n\in\mathbb{N}$,
\[\left(\frac{1}{2^k}\right)^n \cdot c\cdot v_n= \sum_{i=1}^{k}  \frac{1}{2^{nk+i}} p(t,nk+i).\]

\noindent
Hence, we can write 
\begin{align*}
\vartheta &= \sum_{n=0}^\infty \left(\frac{1}{2^k}\right)^n \cdot c \cdot v_n - p(t,0)  = c \cdot \sum_{n=0}^\infty \left(\frac{1}{2^k}\right)^n \cdot v_n - p(t,0) \\
&=  c \cdot \sum_{n=0}^\infty \left(\frac{1}{2^k}\right)^n \cdot A^{n} \cdot v_{0} -p(t,0) = c \cdot \left(\sum_{n=0}^\infty \left(\frac{1}{2^k}\cdot A \right)^n \right) \cdot v_{0} -p(t,0).
\end{align*}
We have to subtract $p(t,0)$ as the state $\mathit{choice}$ cannot be reached with weight $0$, but the summand $1\cdot p(t,0)$ occurs in the sum. As $p(t,0)=\frac{k}{k+1}+\beta_0$, this does not cause a problem.

We claim that  the matrix series involved converges to a rational matrix. 
 We observe that the maximal row sum in $A$ is at most $|\alpha_1|{+}\ldots{+}|\alpha_k|<1$ because the rows of the matrix contain exactly the probabilities to reach $t_0$, \dots $t_{-k+1}$, $s_0$, \dots, and  $s_{-k+1}$ from a state $t_{+i}$ or $s_{+i}$ in $\cC$ for $1\leq i \leq k$. But the probability to reach $\goal_{+i}$ from these states is already $1{-}|\alpha_1|{-}\ldots{-}|\alpha_k|$. Hence, $\Vert A \Vert_{\infty}$, the operator norm induced by the maximum norm $\Vert \cdot \Vert_\infty$, which equals $\max_{i} \sum_{j=1}^{2k} |A_{ij}|$, is less than $1$.
So, in particular, also $\Vert \frac{1}{2^k} A \Vert_{\infty}<1$ and hence the  Neumann series $\sum_{n=0}^\infty \left(\frac{1}{2^k} A\right)^{n}$ converges to $\left(I_{2k}-\frac{1}{2^k} A\right)^{-1}$ where $I_{2k}$ is the identity matrix of size $2k{\times }2k$. So,
\[
\vartheta= c \cdot \left(I_{2k}-\frac{1}{2^k} A\right)^{-1} \cdot v_0 - p(t,0)
\]
is computable in polynomial time. 
\end{proof}

All in all, this finishes the proof of item (1) of  Theorem \ref{thm:positivity_oc-mdp}: 
We have seen that the MDP $\cM$ and the threshold $\vartheta$ can be constructed in  time polynomial in 
the size of the representations of $\alpha_1, \dots, \alpha_k$  and $\beta_0, \dots, \beta_{k-1}$. 
As
$\vartheta= \Pr_{\cM,\sinit}^{\sched}(\lozenge  (\text{accumulated weight }< 0))$,
we furthermore know that
\[\Pr_{\cM,\sinit}^{\max}(\lozenge ( \text{accumulated weight }< 0)) > \vartheta\]
if and only if the scheduler $\sched$ is not optimal. By Lemma \ref{lem:recurrence_OC}, this is the case if and only if
the given linear recurrence sequence $(u_n)_{n\geq 0}$ has a negative member: 
If $u_w<0$ for some $w\in \mathbb{N}$, then the following scheduler $\tsched$ achieves a value greater than $\vartheta$.
The scheduler $\tsched$ behaves like~$\sched$ except when in state $\mathit{choice}$ with accumulated weight $w$. In this case, $\tsched$ chooses $\sigma$ instead of $\tau$. As $\mathit{choice}$ is reached with accumulated weight $w$ with positive probability and 
$p(s,w)>p(t,w)$ this scheduler outperforms $\sched$ as it behaves optimally when reaching state $t$ with accumulated weight $w$ as shown in the proof of Lemma \ref{lem:recurrence_OC}.
If on the other hand $u_w\geq 0$ for all $w\in \mathbb{N}$, then $p(t,w)>p(s,w)$ for all $w$ and hence action $\tau$ is always optimal in state $\mathit{choice}$. As $\sched$ behaves optimally once $t$ or $s$ is reached and always chooses $\tau$, scheduler $\sched$ is indeed optimal in this case.

Finally,  we want to emphasize again that the absolute values of the weights in the constructed MDP are at most $k$. Hence, if we want to view $\cM$ as a one-counter MDP in which the counter value can only be increased or decreased by $1$ in each step, the constructed MDP becomes only polynomially larger after we replace the transitions with a weight $+w$ or $-w$ for a $1\leq w \leq k$ by a sequence of $w$ states decreasing or increasing the counter value, which allowed us to conclude Corollary \ref{cor:one-counter}.

\paragraph{Proof of Theorem \ref{thm:positivity_oc-mdp}(2).}

The construction we provided so far shows that the threshold problem for the \emph{maximal} termination probability of one-counter MDPs is Positivity-hard.
Using exactly the same ideas, we can show that the threshold problem for the \emph{minimal} termination probability is Positivity-hard as well. Let us describe the necessary changes in the construction that are also depicted in Figure \ref{fig:MDP_minimal_oc}.
We rename the state $\trap$ to $\trap^\prime$ and  add a transition with weight $-k$ to a new absorbing state $\trap$.
For all $0\leq j \leq k-1$,  now state $\trap$ is reached directly with probability $1$ and weight $-j$ from the states $x_j$ and $y_j$.
Furthermore, the probability to reach $x_j$ when choosing $\gamma_j$ in $t$ is changed to $\frac{j+1}{k+1}+\beta_j$ and the probability to reach $\trap^\prime$ is adjusted accordingly. The analogous change is performed for $\delta_j$.
Now, it is easy to check that the optimal choice to minimize the termination probability in state $t$ is to choose $\gamma$ if the accumulated weight is $\geq k$. In this case the probability of termination is less than $\frac{1}{k+1}$. If the accumulated weight is $0\leq \ell<k$, the optimal choice is $\gamma_\ell$. The analogous result holds in state $s$.  From then on the proof is analogous to the proof
for the maximal termination probability
 with the change that we have to consider the scheduler $\sched$ always choosing $\sigma$ in the state $\mathit{choice}$ this time. 
 This scheduler is optimal to minimize the termination probability if and only if the given linear recurrence sequence is non-negative.
  With these adjustments, we conclude:
\begin{corollary} \label{cor:positivity_oc}
The Positivity problem is reducible in polynomial time to the following problem:
Given an MDP $\cM$ and a rational $\vartheta\in(0,1)$, decide whether 
\[\Pr^{\min}_{\cM,\sinit}( \lozenge (\text{accumulated weight $< 0$}))< \vartheta.\]
\end{corollary}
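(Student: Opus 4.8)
The plan is to reuse, essentially unchanged, the modular three-gadget architecture developed for the maximal case in the proof of Theorem~\ref{thm:positivity_oc-mdp}(1): the recurrence gadget $\cG_{\bar\alpha}$, the initial gadget $\cI$, and an initial-value gadget glued at $t$, $s$, and the trap. The crucial observation is that the optimality equation ($\ast$) has exactly the same shape for $\Pr^{\min}$ as for $\Pr^{\max}$, the only difference being that the outer $\max$ becomes a $\min$. Consequently the algebraic identity of Lemma~\ref{lem:recurrence_OC}, namely $p(t,w)-p(s,w)=\sum_{i=1}^{k}\alpha_i\,(p(t,w{-}i)-p(s,w{-}i))$, continues to hold verbatim whenever $\gamma$ and $\delta$ are the locally chosen actions in $t$ and $s$, because it is obtained purely by substituting the transition structure of $\cG_{\bar\alpha}$. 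Hence $\cG_{\bar\alpha}$ carries over without change, and the single component that must be re-engineered is the initial-value gadget, which now has to force the intended local decisions under minimization rather than maximization.

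The key idea is to invert the role of the trap. In the maximal construction, reaching the absorbing state $\trap$ with non-negative weight is the way to \emph{avoid} termination, while termination is produced on the branches through $x_j$, $y_j$. Since a minimizing scheduler now wants to avoid termination, I would make the old trap unsafe: rename it to $\trap'$ and route it to a fresh absorbing state $\trap$ along an edge of weight $-k$, so that a run landing in $\trap'$ with accumulated weight $w<k$ terminates, whereas reaching $x_j$ or $y_j$ (from which $\trap$ is reached with weight $-j$) is safe exactly when the accumulated weight is at least $j$. After re-tuning the reach-probabilities so that $\gamma_j$ reaches $x_j$ with probability $\tfrac{j+1}{k+1}+\beta_j$, with the analogous change for $\delta_j$, I would verify the two facts on which everything rests: that the minimizing scheduler is forced to play $\gamma_\ell$, $\delta_\ell$ when the current weight $\ell$ satisfies $0\le\ell<k$ and to play $\gamma$, $\delta$ once $\ell\ge k$; and that, under these forced choices, the difference $d(w)\eqdef p(t,w)-p(s,w)$ again satisfies $d(\ell)=u_\ell$ for $\ell<k$ and therefore, by the recurrence above, $d(w)=u_w$ for all $w\ge0$.

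With the encoding in place, the initial gadget $\cI$ is reused unchanged: the state $\mathit{choice}$ is reached with accumulated weight $w$ with probability $(1/2)^w$, and there the minimizing decision is $\sigma$ (move to $s$) exactly when $p(s,w)\le p(t,w)$, i.e.\ when $d(w)=u_w\ge0$. Thus the scheduler $\sched$ that always plays $\sigma$ and thereafter behaves optimally is optimal for the minimization problem if and only if $(u_n)_{n\ge0}$ is non-negative. I would then set $\vartheta\eqdef\Pr^{\sched}_{\cM,\sinit}(\lozenge(\text{accumulated weight}<0))=\sum_{w\ge1}(1/2)^w\,p(s,w)$ and compute it exactly as in Lemma~\ref{lem:compute_threshold_OC}: grouping the optimal values at weight levels $nk,\dots,nk{+}k{-}1$ into vectors $v_n$, the reachability structure of the Markov chain $\cC$ gives a rational matrix $A$ with $\Vert A\Vert_\infty<1$ and $v_{n+1}=Av_n$, so that $\vartheta=c\,(I_{2k}-\tfrac{1}{2^k}A)^{-1}v_0-p(s,0)$ for the appropriate selector vector $c$ picking out the $s$-coordinates, a rational computable in polynomial time. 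This yields $\Pr^{\min}_{\cM,\sinit}(\lozenge(\text{accumulated weight}<0))<\vartheta$ if and only if $\sched$ is not optimal, i.e.\ if and only if some $u_w<0$. As all weights have absolute value at most $k$, the unary-expansion remark used for Corollary~\ref{cor:one-counter} turns $\cM$ into a genuine one-counter MDP, establishing the claim.

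The main obstacle is precisely the trap-inversion of the second paragraph. One must check that introducing the weight-$(-k)$ edge to a new absorbing trap reproduces \emph{exactly} the same forced local decisions as the maximal construction: that the recurrence action $\gamma$ remains the unique minimizer for $\ell\ge k$ (so that runs are actually driven into $\cG_{\bar\alpha}$ and the recurrence is exercised) while being strictly dominated by $\gamma_\ell$ for $\ell<k$, and, most delicately, that the perturbation $\beta_j$ ends up on the branch that makes $d(w)$ equal to $+u_w$ rather than $-u_w$, so that ``always $\sigma$'' — and not ``always $\tau$'' — corresponds to non-negativity of the sequence and the threshold comparison flips to $<$. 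Reconciling these signs and the induced direction of the inequality is the only genuinely new point; once the forced decisions and $d(w)=u_w$ are secured, the recurrence and threshold computations are identical to the maximal case.
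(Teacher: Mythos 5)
Your high-level plan coincides with the paper's: keep $\cI$ and $\cG_{\bar{\alpha}}$, invert the trap (rename it $\trap^\prime$ and route it over a weight-$(-k)$ edge to a fresh absorbing $\trap$), re-tune the probabilities of the $\gamma_j$ and $\delta_j$, take the scheduler that always plays $\sigma$ as the reference scheduler, and compute the threshold by the same Neumann-series argument as in Lemma \ref{lem:compute_threshold_OC}. The problem is that the two facts you yourself single out as ``the main obstacle'' --- that the minimizer is forced to play $\gamma,\delta$ once the weight is at least $k$, and that $d(\ell)=+u_\ell$ for $\ell<k$ --- are never verified in your proposal, and for the gadget exactly as you specify it both facts are \emph{false}, so the plan as written does not go through.

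First, the forcing fails at high weights. After $\gamma_j$, every run proceeds through edges of weight $-j$ (via $x_j$) or $-k$ (via $\trap^\prime$) into the absorbing state $\trap$, so the total weight that can ever be lost on these branches is at most $k$. Hence for accumulated weight $\ell\ge k$ the action $\gamma_j$ has termination probability exactly $0$ --- by your own phrasing, both branches are then ``safe'' --- whereas $\gamma$ has positive termination probability (inside $\cG_{\bar{\alpha}}$ the weight can drop below $k$, after which $\trap^\prime$ terminates). A minimizing scheduler therefore plays $\gamma_j$, never enters the recurrence gadget, and one gets $p(t,w)=p(s,w)=0$, i.e.\ $d(w)=0$ instead of $u_w$, for all $w\ge k$. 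Second, the sign at low weights comes out wrong: for $j\le \ell<k$ the $x_j$-branch is the non-terminating one, so placing the extra mass $\beta_j$ on it \emph{lowers} the termination probability from $t$, giving $d(\ell)=-\beta_\ell$ rather than $+\beta_\ell$; then ``always $\sigma$'' is strictly suboptimal at $\mathit{choice}$ as soon as some $\beta_w>0$ with $1\le w<k$, irrespective of whether the sequence stays non-negative, and the equivalence on which the reduction rests collapses. Repairing this is not a routine sign-reconciliation: for $\gamma_j,\delta_j$ to stay more expensive than $\gamma,\delta$ at \emph{every} weight level, their branches must terminate with probability bounded away from zero no matter how large the accumulated weight is, which in a finite MDP forces those branches to reach cycles of negative total weight (guaranteed termination from any weight), and the perturbation $\beta_j$ must sit on such a guaranteed-termination branch. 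Boundedly negative edges into an absorbing trap --- which is all your construction provides --- cannot deliver either property, so the key step of your argument would fail if carried out.
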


 \begin{remark}\label{rem:strict}
 There is no obvious way to adjust the construction such that  the Positivity-hardness of the question whether  $\Pr^{\max}_{\cM,\sinit}( \lozenge (\text{accumulated weight $< 0$}))\geq \vartheta$ would follow. One attempt would be to provide an $\varepsilon$ such that  $\Pr^{\max}_{\cM,\sinit}( \lozenge (\text{accumulated weight $< 0$}))>\vartheta$ if and only if   $\Pr^{\min}_{\cM,\sinit}( \lozenge (\text{accumulated weight $< 0$})) \geq \vartheta + \varepsilon$. This, however, probably requires a bound on the position at which the given linear recurrence sequence first becomes negative. But this question lies at the core of the Positivity  problem. The analogous observation applies to
 the question  whether 
 $\Pr^{\min}_{\cM,\sinit}( \lozenge (\text{accumulated weight $< 0$}))\leq \vartheta$ and all Positivity-hardness results in the sequel.
 \end{remark}


\begin{figure}[t]
\begin{center}
  \includegraphics[width=0.65\linewidth]{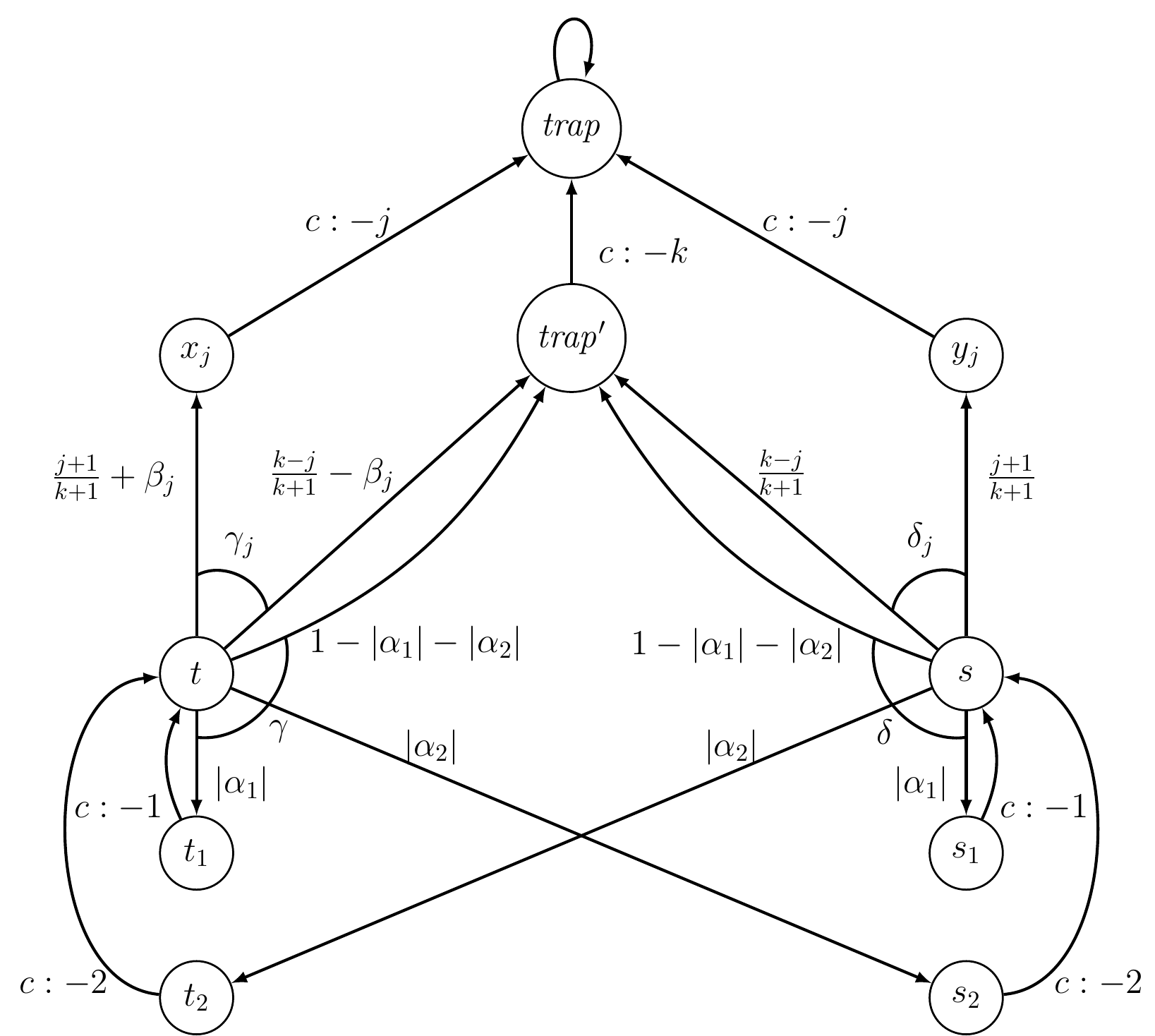}

   \end{center}

\caption[Necessary changes to the construction for the result for minimal termination probabilities.]{Necessary changes to the construction for the result for minimal termination probabilities. The initial component of the MDP is omitted here and stays unchanged.}\label{fig:MDP_minimal_oc}
\end{figure}

\paragraph{Energy objectives.}
As the energy objective $\Box(\text{accumulated weight $\geq 0$})$ is satisfied if and only if $\lozenge (\text{accumulated weight $< 0$})$ does not hold, the Positivity-hardness of the threshold problem of the optimal satisfaction probability of an energy objective follows easily. As
\[
\Pr^{\max}_{\cM,\sinit}( \Box (\text{accumulated weight $\geq 0$}))=1-\Pr^{\min}_{\cM,\sinit}( \lozenge (\text{accumulated weight $< 0$})),
\]
we conclude:
\begin{corollary}
The Positivity problem is reducible in polynomial time to the following problems:
Given an MDP $\cM$ and a rational $\vartheta\in(0,1)$, 
\begin{enumerate}
\item decide whether 
$\Pr^{\max}_{\cM,\sinit}( \Box (\text{accumulated weight $\geq 0$}))> \vartheta$.
\item decide whether 
$\Pr^{\min}_{\cM,\sinit}( \Box (\text{accumulated weight $\geq 0$}))< \vartheta.$
\end{enumerate}
\end{corollary}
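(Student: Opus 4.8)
The plan is to avoid constructing any new MDP and instead reuse the reductions already established for termination probabilities, observing that the energy objective $\Box(\text{acc. weight}\geq 0)$ is precisely the complement of the termination event $\lozenge(\text{acc. weight}<0)$ on the set of maximal paths. Since the two events are complementary, the same MDPs produced in the proof of Theorem \ref{thm:positivity_oc-mdp} and Corollary \ref{cor:positivity_oc} can be used verbatim, and only the rational threshold has to be transformed. Concretely, I would compose the reductions from the Positivity problem to the termination-probability threshold problems with a trivial threshold translation $\vartheta \mapsto 1-\vartheta$.

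First I would record the per-scheduler complement identity: for every scheduler $\sched$,
\[
\Pr^{\sched}_{\cM,\sinit}(\Box(\text{acc. weight}\geq 0)) = 1 - \Pr^{\sched}_{\cM,\sinit}(\lozenge(\text{acc. weight}<0)).
\]
Taking the supremum (resp.\ infimum) over all $\sched$ turns this into the two dualities
\[
\Pr^{\max}_{\cM,\sinit}(\Box(\text{acc. weight}\geq 0)) = 1 - \Pr^{\min}_{\cM,\sinit}(\lozenge(\text{acc. weight}<0)), \quad \Pr^{\min}_{\cM,\sinit}(\Box(\text{acc. weight}\geq 0)) = 1 - \Pr^{\max}_{\cM,\sinit}(\lozenge(\text{acc. weight}<0)).
\]
Item (1) then follows because $\Pr^{\max}(\Box\,\cdots)>\vartheta$ holds if and only if $\Pr^{\min}(\lozenge\,\cdots)<1-\vartheta$; mapping a termination instance $(\cM,1-\vartheta)$ of the Positivity-hard problem of Corollary \ref{cor:positivity_oc} to the energy instance $(\cM,\vartheta)$ (equivalently, taking a Positivity instance through Corollary \ref{cor:positivity_oc} and replacing its threshold by $1-\vartheta$) is a polynomial-time reduction. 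For item (2) I would argue symmetrically: $\Pr^{\min}(\Box\,\cdots)<\vartheta$ holds if and only if $\Pr^{\max}(\lozenge\,\cdots)>1-\vartheta$, and I invoke Theorem \ref{thm:positivity_oc-mdp}(1) with threshold $1-\vartheta$. Composing with the reductions already established yields polynomial-time reductions from the Positivity problem to both energy threshold problems.

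There is no genuine obstacle here; the entire argument is complementation together with the order-reversing threshold shift $\vartheta \mapsto 1-\vartheta$. The only points demanding a moment of care are pairing the optimization directions correctly (maximal energy with minimal termination, and minimal energy with maximal termination) and checking that the translated threshold remains a legal input, i.e.\ that $1-\vartheta\in(0,1)$ whenever $\vartheta\in(0,1)$; both are immediate since $\vartheta\mapsto 1-\vartheta$ is an involution of the open unit interval and preserves strict inequalities after flipping their direction.
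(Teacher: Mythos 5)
Your proposal is correct and is essentially the paper's own argument: the paper likewise observes that the energy objective is the complement of the termination event, uses the duality $\Pr^{\max}_{\cM,\sinit}(\Box(\text{accumulated weight} \geq 0)) = 1-\Pr^{\min}_{\cM,\sinit}(\lozenge(\text{accumulated weight} < 0))$ (and its symmetric counterpart), and composes with Theorem \ref{thm:positivity_oc-mdp} and Corollary \ref{cor:positivity_oc} via the threshold shift $\vartheta \mapsto 1-\vartheta$. Your pairing of optimization directions (maximal energy with minimal termination, minimal energy with maximal termination) matches the paper exactly.
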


\paragraph{Cost problems and quantiles.} The proof of the Positivity-hardness of the threshold problem for the termination probability of one-counter MDPs in fact also serves as a proof that cost problems and the computation of quantiles of the accumulated weight before reaching a goal state are Positivity-hard.
Observe that in the MDP constructed for Theorem \ref{thm:positivity_oc-mdp} and Corollary \ref{cor:positivity_oc}, almost all paths $\zeta$ under any scheduler satisfy 
$ \lozenge (\text{accumulated weight $< 0$})$ if and only if they satisfy $\rawdiaplus \trap(\zeta) < 0$ if and only if their total accumulated weight is less than $0$. Thus, we obtain the following corollary:
\begin{corollary}
The Positivity problem is reducible in polynomial time to the following problems:
Given an MDP $\cM$ with a designated set of trap states $\Goal$ and a rational $\vartheta\in(0,1)$, 
\begin{enumerate}
\item
decide whether 
$\Pr^{\max}_{\cM,\sinit}( \rawdiaplus \Goal < 0)> \vartheta$.
\item
 decide whether 
$\Pr^{\min}_{\cM,\sinit}(  \rawdiaplus \Goal < 0)< \vartheta$.
\end{enumerate}
\end{corollary}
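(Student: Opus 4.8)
The plan is to reuse \emph{verbatim} the two MDPs built in the proofs of Theorem~\ref{thm:positivity_oc-mdp}(1) and Corollary~\ref{cor:positivity_oc}, and simply to reinterpret them as instances of the cost problem by declaring the absorbing state $\trap$ to be the single terminal goal state, $\Goal \eqdef \{\trap\}$. (Turning $\trap$ into a genuine trap state is harmless, as its self-loop carries weight $0$ and contributes nothing to the accumulated weight.) For the threshold I would keep the very same rational $\vartheta$ produced in Lemma~\ref{lem:compute_threshold_OC}, respectively its counterpart for the minimal construction. With these choices the entire claim collapses to the observation already announced before the corollary: the termination event and the cost event coincide up to a set of measure zero.

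Concretely, I would establish that for every scheduler and almost every path $\zeta$ of the constructed MDP,
\[
\zeta \vDash \lozenge(\text{accumulated weight} < 0) \iff \rawdiaplus\Goal(\zeta) < 0 \iff \wgt(\zeta) < 0 .
\]
Two structural facts carry this. First, $\trap$ is reached almost surely under any scheduler: each sojourn at $t$ or $s$ exits to $\trap$ with probability at least $1-\sum_{i=1}^k|\alpha_i|>0$ when $\gamma$ or $\delta$ is taken, and reaches $\trap$ within at most two steps with probability $1$ when some $\gamma_j$ or $\delta_j$ is taken, so looping forever in $\cG_{\bar\alpha}$ has probability $0$. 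Hence $\rawdiaplus\Goal(\zeta)$ is defined and equals $\wgt(\zeta)$ almost everywhere, which gives the right-hand equivalence. Second, the weight profile of a run is unimodal: inside the initial gadget $\cI$ the accumulated weight only increases from $0$ to the sampled level $w$, whereas every transition taken after $\cI$ has been left—those of $\cG_{\bar\alpha}$ and those of $\cO_{\bar\beta}$ leading to $\trap$—is weight non-increasing. Therefore the minimum accumulated weight over all prefixes is attained at $\trap$, so a prefix of negative weight exists precisely when the terminal weight is negative; this is the left-hand equivalence.

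From the displayed equivalence I get $\Pr^{\sched}_{\cM,\sinit}(\lozenge(\text{accumulated weight}<0)) = \Pr^{\sched}_{\cM,\sinit}(\rawdiaplus\Goal < 0)$ for every scheduler $\sched$, so the optimal (max and min) values of the two quantities agree. Consequently the reductions of Theorem~\ref{thm:positivity_oc-mdp}(1) and Corollary~\ref{cor:positivity_oc} transfer literally: with the same threshold $\vartheta$ one has $\Pr^{\max}_{\cM,\sinit}(\rawdiaplus\Goal < 0) > \vartheta$ if and only if the linear recurrence sequence has a negative member, establishing item~1, and symmetrically $\Pr^{\min}_{\cM,\sinit}(\rawdiaplus\Goal < 0) < \vartheta$ for the minimal construction, establishing item~2. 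Polynomial-time computability of the instance and of $\vartheta$ is inherited directly from those proofs.

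The step requiring the most care is the minimal-probability construction, where $\trap$ has been renamed $\trap'$ and a fresh absorbing goal is reached from $\trap'$ with weight $-k$ and from the states $x_j$, $y_j$ with weight $-j$. There I would re-verify that these final negative-weight steps preserve the unimodality argument and that the terminal weight becomes negative exactly on the paths that terminate in the one-counter sense, checking the boundary weights ($-k$ from $\trap'$ and $-j$ from $x_j/y_j$) against the bookkeeping of Corollary~\ref{cor:positivity_oc}. Everything else is a direct transcription of the already-proven termination-probability reduction.
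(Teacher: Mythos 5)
Your proposal is correct and takes essentially the same route as the paper: the paper also reuses the MDPs and thresholds from Theorem~\ref{thm:positivity_oc-mdp} and Corollary~\ref{cor:positivity_oc} with $\Goal=\{\trap\}$ and rests on the observation that, under any scheduler, almost all paths satisfy $\lozenge(\text{accumulated weight}<0)$ if and only if $\rawdiaplus\Goal<0$ if and only if the total accumulated weight is negative. Your two supporting facts (almost-sure absorption in $\trap$ and the increase-then-non-increase weight profile) are exactly the justification the paper leaves implicit in its one-line ``observe that'' argument.
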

The analogous result also holds for the total accumulated weight.

\paragraph{Termination times of one-counter MDPs.}
To conclude the section, we show that not only the threshold problems for optimal termination probabilities, but also for the optimal expected termination times in one-counter MDPs that terminate almost surely is Positivity-hard. 
We again work with weighted MDPs.
Let $T$ be the random variable that assigns to each path in a weighted MDP $\cM$ the length of the shortest prefix $\pi$ such that $\wgt(\pi)<0$.
To reflect precisely the behavior of a one-counter MDP, we now will work with  MDPs where the weight is reduced or increased by at most $1$ in each step. We make a small change to the MDP constructed for the proof of Corollary \ref{cor:positivity_oc} that is depicted in Figure \ref{fig:MDP_minimal_oc}. The initial component (that is not depicted) stays unchanged. For the remaining transitions, all transition reduce the weight or leave it unchanged. The transitions with weight $0$ do not occur directly after each other except for the loop at the state $\trap$ that we adjust in a moment. Hence, we can add additional auxiliary states such that along each path starting from $s$ or $t$ not reaching the state $\trap$, the weight is left unchanged and reduced by $1$ in an alternating fashion.
So, if a path starts in state $s$ or $t$ with accumulated weight $w$ and terminates (i.e., reaches accumulated weight $-1$) before reaching the state $\trap$ this takes $2(w+1)$ steps.
Now, we replace the loop at the state $\trap$ by the gadget depicted in Figure \ref{fig:MDP_minimal_oc_time} and let us call the resulting MDP $\cN$.
So, when reaching $\trap$ the accumulated weight is increased by $1$ before it is reduced in every other step until termination.
That means that if a path starting in state $s$ or $t$ with weight $w$ does not terminate before reaching $\trap$, the termination time is $2(w+1)+3$ steps.

\begin{sidefigure}[t]
  \begin{center}
    \includegraphics[width=0.25\textwidth]{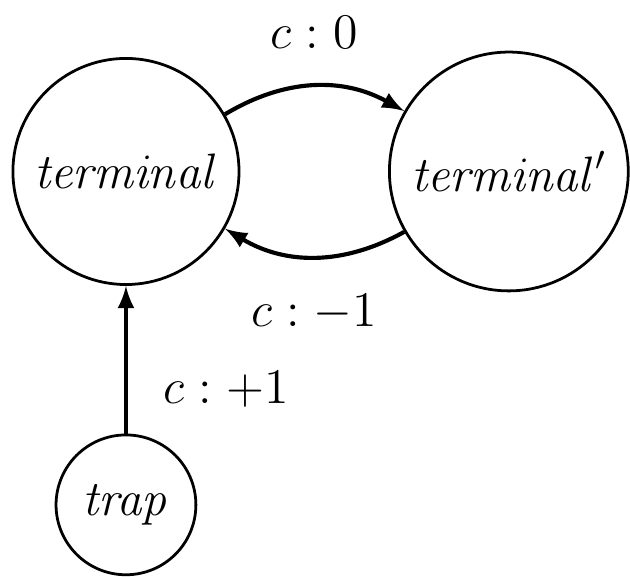}
   \end{center}

\caption[Necessary changes to the construction for the result for maximal expected termination times.]{Necessary changes to the construction for the result for for maximal expected termination times.}\label{fig:MDP_minimal_oc_time}
\end{sidefigure}

Now, let $\sched$ be a scheduler and denote the probability not to terminate before reaching $\trap$ under $\sched$ by $p^{\sched}$. For the expected termination time $T$ in $\cN$, we now have
\[
\mathbb{E}^{\sched}_{\cN,\sinit} = \left( \sum_{i=1}^{\infty} (1/2)^i(i+2(i+1))  \right) + 3\cdot p^{\sched} = 8+3\cdot p^{\sched}.
\]
The summands $(1/2)^i(i+2(i+1))$ correspond to the probability to accumulated weight $i$ in the initial component which takes $i$ steps and the $2(i+1)$ steps needed to terminate by alternatingly leaving the weight unchanged and reducing it by $1$. The three additional steps after $\trap$ occur precisely with probability $p^{\sched}$.

Not terminating before $\trap$ corresponds exactly to not terminating at all in the MDP constructed for Corollary \ref{cor:positivity_oc}. The termination probability there is hence $1-p^{\sched}$ for any scheduler~$\sched$. It is hence possible to terminate with a probability less than $\vartheta$ in that MDP if and only if it is possible to reach an expected termination time of more than $11-3\vartheta$ in $\cN$.
By Corollary \ref{cor:positivity_oc} and the fact that termination is reached almost surely in $\cN$ under any scheduler, we hence conclude:
\begin{corollary}
Let $\cM$ be a one-counter MDP with initial state $\sinit$ that terminates almost surely under any scheduler, let $\vartheta$ be a rational, and let $T$ be the random variable assigning the termination time to runs. The Positivity problem is polynomial-time reducible to the problem whether 
\[
\mathbb{E}^{\max}_{\cM,\sinit} (T) >\vartheta.
\]
\end{corollary}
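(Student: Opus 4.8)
The plan is to reduce from the minimal termination probability problem of Corollary~\ref{cor:positivity_oc}, converting a ``low termination probability'' certificate into a ``high expected termination time'' certificate. Starting from the MDP constructed for that corollary (Figure~\ref{fig:MDP_minimal_oc}), I would first turn it into a genuine one-counter MDP $\cN$ in which every step changes the counter by at most $1$: each transition of weight $-w$ with $w>1$ is split into $w$ consecutive decrement steps, and weight-$0$ auxiliary states are inserted so that along every path leaving $s$ or $t$ and not yet reaching $\trap$ the counter strictly alternates between an unchanged step and a decrement step. This enforces a uniform countdown rate of one unit per two steps, which is the crucial invariant: regardless of how the scheduler routes the run inside the recurrence gadget $\cG_{\bar\alpha}$, a path starting in $s$ or $t$ with counter value $w$ reaches counter value $-1$ after \emph{exactly} $2(w+1)$ steps.

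Next I would repair almost-sure termination and pin down the cost of ``escaping'' to $\trap$. In the original MDP, reaching the absorbing state $\trap$ corresponds precisely to not terminating; here I replace the self-loop at $\trap$ by the gadget of Figure~\ref{fig:MDP_minimal_oc_time}, which first increments the counter by one and then resumes the same alternating countdown. Because the countdown rate is uniform, this single increment rewinds the countdown by one level and hence adds a \emph{fixed} number of steps --- three --- independently of the counter value at which $\trap$ is entered and independently of the scheduler. Consequently a path that accumulates weight $i$ in the initial gadget $\cI$ (which happens with probability $(1/2)^i$ and costs $i$ steps) has termination time exactly $i+2(i+1)$ if it terminates before $\trap$, and exactly $i+2(i+1)+3$ if it first reaches $\trap$. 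In particular $\cN$ terminates almost surely under every scheduler, so $\mathbb{E}^{\sched}_{\cN,\sinit}(T)$ is finite and well defined, as required by the hypothesis of the corollary.

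This decomposition isolates the scheduler's influence into a single scalar. Writing $p^{\sched}$ for the probability of reaching $\trap$ under $\sched$, the scheduler-independent base cost evaluates to $\sum_{i=1}^\infty (1/2)^i\bigl(i+2(i+1)\bigr)=8$, so that $\mathbb{E}^{\sched}_{\cN,\sinit}(T)=8+3\,p^{\sched}$. Since reaching $\trap$ is exactly the event of not terminating in the MDP of Corollary~\ref{cor:positivity_oc}, we have $p^{\sched}=1-\Pr^{\sched}_{\cM,\sinit}(\lozenge (\text{accumulated weight $< 0$}))$, and taking suprema over schedulers yields $\mathbb{E}^{\max}_{\cN,\sinit}(T)=11-3\,\Pr^{\min}_{\cM,\sinit}(\lozenge (\text{accumulated weight $< 0$}))$. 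Choosing the threshold $11-3\vartheta$, the condition $\mathbb{E}^{\max}_{\cN,\sinit}(T)>11-3\vartheta$ is therefore equivalent to $\Pr^{\min}_{\cM,\sinit}(\lozenge (\text{accumulated weight $< 0$}))<\vartheta$, which by Corollary~\ref{cor:positivity_oc} decides the Positivity problem. As every transformation above is carried out in polynomial time, this furnishes the required reduction.

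I expect the delicate point to be the time-accounting in the two middle steps, namely verifying that the expected termination time depends on the scheduler \emph{only} through the single quantity $p^{\sched}$. This hinges on the two invariants established by the auxiliary states: that the countdown proceeds at a fixed rate so that terminating from weight $w$ always takes $2(w+1)$ steps, and that passing through $\trap$ adds a constant offset independent of both the entry weight and the scheduler. Establishing these invariants, rather than the elementary algebra of the geometric series $\sum_{i\geq 1}(1/2)^i(3i+2)=8$, is the substantive part of the argument.
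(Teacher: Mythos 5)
Your proposal is correct and follows essentially the same route as the paper's own proof: both reduce from the minimal termination probability problem of Corollary~\ref{cor:positivity_oc}, enforce the alternating unchanged/decrement countdown so that termination from weight $w$ takes exactly $2(w+1)$ steps, replace the loop at $\trap$ by the gadget of Figure~\ref{fig:MDP_minimal_oc_time} adding a constant $3$ steps, and arrive at the identical identity $\mathbb{E}^{\sched}_{\cN,\sinit}(T)=8+3\,p^{\sched}$ with threshold $11-3\vartheta$. Your closing remark correctly identifies the scheduler-independence of the countdown cost (rather than the geometric-series algebra) as the substantive invariant, which is exactly what the paper's auxiliary-state construction is designed to secure.
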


The analogous argument with  similar changes to the MDP used in the proof of Theorem~\ref{thm:positivity_oc-mdp} can be used to show the analogous result for the problem whether  $\mathbb{E}^{\min}_{\cM,\sinit} (T) <\vartheta$.

\subsection{Partial and conditional stochastic shortest path problems}\label{sub:Skolem_partial}\label{sec:positivity_sspp}
\label{subsec:hardness_threshold_PE}

Our next goal is to prove that the partial and conditional SSPPs are Positivity-hard. Note that this stands in strong contrast to the classical SSPP, which is solvable in polynomial time \cite{bertsekas1991,deAlfaro1999,lics2018}.
 We start by providing a formal definition of the decision versions of these two problems.

Let $\cM$ be an MDP with a designated set of terminal states $\Goal$.
We define the random variable $\oplus\Goal$ on maximal paths $\zeta$ of $\cM$:
 \[
 \oplus \Goal (\zeta) = \begin{cases}
 \wgt(\zeta) & \text{ if }\zeta\vDash \lozenge \Goal,\\
 0 & \text{ otherwise}. 
 \end{cases}
 \]
 The objective in the \emph{partial SSPP} is to maximize the expected value of $\oplus \Goal$ which we call the \emph{partial expected accumulated weight}, or \emph{partial expectation} for short, i.e., to compute the value
 \[
{\PE}^{\max}_{\cM}\eqdef \mathbb{E}^{\max}_{\cM,\sinit}(\oplus\Goal)=\sup_\sched\mathbb{E}^{\sched}_{\cM,\sinit}(\oplus\Goal)
\]
where the supremum ranges over all schedulers $\sched$.
The threshold problem asks, given a rational~$\vartheta$, whether 
\[
{\PE}^{\max}_{\cM}>\vartheta.
\]
Note that the minimization of the partial expectation can be reduced to the maximization by multiplying all weights in $\cM$ with $-1$.  

 The \emph{conditional expectation} under a scheduler $\sched$ that reaches $\Goal$ with positive probability is the value 
\[
\CE^{\sched}_{\cM} \eqdef \mathbb{E}^{\sched}_{\cM}(\oplus\Goal \mid \lozenge \Goal).
\]
Again, we are interested in the maximal value
\[
\CE^{\max}_{\cM} \eqdef \sup_\sched \CE^{\sched}
\]
where the supremum ranges over all schedulers $\sched $ with $\Pr^{\sched}_{\cM}(\lozenge \Goal)>0$.
Consequently, the threshold problem asks for a given rational $\vartheta$ whether
\[
\CE^{\max}>\vartheta.
\]
Again, multiplying all weights with $-1$ reduces the minimization of the conditional expectation to the maximization.
Furthermore, given a further set of states $F$, the problem to maximize $\mathbb{E}^{\sched}_{\cM}(\rawdiaplus \Goal \mid \lozenge F)$ among all schedulers $\sched $ that reach $F$ with positive probability can be reduced to the conditional SSPP in our formulation as shown in \cite{tacas2017}\footnote{In \cite{tacas2017}, only MDPs with non-negative weights are considered. The  reduction of  \cite{tacas2017}, however, does not require the restriction to non-negative weights.}. 

\paragraph{Partial SSPP.}
In the sequel, we will provide a direct reduction from the Positivity problem to the partial SSPP using our modular approach via MDP-gadgets to prove the following  result:

\begin{theorem}\label{thm:positivity_PE}
The Positivity problem is polynomial-time reducible to the decision version of the partial SSPP, i.e., the question whether
\[
\PE^{\max}_{\cM}>\vartheta
\]
for a given  MDP $\cM$ and a given rational $\vartheta$.
\end{theorem}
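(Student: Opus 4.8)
The plan is to reuse the modular construction of Section~\ref{sec:structure_MDP}, keeping the recurrence gadget $\cG_{\bar\alpha}$ and the initial gadget $\cI$ unchanged and replacing only the gadget that fixes the initial values by one tailored to partial expectations. The first step is to check that the value function for the partial SSPP obeys the optimality equation~($\ast$). Writing $V(s,w)$ for the supremum over all schedulers of $\mathbb{E}^{\sched}_{\cM,s}(\oplus\Goal)$ when the run already carries accumulated weight $w$, the reaching event contributes this stored weight additively, so that $V(s,w)=\max_{\alpha\in\Act(s)}\sum_{t}P(s,\alpha,t)\,V(t,w+\wgt(s,\alpha))$; under Assumption~\ref{ass:1} the relevant values are finite. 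Declaring the state $\trap$ of $\cG_{\bar\alpha}$ to be a non-goal absorbing state gives $V(\trap,w)=0$, so the trap branch drops out of the difference and the computation preceding Lemma~\ref{lem:recurrence_OC} carries over verbatim: whenever $\gamma$ and $\delta$ are optimal in $t$ and $s$, the difference $d(w)\eqdef V(t,w)-V(s,w)$ satisfies $d(w)=\sum_{i=1}^{k}\alpha_i\,d(w-i)$.

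The core of the reduction is a partial-expectation analogue of the gadget $\cO_{\bar\beta}$ that pins the base values. I would equip $t$ with extra actions $\gamma_0,\dots,\gamma_{k-1}$ and $s$ with $\delta_0,\dots,\delta_{k-1}$, each routing towards $\Goal$ through a short chain whose branching probabilities and weights are chosen so that, for $0\le j<k$: (i) $\gamma_j$, respectively $\delta_j$, is the uniquely optimal non-recurrence action exactly when the accumulated weight equals $j$; (ii) at weight $j$ the expected weight collected on reaching $\Goal$ from $t$ exceeds that from $s$ by precisely $\beta_j$, so that $V(t,j)-V(s,j)=\beta_j=u_j$; and (iii) for weight $\ge k$ the recurrence actions $\gamma,\delta$ dominate all of the $\gamma_j,\delta_j$. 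With such a gadget in place the analogue of Lemma~\ref{lem:recurrence_OC} follows from the same two-part argument: the base cases $0\le w<k$ are read off the gadget, while for $w\ge k$ the recurrence supplied by $\cG_{\bar\alpha}$ propagates $d(w)=u_w$ to all weights.

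Once $d(w)=u_w$ is established, I would prepend the initial gadget $\cI$, in which the state $\mathit{choice}$ is reached with accumulated weight $w$ with probability $(1/2)^{w}$ and a scheduler then selects $\tau$ (move to $t$) or $\sigma$ (move to $s$). Since moving to $t$ is optimal if and only if $d(w)=u_w\ge 0$, the scheduler $\sched$ that always plays $\tau$ and acts optimally thereafter is optimal if and only if $(u_n)_{n\ge 0}$ is non-negative. It then remains to compute the threshold $\vartheta\eqdef\PE^{\sched}_{\cM}$ in polynomial time, which follows the template of Lemma~\ref{lem:compute_threshold_OC}: because $V(\trap,w)=0$, the recursion relating the vectors $v_n$ that collect the values $V(t,nk+i)$ and $V(s,nk+i)$ for $0\le i<k$ is linear and homogeneous, i.e. $v_{n+1}=Av_{n}$ with $\Vert\frac{1}{2^{k}}A\Vert_\infty<1$, so $\vartheta$ is obtained from the convergent Neumann series $(I_{2k}-\frac{1}{2^{k}}A)^{-1}$. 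By construction $\PE^{\max}_{\cM}>\vartheta$ if and only if $\sched$ fails to be optimal, that is, if and only if the given sequence has a negative member.

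The main obstacle is the second step, the design of the initial-value gadget. For the termination probability the objective is an indicator of reaching the goal, so fixing the base values amounts to comparing finitely many explicit probabilities, and the weight-selectivity of $\cO_{\bar\beta}$ comes for free from the rule that a run terminates once the counter drops below~$0$. For partial expectations the objective instead weighs the goal by the accumulated weight collected there, so the gadget must control \emph{both} the reaching probabilities and the expected collected weights, and encode $\beta_j$ as a difference of the latter. Since there is no weight-triggered event to exploit, the switch from $\gamma_j$ being optimal at weight $j$ to $\gamma$ being optimal for weight $\ge k$ has to be forced through the interplay of these two quantities; arranging weights and probabilities that realize this crossover pattern simultaneously for all $j<k$ while keeping $d(j)=\beta_j$ is the delicate part, and this is where the smallness guaranteed by Assumption~\ref{ass:1} is needed.
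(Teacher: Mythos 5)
Your plan follows the paper's modular template, but one design decision breaks it: declaring the absorbing state $\trap$ of $\cG_{\bar\alpha}$ to be a \emph{non-goal} state with $V(\trap,w)=0$. With that choice, your condition (iii) --- that the recurrence actions $\gamma,\delta$ dominate all gadget actions $\gamma_j,\delta_j$ at high accumulated weight --- is not just delicate to arrange, it is impossible for \emph{any} choice of initial-value gadget. Indeed, any action $\gamma_j$ that reaches $\Goal$ with some fixed probability $p_j>0$ has partial expectation $p_j\cdot w + O(1)$ when the current accumulated weight is $w$, i.e.\ it grows linearly in $w$, because in the partial SSPP the weight already carried is collected upon reaching the goal. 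On the other hand, if $\gamma$ and $\delta$ were optimal at all weights $w\geq k$, then $V(t,w)=\sum_i|\alpha_i|\,V(\cdot,w-i)+(1-\alpha)\cdot 0$, so writing $M(w)=\max\bigl(V(t,w),V(s,w)\bigr)$ one gets $M(w)\leq \alpha\max_{1\leq i\leq k}M(w-i)$ with $\alpha<\tfrac14$, forcing $M(w)\to 0$ geometrically --- contradicting the linear lower bound. So for large $w$ the gadget actions strictly beat $\gamma,\delta$, the recurrence is never ``switched on'', and $d(w)=u_w$ fails. Your homogeneous recursion $v_{n+1}=Av_n$ for the threshold is an artifact of the same flawed choice.

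The paper resolves exactly this tension by renaming $\trap$ to $\goal$: the absorbing state of $\cG_{\bar\alpha}$ \emph{is} the target of the partial SSPP, so $V(\goal,w)=w$ and choosing $\gamma$ at weight $w>0$ yields at least $(1-\alpha)\,w\geq\tfrac34 w$. The trap branch still cancels in the difference $V(t,w)-V(s,w)$ (since $V(\goal,w)-V(\goal,w)=0$), so the recurrence relation is preserved, while the gadget $\cP_{\bar\beta}$ gives its actions only tiny goal-probabilities $\tfrac{1}{2k^{2(k-j)}}+\beta_j$, so their value has slope at most roughly $\tfrac{1}{2k^2}+\tfrac{1}{4k^{2k+2}}\ll\tfrac34$; this is what makes $\gamma,\delta$ dominate at positive weights. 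Two further consequences you would need to absorb: the initial values are then pinned at the \emph{negative} weight levels $-(k-1)+j$ (the $\gamma_j$ carry weight $k-j$, and the crossover happens at weight $0$), and the threshold computation becomes inhomogeneous, $v_n = Av_{n-1}+na+b$, because runs reaching $\goal$ collect their accumulated weight; the paper then needs closed forms for the series $\sum_n(\tfrac{1}{2^k})^n\sum_{i\leq n}(n-i)A^i$ and $\sum_n(\tfrac{1}{2^k})^n\sum_{i\leq n}A^i$ rather than a single Neumann series.
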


Again, we split up the proof of the theorem into the construction of the MDP with the proof of the correctness of the encoding of the linear recurrence sequence and the computation of the threshold $\vartheta$.

\paragraph{Proof of Theorem \ref{thm:positivity_PE}: construction of the MDP and correctness of the encoding of a linear recurrence sequence.}

Let $k$ be a natural number and let $(u_n)_{n\geq 0}$ be the linear recurrence sequence given by rationals $\alpha_i$ for $1\leq i \leq k$ and $\beta_j$ for $0\leq j \leq k{-}1$ via $u_0=\beta_0$, \dots, $u_{k-1}=\beta_{k-1}$ and 
$ u_{n+k} = \alpha_1 u_{n+k-1} + \dots + \alpha_k u_n $
for all $n\geq 0$. 
By Assumption \ref{ass:1}, we can assume w.l.o.g. that   $\sum_i |\alpha_i|<\frac{1}{4}$ and that $0\leq \beta_j< \frac{1}{4k^{2k+2}}$ for all $j$.

We begin by constructing  a gadget $\cP_{\bar{\beta}}$ that encodes the initial values $\beta_0,\dots,\beta_{k{-}1}$.
The gadget  is depicted in Figure \ref{fig:gadget_initial_values} and contains states $t$, $s$, $\goal$, and $\fail$. For each $0\leq j\leq k-1$, it additionally contains states $x_j$ and $y_j$. In state $x_j$, there is one action enabled that leads to $\goal$ with probability $\frac{1}{2k^{2(k-j)}}+\beta_j$ and to $\fail$ otherwise. From state $y_j$, $\goal$ is reached with probability $\frac{1}{2k^{2(k-j)}}$ and $\fail$ otherwise. In state $t$, there is an action $\gamma_j$ leading to $x_j$ with weight $k-j$ for each $0\leq j\leq k-1$. Likewise, in state $s$ there is an action $\delta_j$ leading to $y_j$ with weight $k{-}j$ for each $0\leq j\leq k-1$.

\begin{figure}[t]
     \centering
     \includegraphics[width=0.45\linewidth]{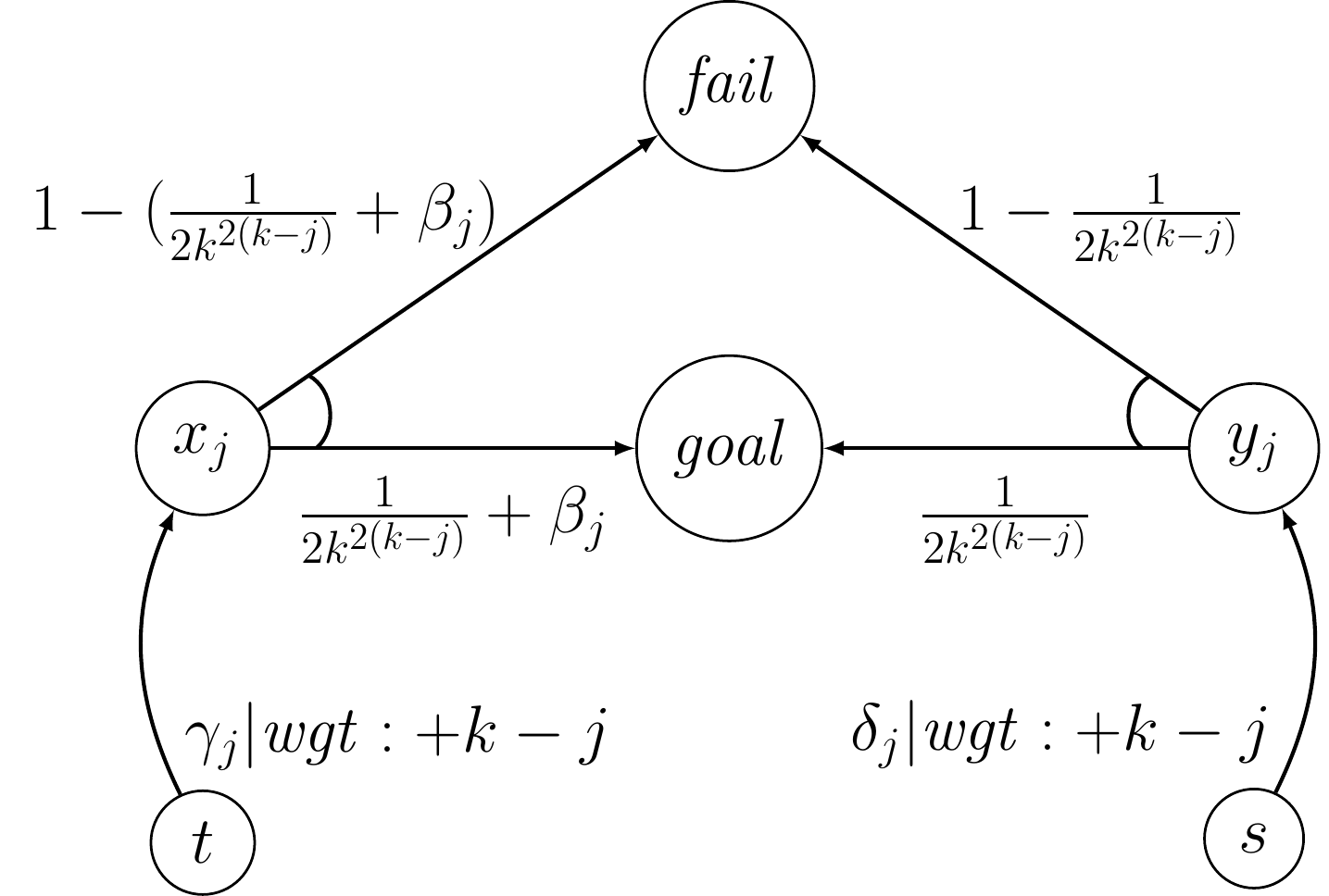}

        \caption{The gadget  $\cP_{\bar{\beta}}$ encoding  the initial values in the reduction to the threshold problem for partial expectations.    }
        \label{fig:gadget_initial_values}
\end{figure}

We furthermore  reuse the initial gadget $\cI$ and the gadget encoding the linear recurrence relation $\cG_{\bar{\alpha}}$ from the previous section.
In the gadget $\cG_{\bar{\alpha}}$, we rename the absorbing state $\trap$ to the terminal state $\goal$ which is the target state for the partial SSPP.
As before, we  glue together the three gadgets $\cI$, $\cG_{\bar{\alpha}}$ and $\cP_{\bar{\beta}}$ at states $s$, $t$, and $\goal$. 
 Let us call the full MDP that we obtain in this way $\cM$ which is depicted in  Figure \ref{fig:full_MDP}. We denote the state space by $S$.

\begin{figure}[t]
\begin{center}
\includegraphics[width=0.9\linewidth]{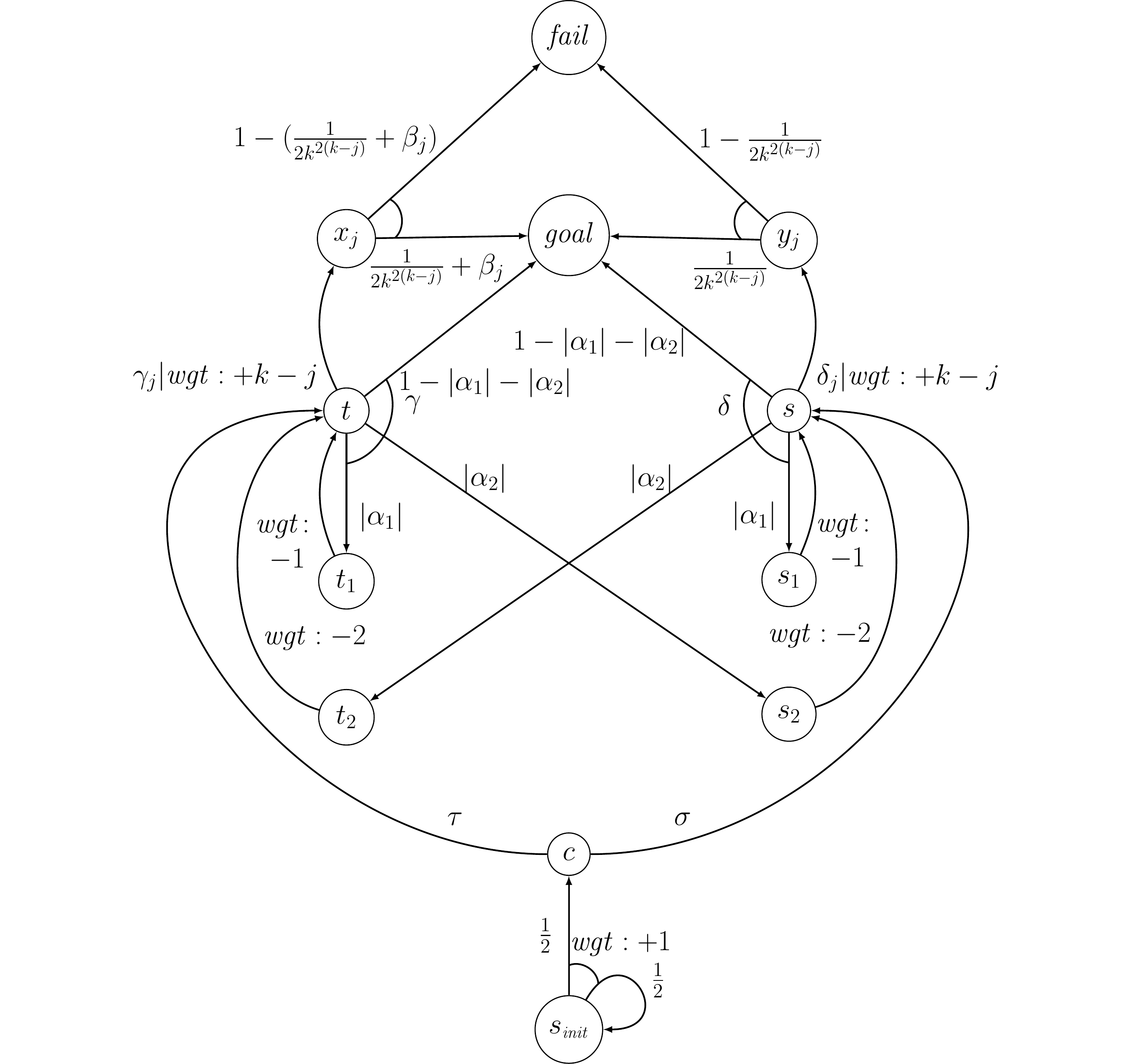}

   \end{center}

\caption[The full MDP for the Positivity-hardness proof for partial expectations.]{The full MDP for the Positivity-hardness proof for partial expectations. The MDP contains the upper part for all $0\leq j \leq k-1$. The middle part is depicted for $k=2$, $\alpha_1\geq 0$, and $\alpha_2<0$.}\label{fig:full_MDP}
\end{figure}

 The  somewhat complicated choices of probability values lead to the following lemma  showing the correct interplay between the gadgets constructed via straight-forward computations.

\begin{lemma}\label{lem:optimal_scheduler}
Consider the full MDP $\cM$. Let $0\leq j \leq k-1$. Starting with  weight $-(k{-}1){+}j$ in state $t$ or $s$,  action $\gamma_j$ and $\delta_j$  maximize the partial expectation.  For positive starting weight, $\gamma$ and $\delta$ are optimal.
\end{lemma}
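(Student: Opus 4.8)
The plan is to verify the optimal action choices in states $t$ and $s$ by directly analyzing the partial expectation contributed by each available action, exploiting the fact that the weights $k-j$ on the actions $\gamma_j,\delta_j$ and the specially chosen success probabilities $\frac{1}{2k^{2(k-j)}}(+\beta_j)$ in states $x_j,y_j$ are calibrated to make the comparison go the right way. First I would fix a starting weight $w$ in state $t$ and compute, for each enabled action, the expected value of $\oplus\Goal$ obtained from that action. Choosing $\gamma_j$ leads with weight $k-j$ to $x_j$, from which $\goal$ is reached with probability $\frac{1}{2k^{2(k-j)}}+\beta_j$; the accumulated weight on reaching $\goal$ is then $w+(k-j)$, so the contribution of $\gamma_j$ is $\left(\frac{1}{2k^{2(k-j)}}+\beta_j\right)\cdot(w+k-j)$, and similarly $\delta_j$ in $s$ yields $\frac{1}{2k^{2(k-j)}}\cdot(w+k-j)$. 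By contrast, choosing $\gamma$ (resp.\ $\delta$) routes into the recurrence gadget $\cG_{\bar\alpha}$, whose value obeys equation $(\ast)$ and the recurrence relation established in Section~\ref{sec:gadget_recurrence}.

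The key step is the case analysis on $w$. For the claimed regime $w=-(k-1)+j$ with $0\le j\le k-1$, I would show that among the initial-value actions $\gamma_0,\dots,\gamma_{k-1}$ the index matching the current weight is optimal: the product $\left(\frac{1}{2k^{2(k-j)}}+\beta_j\right)(w+k-j)$ must dominate the analogous products for every other index $j'$, and this is exactly what the geometric scaling $k^{2(k-j)}$ in the denominators is designed to secure — the rapid decay of the success probability outweighs the linear growth in the reward $w+k-j'$ for the wrong indices, while the smallness assumption $\beta_j<\frac{1}{4k^{2k+2}}$ from Assumption~\ref{ass:1} keeps the perturbation controlled. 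I would also check that for these small (possibly negative) weights the recurrence action $\gamma$ is \emph{not} competitive, since routing into $\cG_{\bar\alpha}$ at such weights cannot produce a larger partial expectation than the direct termination via $x_j$. For positive starting weight, the comparison flips: the rewards $w+k-j$ are now uniformly large, but the success probabilities in the $x_j,y_j$ branches are tiny, so the recurrence actions $\gamma,\delta$ — which retain positive weight and defer termination to a later, higher-weight visit to $t$ or $s$ — dominate; here I would invoke the structure of $(\ast)$ to argue that the optimal value grows in $w$ fast enough that $\gamma$ beats every $\gamma_j$.

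The main obstacle I anticipate is making the dominance inequalities between the different indices $j'$ fully rigorous and uniform, rather than merely plausible from the scaling. Concretely, one must bound $\left(\frac{1}{2k^{2(k-j')}}+\beta_{j'}\right)(w+k-j')$ from above for $j'\neq j$ and from below for $j'=j$, and the $\beta_{j'}$ terms together with the negative starting weights make the signs delicate; the worst comparisons will be with neighboring indices $j'=j\pm1$, where the ratio of denominators is only $k^2$ and must still beat the linear factor. I expect these to reduce to elementary but careful estimates using $\sum_i|\alpha_i|<\tfrac14$, $\beta_j<\frac{1}{4k^{2k+2}}$, and $k\ge2$, which is why the probabilities were chosen in this particular form. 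Once the per-action comparisons are settled for both weight regimes, the lemma follows immediately, and these optimal choices then feed into the recurrence computation that identifies $\PE(t,w)-\PE(s,w)$ with $u_w$ exactly as in Lemma~\ref{lem:recurrence_OC} for the one-counter case.
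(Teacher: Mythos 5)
Your plan is essentially the paper's proof: compare, action by action, the partial expectation $(1+j-j')\bigl(\tfrac{1}{2k^{2(k-j')}}+\beta_{j'}\bigr)$ obtained via $\gamma_{j'}$ when starting from weight $-(k-1)+j$ (non-positive for $j'>j$, and for $j'<j$ uniformly bounded by $\tfrac{k}{2k^{2(k-j)}\cdot k^2}+\tfrac{k}{4k^{2k+2}}<\tfrac{1}{2k^{2(k-j)}}$, so the neighboring-index comparison you feared is not actually delicate — indices above $j$ are killed by sign, indices below by the single estimate just stated), and for positive weight $w$ show that $\gamma$ yields at least $\tfrac{3}{4}w$ while any $\gamma_{j'}$ yields at most $(k+w)\bigl(\tfrac{1}{4k^{2k+2}}+\tfrac{1}{2k^2}\bigr)<\tfrac{3}{4}w$ for $k\geq 2$. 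One inaccuracy in your intuition for the positive-weight case: choosing $\gamma$ does not ``defer termination to a later, higher-weight visit to $t$ or $s$'' — returns to $t$ or $s$ inside $\cG_{\bar{\alpha}}$ occur with strictly \emph{lower} weight; $\gamma$ dominates because with probability $1-\sum_i|\alpha_i|>\tfrac{3}{4}$ it moves immediately to $\goal$ while retaining the positive accumulated weight $w$, which is exactly what gives the $\tfrac{3}{4}w$ lower bound.
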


\begin{proof}
Suppose action $\gamma_i$ is chosen in state $t$ when starting with weight $-(k-1)+j$. So, state $x_i$ is reached with weight  $-(k-1)+j+(k-i)=1+j-i$. Then the partial expectation achieved from this situation is 
\[(1+j-i)(\frac{1}{2k^{2(k-i)}}+\beta_i).\]
For $i>j$ this value is $\leq 0$ and hence $\gamma_i$ is certainly not optimal. For $i=j$, we obtain a partial expectation of
 \[\frac{1}{2k^{2(k-j)}}+\beta_j.\]
For $i<j$, state $x_i$ is reached with weight  $1+j-i\leq k$. Further,  $\beta_i \leq \frac{1}{4k^{2k+2}}$ and  $\frac{1}{2k^{2(k-i)}}\leq \frac{1}{2k^{2(k-j)}\cdot k^2}$. So, the partial expectation obtained via $\gamma_i$ is at most
\[ \frac{k}{2k^{2(k-j)} \cdot k^2}  + \frac{k}{4k^{2k+2}}<\frac{1}{2k^{2(k-j)}}.\]
So, indeed action $\gamma_j$  maximizes the partial expectation among the actions $\gamma_i$ with $0\leq i \leq k-1$ when the accumulated weight in state $t$ is $-(k-1)+j$.
The argument for state $s$ is the same with $\beta_i=0$ for all $i$.
It is easy to see that for accumulated weight $-(k-1)+j$ with $0\leq j\leq k-1$ actions $\gamma$ or $\delta$ are not optimal in state $t$ or $s$: If $\goal$ is reached immediately, the weight is not positive and otherwise states $t$ or $s$ are reached with lower accumulated weight again. The values $\beta_j$ are chosen small enough such that also a switch from state $t$ to $s$ while accumulating negative weight does not lead to  a higher partial expectation.

For positive accumulated weight $w$, the optimal partial expectation when choosing $\gamma$ first is at least $\frac{3}{4}w$ by construction and the fact that a positive value can be achieved from any possible successor state via one of the actions $\gamma_j$ and $\delta_j$ with $0\leq j \leq k-1$. Choosing $\gamma_j$ on the other hands results in a partial expectation of at most $(k+w)\cdot ( \frac{1}{4k^{2k+2}} + \frac{1}{2k^2})$ which is easily seen to be less than $\frac{3}{4}w$ as $k\geq 2$.
\end{proof}

For each weight $w$, denote by $e(t,w)$ and $e(s,w)$ the optimal partial expectation when starting in state $t$ or $s$ with accumulated weight $w$ in $\cM$ as if the respective state was reached from the initial state with weight $w$ and probability $1$.
For each weight $w\geq -k+1$, denote by $d(w)$ the difference $e(t,w)-e(s,w)$ between these optimal partial expectation when starting in state $t$ and $s$ with weight $w$. 
Comparing action $\gamma_j$ and $\delta_j$ for starting weight $-(k{-}1){+}j$, we conclude from the previous lemma that the difference between optimal values $d(-(k{-}1){+}j)$ is equal to $\beta_j$, for $0\leq j\leq k-1$.

The important fact we use next is that for partial expectations, the optimal values $e(r,w)$ for states $r\in S\setminus \{\goal\}$ and starting weights $w\in \mathbb{Z}$ satisfies 
the optimality equation ($\ast$) from Section \ref{sec:gadget_recurrence} when setting $e(\goal,w)=w$ as already shown in \cite{chen2013}:
\[
e(r,w)=\max_{\alpha\in \Act(r)} \sum_{r^\prime\in S} P(r,\alpha,r^\prime)\cdot e(r^\prime,w+\wgt(s,\alpha)).
\]
By the fact that $\cG_{\bar{\alpha}}$ encodes the given linear recurrence relation as soon as $\gamma$ and $\delta$ are the optimal actions as shown in Section \ref{sec:gadget_recurrence}, we conclude the following lemma:
\begin{lemma}
\label{lem:recurrence_PE}
Consider the linear recurrence sequence $(u_n)_{n\geq 0}$ given above by $\alpha_1,\dots, \alpha_k$ and $\beta_0,\dots,\beta_{k-1}$ and the MDP $\cM$ constructed from this sequence.
We have
 \[d({-}(k{-}1)+n)=u_n\] for all $n$ with the values $d(w)$ just defined.
\end{lemma}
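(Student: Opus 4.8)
The plan is to establish Lemma~\ref{lem:recurrence_PE} by combining the two ingredients already assembled: the boundary behaviour from Lemma~\ref{lem:optimal_scheduler}, which pins down the initial $k$ values of the difference $d$, and the recurrence structure from the gadget $\cG_{\bar{\alpha}}$, which forces $d$ to obey the linear recurrence relation for larger weights. Concretely, I would prove by induction on $n$ that $d(-(k{-}1)+n)=u_n$ for all $n\geq 0$, where $u_n$ is the given linear recurrence sequence with initial values $\beta_0,\dots,\beta_{k-1}$ and coefficients $\alpha_1,\dots,\alpha_k$.

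\textbf{Base cases.} For $0\leq n\leq k-1$, the starting weight is $-(k{-}1)+n$, and by Lemma~\ref{lem:optimal_scheduler} the optimal actions in states $t$ and $s$ are $\gamma_n$ and $\delta_n$, respectively. From the construction of the gadget $\cP_{\bar{\beta}}$, choosing $\gamma_n$ reaches $x_n$ with weight $1$ (since $-(k{-}1)+n+(k-n)=1$) and then reaches $\goal$ with probability $\frac{1}{2k^{2(k-n)}}+\beta_n$, yielding partial expectation $e(t,-(k{-}1)+n)=\frac{1}{2k^{2(k-n)}}+\beta_n$; similarly $e(s,-(k{-}1)+n)=\frac{1}{2k^{2(k-n)}}$. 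Subtracting gives $d(-(k{-}1)+n)=\beta_n=u_n$, establishing the first $k$ values.

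\textbf{Inductive step.} For $n\geq k$, the relevant starting weight $w=-(k{-}1)+n$ is positive (indeed $w\geq 1$), so Lemma~\ref{lem:optimal_scheduler} guarantees that actions $\gamma$ and $\delta$ are optimal in $t$ and $s$. Since the optimal values $e(r,w)$ satisfy the optimality equation $(\ast)$ with $e(\goal,w)=w$ (as shown in \cite{chen2013}), and the gadget $\cG_{\bar{\alpha}}$ is precisely designed so that when $\gamma$ and $\delta$ are optimal the difference satisfies $d(w)=\sum_{i=1}^k \alpha_i\, d(w-i)$ — exactly the computation carried out in Section~\ref{sec:gadget_recurrence} — I obtain $d(-(k{-}1)+n)=\sum_{i=1}^k \alpha_i\, d(-(k{-}1)+n-i)$. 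Re-indexing via $n-i$ and applying the induction hypothesis $d(-(k{-}1)+(n-i))=u_{n-i}$ for each $1\leq i\leq k$, this equals $\sum_{i=1}^k \alpha_i\, u_{n-i}=u_n$ by the defining recurrence of $(u_n)_{n\geq 0}$. This closes the induction.

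\textbf{Main obstacle.} The conceptual work is already discharged by Lemma~\ref{lem:optimal_scheduler}; the delicate point I expect to need care is ensuring that the recurrence computation of $\cG_{\bar{\alpha}}$ genuinely applies here, namely that for every weight $w\geq 1$ the optimal actions really are $\gamma$ and $\delta$ simultaneously in both $t$ and $s$, and that the value $e(\goal,w)=w$ substitution is legitimate so that the $(1-\sum|\alpha_i|)(e(\goal,w)-e(\goal,w))$ term vanishes. These rest on Lemma~\ref{lem:optimal_scheduler} and the observation in \cite{chen2013}, so the proof should reduce to invoking these facts and performing the re-indexing cleanly, with the boundary alignment $w=-(k{-}1)+n$ being the only bookkeeping that requires attention.
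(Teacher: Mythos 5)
Your proposal is correct and follows essentially the same route as the paper: it pins down $d(-(k{-}1)+j)=\beta_j$ for $0\leq j\leq k-1$ by comparing $\gamma_j$ and $\delta_j$ via Lemma~\ref{lem:optimal_scheduler}, and then uses the optimality equation $(\ast)$ with $e(\goal,w)=w$ together with the $\cG_{\bar{\alpha}}$-computation from Section~\ref{sec:gadget_recurrence} to propagate the recurrence for positive weights. The only difference is presentational: you make explicit the induction that the paper leaves implicit in its concluding sentence.
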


\paragraph{Proof of Theorem \ref{thm:positivity_PE}: computation of the threshold $\vartheta$.}
Let us now consider a run of the MDP $\cM$. For any $w>0$, state $c$ is reached with accumulated weight $w$ with positive probability. As before, an optimal scheduler  has to decide whether the partial expectation when starting with weight $w$ is better in state $s$ or $t$: 
 Action $\tau$ is optimal in $c$ for accumulated weight $w$ if and only if $d(w)\geq 0$.
 Once $t$ or $s$ is reached, the optimal actions are given by Lemma \ref{lem:optimal_scheduler}.
Let ~$\sched$ be the scheduler that always chooses $\tau$ in $c$ and actions $\gamma, \gamma_0, \dots, \gamma_{k-1}, \delta, \dots$ as described in Lemma \ref{lem:optimal_scheduler}.
We conclude that $\sched$ is optimal if and only if the given linear recurrence sequence is non-negative.
The remaining step is hence in our reduction is hence to prove that the partial expectation under $\sched$ is rational and can be computed in polynomial time:

\begin{lemma}\label{thm:threshold_PE}
Let $\sched$ be the scheduler for the constructed MDP $\cM$ always choosing $\tau$ in $c$ and actions $\gamma, \gamma_0, \dots, \gamma_{k-1}, \delta, \dots$ as described in
 Lemma \ref{lem:optimal_scheduler}. The value $\PE^{\sched}_{\cM}$ is  rational and  computable in polynomial time.
\end{lemma}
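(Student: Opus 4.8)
The plan is to follow the template of Lemma~\ref{lem:compute_threshold_OC}, reducing the computation of $\PE^{\sched}_{\cM}$ to the evaluation of a convergent matrix series. Since the initial gadget $\cI$ is reused, the choice state $c$ is reached with accumulated weight $w$ with probability $(1/2)^w$ for every $w\geq 1$, and $\sched$ then moves to $t$ and behaves optimally. Writing $e(t,w)$ for the optimal partial expectation from $t$ with accumulated weight $w$, we therefore have $\PE^{\sched}_{\cM}=\sum_{w\geq 1}(1/2)^w e(t,w)$, and it suffices to show this sum is rational and polynomial-time computable.

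The essential new difficulty compared with the termination-probability case is that the values $e(t,w)$ are unbounded: at $\goal$ the partial expectation equals the accumulated weight $w$ rather than $0$, so the optimality equation carries an inhomogeneous term. First I would record, using Lemma~\ref{lem:optimal_scheduler} (so that $\gamma$ and $\delta$ are optimal for all $w\geq 1$) together with $e(\goal,w)=w$, the recurrences $e(t,w)=(1-\alpha)w+\sum_{\alpha_i>0}\alpha_i e(t,w-i)+\sum_{\alpha_i<0}|\alpha_i|e(s,w-i)$ and symmetrically for $e(s,w)$, where $\alpha=\sum_i|\alpha_i|$. The clean way to remove the growing term is to substitute $e(t,w)=w+\phi(t,w)$ and $e(s,w)=w+\phi(s,w)$; a short computation shows the linear-in-$w$ part cancels and $\phi$ satisfies a recurrence with the \emph{constant} inhomogeneity $-\sum_i|\alpha_i|\,i$, namely $\phi(t,w)=-\sum_i|\alpha_i|\,i+\sum_{\alpha_i>0}\alpha_i\phi(t,w-i)+\sum_{\alpha_i<0}|\alpha_i|\phi(s,w-i)$ and symmetrically for $\phi(s,w)$.

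Next I would group the relevant values into column vectors $\psi_n\in\mathbb{Q}^{2k}$ collecting $\phi(t,\cdot)$ and $\phi(s,\cdot)$ over one block of $k$ consecutive weight levels, exactly as the vectors $v_n$ in Lemma~\ref{lem:compute_threshold_OC}. Using the reachability probabilities of an auxiliary Markov chain as in the proof of Lemma~\ref{lem:compute_threshold_OC} (all rational and polynomial-time computable), the recurrence for $\phi$ takes the matrix form $\psi_{n+1}=A\psi_n+c_0$ with the same matrix $A\in\mathbb{Q}^{2k\times 2k}$ as before and a constant vector $c_0$, both computable in polynomial time; the base vector $\psi_0$ is obtained from the explicit values of Lemma~\ref{lem:optimal_scheduler} (with the weight shift $d(-(k{-}1)+n)=u_n$ of Lemma~\ref{lem:recurrence_PE}). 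To fold in the constant term I would augment by a single coordinate holding the value $1$, setting $\tilde\psi_n=(\psi_n^\top,1)^\top$ and $\tilde A=\left(\begin{smallmatrix}A&c_0\\0&1\end{smallmatrix}\right)$, so that $\tilde\psi_n=\tilde A^{\,n}\tilde\psi_0$. Then $\PE^{\sched}_{\cM}=\sum_{w\geq 1}(1/2)^w w+\sum_{w\geq 1}(1/2)^w\phi(t,w)=2+c\cdot\sum_{n\geq 0}\bigl(\tfrac{1}{2^k}\tilde A\bigr)^{n}\tilde\psi_0-\phi(t,0)$ for a suitable rational row vector $c$ picking out the $t$-entries with weights $1/2,\dots,1/2^k$ and a zero in the augmented coordinate; the subtracted $\phi(t,0)$ removes the spurious $w=0$ summand, exactly as $p(t,0)$ is removed in Lemma~\ref{lem:compute_threshold_OC}.

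It remains to argue convergence to a rational matrix. As in Lemma~\ref{lem:compute_threshold_OC}, the row sums of $A$ equal $\sum_i|\alpha_i|<1$, so $\Vert A\Vert_\infty<1$ and in particular the spectral radius of $A$ is below $1$; this already makes $\psi_n$ converge, so $\phi$ is bounded and the series $\sum_{w\geq 1}(1/2)^w\phi(t,w)$ converges absolutely. Since $\tilde A$ is block upper triangular with diagonal blocks $A$ and $1$, its eigenvalues are those of $A$ together with $1$, whence every eigenvalue of $\tfrac{1}{2^k}\tilde A$ has modulus strictly below $1$. The Neumann series therefore converges to $(I_{2k+1}-\tfrac{1}{2^k}\tilde A)^{-1}$, and $\PE^{\sched}_{\cM}=2+c\cdot(I_{2k+1}-\tfrac{1}{2^k}\tilde A)^{-1}\tilde\psi_0-\phi(t,0)$ is a rational computable in polynomial time. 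I expect the main obstacle to be the bookkeeping around the inhomogeneous term: verifying that the linear part cancels exactly under $e=w+\phi$, that the residual constant vector $c_0$, the base vector $\psi_0$, and the correction term are all assembled consistently with the weight indexing; once this is in place, the convergence and polynomial-time claims follow essentially verbatim from the termination-probability argument.
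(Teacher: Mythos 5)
Your proof is correct, and while it shares the paper's overall skeleton (reduce $\PE^{\sched}_{\cM}=\sum_{w\geq 1}(1/2)^w e(t,w)$ to a convergent matrix series built from the auxiliary Markov chain $\cC$, with convergence guaranteed by $\Vert A\Vert_\infty<1$), it handles the central new difficulty --- the inhomogeneous term coming from $e(\goal,w)=w$ --- by a genuinely different route. The paper keeps the values $e(t,w)$ themselves, obtains the affine recursion $v_n = A v_{n-1}+na+b$, solves it explicitly as $v_n=A^{n+1}v_{-1}+\sum_{i=0}^n(n-i)A^ia+\sum_{i=0}^nA^ib$, and then evaluates three separate matrix series via closed-form identities proved by induction (for $\sum_n(\tfrac{1}{2^k})^nA^{n+1}$, $\sum_n(\tfrac{1}{2^k})^n\sum_{i\leq n}A^i$, and $\sum_n(\tfrac{1}{2^k})^n\sum_{i\leq n}(n-i)A^i$). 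You instead cancel the growth at the source via $e=w+\phi$ --- your verification that the linear part drops out and leaves the constant inhomogeneity $-\sum_i|\alpha_i|\,i$ is exactly right --- and then homogenize by the standard augmentation $\tilde A=\bigl(\begin{smallmatrix}A&c_0\\0&1\end{smallmatrix}\bigr)$, so that everything collapses to a single Neumann series $(I_{2k+1}-\tfrac{1}{2^k}\tilde A)^{-1}$; your spectral argument (eigenvalues of $\tilde A$ are those of $A$ together with $1$, hence $\rho(\tfrac{1}{2^k}\tilde A)\leq\tfrac{1}{2^k}<1$) is sound. Your route buys a notably shorter computation, avoiding the paper's three inductive series identities; the paper's buys working directly with the quantities $e(t,w)$ and only ever invoking reachability probabilities of $\cC$. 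Two bookkeeping points you flag yourself, neither a gap: the constant vector $c_0$ is not quite a vector of reachability probabilities of $\cC$ but rather $-\sum_i|\alpha_i|\,i$ times expected visit counts of the transient states of $\cC$ (equivalently, sums of reachability probabilities), still rational and polynomial-time computable via the fundamental matrix; and with your block convention $\{nk,\dots,nk+k-1\}$ the base vector $\psi_0$ at weights $\{0,\dots,k-1\}$ is not directly read off from Lemma \ref{lem:optimal_scheduler} (only the weights $\{-k+1,\dots,0\}$ are explicit there) but requires one further application of the recurrence --- alternatively you could adopt the paper's convention with base block $\{-k+1,\dots,0\}$ and blocks $\{nk+1,\dots,nk+k\}$, which also removes the need for the $-\phi(t,0)$ correction.
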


\begin{proof}
Recall that the scheduler $\sched$ chooses $\gamma$ and $\delta$, respectively, as long as the accumulated weight is positive.  For an accumulated weight of $-(k-1)+j$ for $0\leq j \leq k-1$, it chooses actions~$\gamma_j$ and $\delta_j$, respectively. 

Analogously to the proof of Lemma \ref{lem:compute_threshold_OC}, we want to recursively express the partial expectations under $\sched$ starting from $t$ or $s$ with some positive accumulated weight $n\in \mathbb{N}$ which we again denote by $e(t,n)$ and $e(s,n)$, respectively.
In order to do so, we reuse the following Markov chain $\cC$ from Lemma \ref{lem:compute_threshold_OC}
also depicted in Figure \ref{fig:Markov_chain} which we briefly recall here:
The Markov chain $\cC$ has $5k$ states named $t_{-k+1}$, \dots, $t_{+k}$, $s_{-k+1}$, \dots, $s_{+k}$, and $\goal_{+1}$, \dots, $\goal_{+k}$. 
States $t_{-k+1}$, \dots, $t_{0}$, $s_{-k+1}$, \dots, $s_{0}$, and $\goal_{+1}$, \dots, $\goal_{+k}$ are absorbing.
 For $0<i,j\leq k$, there are transitions from $t_{+i}$ to $t_{+i-j}$ with probability $\alpha_j$ if $\alpha_j>0$, to $s_{+i-j}$ 
 with probability $|\alpha_j|$ if $\alpha_j<0$, and to $\goal_{+i}$ with probability $1-|\alpha_1|-\ldots-|\alpha_k|$. Transitions from $s_{+i}$ are defined analogously.

 The idea behind this Markov chain is that the reachability probabilities describe how, for arbitrary $n\in \mathbb{N}$ and $1\leq i \leq k$, 
 the values $e(t,nk+i)$ and $e(s,nk+i)$ depend on $n$ and  the values $e(t,(n-1)k+j)$ and $e(s,(n-1)k+j)$ for $1\leq j \leq k$. 
 The transitions in $\cC$ behave as $\gamma$ and $\delta$ in $\cM$, but the decrease in the accumulated weight is explicitly encoded into the state space. 
 Namely, for  $n\in\mathbb{N}$ and $0<i\leq k$, we have
{ \begin{align}\label{eqn:cC}
  e(t,nk+i) = \sum_{j=1}^k &\big(\Pr_{\cC, t_{+i}} (\lozenge t_{-k+j})\cdot e(t,(n{-}1)k+j) \nonumber 
  +\Pr_{\cC, t_{+i}} (\lozenge s_{-k+j})\cdot e(s,(n{-}1)k+j) \big) \\
  + \sum_{j=1}^k & \Pr_{\cC, t_{+i}} (\lozenge \goal_{+j}) \cdot (nk+j)    \tag{$\ast\ast$}
\end{align}}
and analogously for $ e(s,nk+i)$.
We now group the optimal values together in the following column vectors
\[v_n = (e(t,nk+k) , e(t,nk+k-1), \ldots, e(t,nk+1), e(s,nk+k) , \ldots, e(s,nk+1))^\top \]
for $n\in \mathbb{N}$. In other words, this vector contains the optimal values for the partial expectation when starting in $t$ or $s$ with an accumulated weight from  $\{nk+1,\dots,nk+k\}$. 
Further, we define the vector containing the optimal values for weights in $\{-k+1,\dots,0\}$ which are the least values of accumulated weight reachable under scheduler $\sched$.
\[v_{-1} = (e(t,0) , e(t,-1), \ldots, e(t,-k+1), e(s,0) , e(s,-1), \ldots, e(s,-k+1))^\top. \] 
As we have seen, these values are given as follows:
\[e(t,-k+1+j)=\frac{1}{2k^{2(k-j)}}+\beta_j \text{ and }e(s,-k+1+j)=\frac{1}{2k^{2(k-j)}}\] for $0\leq j \leq k-1$.

\allowdisplaybreaks

As the reachability probabilities in $\cC$ are rational and computable in polynomial time, we conclude from \eqref{eqn:cC} 
that there are a matrix $A\in \mathbb{Q}^{2k\times2k}$, and vectors $a$ and $b$ in $\mathbb{Q}^{2k}$ computable in polynomial time such that 
\[ v_n = A v_{n-1} + n a + b,\]
for all $n\in\mathbb{N}$. We claim that the following explicit representation for $n\geq -1$ satisfies this recursion:
\[ v_n = A^{n+1} v_{-1} + \sum_{i=0}^n (n-i) A^i a + \sum_{i=0}^n A^i b.\]
We show this by induction:
Clearly, this representation yields the correct value for $v_{-1}$. So, assume $v_n = A^{n+1} v_{-1} + \sum_{i=0}^n (n-i) A^i a + \sum_{i=0}^n A^i b$. Then,
\begin{align*}
v_{n+1} & = A (A^{n+1} v_{-1} +  \sum_{i=0}^n (n-i) A^i a +  \sum_{i=0}^n A^i b) +(n+1) a + b \\
& = A^{n+2} v_{-1} + \left(\sum_{i=0}^{n} (n-i) A^{i+1} a\right) + (n+1) A^0 a + \left(\sum_{i=1}^{n+1} A^i b \right) + A^0 b \\
& = A^{n+2} v_{-1} + \sum_{i=0}^{n+1} (n+1-i) A^i a + \sum_{i=0}^{n+1} A^i b.
\end{align*}
So, we have an explicit representation for $v_n$.
The value we are interested in is 
\[\PE^\sched_{\cM} = \sum_{\ell=1}^\infty (1/2)^\ell e(t,\ell).\]

\noindent
Let $c=(\frac{1}{2^k} , \frac{1}{2^{k-1}} , \dots, \frac{1}{2^1} , 0 ,\dots, 0)$. 
Then, 
\[(\frac{1}{2^k})^n c\cdot v_n= \sum_{i=1}^{k}  \frac{1}{2^{nk+i}} e(t,nk+i).\]

\noindent
Hence, we can write 
\begin{align*}
&\PE^\sched_{\cM} = \sum_{n=0}^\infty (\frac{1}{2^k})^n c \cdot v_n = c \cdot \sum_{n=0}^\infty (\frac{1}{2^k})^n v_n \\
= {} & c \cdot \sum_{n=0}^\infty (\frac{1}{2^k})^n (A^{n+1} v_{-1} + \sum_{i=0}^n (n-i) A^i a + \sum_{i=0}^n A^i b) \\
= {} & c \cdot \big(  (\sum_{n=0}^\infty (\frac{1}{2^k})^n A^{n+1}) v_{-1}  + (\sum_{n=0}^\infty (\frac{1}{2^k})^n \sum_{i=0}^n (n-i) A^i) a + (\sum_{n=0}^\infty (\frac{1}{2^k})^n \sum_{i=0}^n A^i) b              \big).
\end{align*}

We claim that all of the matrix series involved converge to rational matrices. As in the proof of Lemma \ref{lem:compute_threshold_OC}, we observe that the maximal row sum in $A$ is at most $|\alpha_1|{+}\ldots{+}|\alpha_k|<1$ because the rows of the matrix contain exactly the probabilities to reach $t_0$, \dots $t_{-k+1}$, $s_0$, \dots, and  $s_{-k+1}$ from a state $t_{+i}$ or $s_{+i}$ in $\cC$ for $1\leq i \leq k$. But the probability to reach $\goal_{+i}$ from these states is already $1{-}|\alpha_1|{-}\ldots{-}|\alpha_k|$. Hence, $\Vert A \Vert_{\infty}$, the operator norm induced by the maximum norm $\Vert \cdot \Vert_\infty$, which equals $\max_{i} \sum_{j=1}^{2k} |A_{ij}|$, is less than $1$.
So, of course also $\Vert \frac{1}{2^k} A \Vert_{\infty}<1$ and hence the  Neumann series $\sum_{n=0}^\infty (\frac{1}{2^k} A)^{n}$ converges to $(I_{2k}-\frac{1}{2^k} A)^{-1}$ where $I_{2k}$ is the identity matrix of size $2k{\times }2k$. So,
\begin{align*}
 \sum_{n=0}^\infty (\frac{1}{2^k})^n A^{n+1} = A \sum_{n=0}^\infty (\frac{1}{2^k} A)^{n} = A(I_{2k}-\frac{1}{2^k} A)^{-1}.
\end{align*}
Note that $\Vert A \Vert_\infty <1$ also implies that $I_{2k}-A$ is invertible. We observe that for all $n$,
\[\sum_{i=0}^n A^i = (I_{2k}-A)^{-1} (I_{2k} - A^{n+1}) \]
which is shown by straight-forward induction.
Therefore,
\begin{align*}
\sum_{n=0}^\infty (\frac{1}{2^k})^n \sum_{i=0}^n A^i & =  (I_{2k}-A)^{-1} \left( \sum_{n=0}^\infty (\frac{1}{2^k})^n I_{2k} - A  \sum_{n=0}^\infty (\frac{1}{2^k}A)^n   \right) \\
 & =  (I_{2k}-A)^{-1} \left(\frac{2^k}{2^k{-}1} I_{2k} - A(I_{2k}-\frac{1}{2^k} A)^{-1}\right).
\end{align*}
Finally, we show by induction that \[\sum_{i=0}^n (n-i) A^i = (I_{2k} - A)^{-2} (A^{n+1}-A+n(I_{2k}-A)).\]
This is equivalent to \[ (I_{2k} - A)^2 \sum_{i=0}^n (n-i) A^i = A^{n+1}-A+n(I_{2k}-A).\]
For $n=0$, both sides evaluate to $0$. So, we assume the claim holds for $n$.
\begin{align*}
(I_{2k} - A)^2 \sum_{i=0}^{n+1} (n+1-i) A^i ={}& (I_{2k} - A)^2 \sum_{i=0}^{n} (n-i) A^i + (I_{2k} - A)^2 \sum_{i=0}^{n}A^i \\
 \overset{\mathrm{IH}}{=}{} &  A^{n+1}-A+n(I_{2k}-A) + (I_{2k} - A)^2 \sum_{i=0}^{n}A^i \\
 = {}&  A^{n+1}-A+n(I_{2k}-A) + (I_{2k} - A)^2 (I_{2k}-A)^{-1} (I_{2k} - A^{n+1})\\
 = {}& A^{n+1}-A+n(I_{2k}-A) + I_{2k}-A - A^{n+1} + A^{n+2} \\
 = {}& A^{n+2}- A + (n+1) (I_{2k}-A) .
\end{align*}
The remaining series is the following:
\begin{align*}
\sum_{n=0}^\infty (\frac{1}{2^k})^n \sum_{i=0}^n (n-i) A^i  
={} &  \sum_{n=0}^\infty (\frac{1}{2^k})^n (I_{2k} - A)^{-2} (A^{n+1}-A+n(I_{2k}-A)) \\
={}& (I_{2k} - A)^{-2} \left(\sum_{n=0}^\infty (\frac{1}{2^k})^n A^{n+1} -  \sum_{n=0}^\infty (\frac{1}{2^k})^nA + \sum_{n=0}^\infty (\frac{1}{2^k})^n n(I_{2k}-A) \right) \\
={}&(I_{2k} - A)^{-2} \left(  A(I_{2k}-\frac{1}{2^k} A)^{-1} - \frac{2^k}{2^k{-}1}A+ \frac{2^k}{(2^k{-}1)^2}(I_{2k}-A) \right) .
\end{align*}

We conclude that all expressions in the representation of ${\PE}^\sched_{\cM}$ above are rational and computable in polynomial time. 
\end{proof}

As we have also seen before, the originally given linear recurrence sequence contains a negative member  if and only if the scheduler $\sched$ is not optimal:
If the sequence is non-negative, the scheduler $\sched$ is optimal as it behaves optimally once $t$ or $s$ has been reached and as it always moves to 
state $t$ instead of $s$ from $\mathit{choice}$ for any value $w$ of accumulated weight, which is optimal  by Lemma \ref{lem:recurrence_PE}.
If the sequence is negative at position $n$, then a scheduler that behaves like~$\sched$, except for choosing $\sigma$ for accumulated weight 
$w=n-(k{-}1)$ in state $\mathit{choice}$ is better than~$\sched$ as $e(s,w)>e(t,w)$ in this case. Note that for positions $n=0,\dots,k{-}1$, we may assume that~$u_n$ is non-negative as otherwise the sequence trivially becomes negative.
So, the originally given linear recurrence sequence contains a negative member  if and only if  
${\PE}^{\max}_{\cM} > {\PE}^\sched_{\cM}$ for the MDP $\cM$ constructed from the linear recurrence sequence in polynomial time above. 
This finishes the proof of Theorem \ref{thm:positivity_PE}.

\paragraph{Conditional SSPP.}
For the Positivity-hardness of the threshold problem for  conditional expectations, we provide a reduction from the threshold problem for  partial expectations in the following lemma. Note that a reduction in the other direction is provided in \cite{fossacs2019} rendering the two problems polynomial-time inter-reducible.

\begin{lemma}\label{lem_app:pe_ce_inter-reducible}
The threshold problem for the partial SSPP is polynomial-time reducible to the threshold problem of the  conditional SSPP. 
\end{lemma}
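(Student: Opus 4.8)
The plan is to reduce the partial SSPP threshold problem to the conditional SSPP threshold problem. Recall that for a scheduler $\sched$ reaching $\Goal$ with positive probability, the conditional expectation is $\CE^{\sched}_{\cM}=\mathbb{E}^{\sched}_{\cM}(\oplus\Goal)/\Pr^{\sched}_{\cM}(\lozenge\Goal)$, so it differs from the partial expectation by the normalizing factor $\Pr^{\sched}_{\cM}(\lozenge\Goal)$. The core difficulty is that in the conditional SSPP this denominator is scheduler-dependent and an optimizing scheduler can manipulate it, whereas the partial expectation has no such denominator. My approach is therefore to modify the input MDP $\cM$ so that, in the modified MDP $\cM'$, every scheduler reaches $\Goal$ with the \emph{same} probability, or more precisely so that the conditional expectation in $\cM'$ becomes an affine function of the partial expectation in $\cM$.

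First I would take the MDP $\cM$ with goal set $\Goal$ from the partial SSPP instance and add a fresh initial state $\sinit'$ together with a fresh absorbing goal state. From $\sinit'$ a single action splits, with probability $\tfrac12$ each, between the original initial state $\sinit$ and a new ``reference'' branch. The reference branch should reach $\Goal$ with probability $1$ accumulating a fixed, known weight (for instance $0$, or a large constant chosen later to control signs), contributing a controlled amount to both numerator and denominator. The point of this gadget is that the reference branch is reached with a fixed probability independent of the scheduler and always terminates in $\Goal$, so the \emph{total} probability of reaching $\Goal$ in $\cM'$ under any scheduler is bounded below by $\tfrac12$ and, if the original MDP is also arranged to reach $\Goal$ almost surely along its branch, equals a scheduler-independent constant. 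In the constructions of this paper the relevant MDPs reach the target almost surely along the non-reference branch, so I would exploit this: if $\Pr^{\sched}_{\cM}(\lozenge\Goal)=1$ for all $\sched$ of interest, then $\Pr^{\sched}_{\cM'}(\lozenge\Goal)=1$ as well, and $\CE^{\sched}_{\cM'}=\mathbb{E}^{\sched}_{\cM'}(\oplus\Goal)=\tfrac12\,\PE^{\sched}_{\cM}+\tfrac12 R$, where $R$ is the fixed contribution of the reference branch. Maximizing $\CE_{\cM'}$ is then equivalent to maximizing $\PE_{\cM}$, and the threshold $\vartheta$ translates to $\vartheta'=\tfrac12\vartheta+\tfrac12 R$.

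The main obstacle is handling schedulers that do \emph{not} reach $\Goal$ almost surely, since for general MDPs the conditional expectation's denominator genuinely varies. The cleanest way around this is to observe that the reduction only needs to be correct on the specific family of MDPs arising in the partial-SSPP hardness proof (Theorem~\ref{thm:positivity_PE}), where termination in $\Goal$ is guaranteed; alternatively, one can add to each branch a tiny escape to $\Goal$ that forces almost-sure reachability while perturbing the partial expectation only negligibly, and then argue the threshold comparison is preserved. I expect the delicate bookkeeping to be in choosing the reference weight $R$ and, if needed, a scaling constant large enough that the affine correspondence between the two thresholds is exact and computable in polynomial time, while keeping all weights and probabilities rational and of polynomial encoding size. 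Verifying that the supremum over schedulers in $\cM'$ is attained by (or approached by) exactly the lifts of the schedulers relevant in $\cM$, and that no scheduler gains by behaving differently on the reference branch (which it cannot, since that branch has no nondeterminism), then completes the correctness argument.
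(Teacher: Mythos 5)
There is a genuine gap in your proposal. Your half--half initial gadget is exactly the paper's first move, but your handling of the scheduler-dependent denominator $\Pr^{\sched}_{\cM}(\lozenge \Goal)$ does not work. Your ``cleanest way around'' rests on a false premise: the MDPs arising in the partial-SSPP hardness proof (Theorem \ref{thm:positivity_PE}) do \emph{not} reach $\goal$ almost surely --- the gadget $\cP_{\bar{\beta}}$ sends each path from $x_j$ or $y_j$ to a terminal $\fail$ state with probability close to $1$, and this is essential to the construction. Indeed, if $\Goal$ were reached almost surely under every scheduler, the partial expectation would coincide with the classical expected accumulated weight, which is computable in polynomial time, so a reduction restricted to such instances would be useless for the hardness chain. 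Your fallback (adding a ``tiny escape'' to $\Goal$ and arguing the perturbation is negligible) also fails: to preserve the strict threshold comparison you would need a computable lower bound on the gap $|\PE^{\max}_{\cM}-\vartheta|$ whenever it is nonzero, but the optimal value can be irrational and arbitrarily close to $\vartheta$, and bounding this gap is essentially as hard as the problem itself (compare Remark \ref{rem:strict} on why strict and non-strict thresholds cannot be easily interchanged).

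The missing idea in the paper's proof is to embed the threshold itself into the weights so that the varying denominator cancels in the threshold comparison rather than being forced to be constant. Concretely: rename $\goal$ to $\goal'$, add a fresh terminal $\goal$, give $\goal'$ a single action to $\goal$ with weight $+\vartheta$, and add the new initial state with the half--half split (weight $0$) between $\sinit$ and $\goal$. Then for every scheduler $\sched$,
\begin{equation*}
\CE^{\sched}_{\cN}
=\frac{\PE^{\sched}_{\cM}+\Pr^{\sched}_{\cM}(\lozenge \goal)\cdot\vartheta}{1+\Pr^{\sched}_{\cM}(\lozenge \goal)},
\end{equation*}
and $\CE^{\sched}_{\cN}>\vartheta$ holds if and only if $\PE^{\sched}_{\cM}>\vartheta$, \emph{regardless} of the value of $\Pr^{\sched}_{\cM}(\lozenge \goal)$; taking suprema over schedulers gives the reduction with no almost-sure-reachability assumption and no perturbation argument. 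Your reference-branch idea (contributing a fixed amount $R$ to numerator and denominator) cannot achieve this cancellation because the correction it applies is the same additive constant for every scheduler, whereas the needed correction must scale with the scheduler-dependent probability of reaching the goal --- which is precisely what attaching weight $+\vartheta$ to the old goal accomplishes.
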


\begin{proof}
Let $\cM$ be an MDP with a designated terminal target state $\goal$ and let $\vartheta$ be a rational number. We construct an MDP $\cN$ such that ${\PE}^{\max}_{\cM}>\vartheta$ if and only if ${\CE}^{\max}_{\cN}>\vartheta$.
We obtain $\cN$ by adding a new initial state $\sinit^\prime$, renaming the state $\goal$ to $\goal^\prime$, and adding a new state $\goal$ to $\cM$. In $\sinit^\prime$, one action with weight $0$ is enabled leading to the old initial state $\sinit$ and to $\goal$ with probability $1/2$ each. From $\goal^\prime$ there is one new action leading to $\goal$ with probability $1$ and weight $+\vartheta$. 

Each scheduler $\sched$ for $\cM$ can be seen as a scheduler for $\cN$ and vice versa.
Now, we observe that  for  any scheduler $\sched$,  
\[\CE^\sched_\cN= \frac{1/2(\PE^\sched_\cM+\Pr_\cM^\sched(\lozenge \goal) \vartheta)}{1/2+1/2\Pr^\sched_\cM(\lozenge \goal)}=\frac{\PE^\sched_\cM+\Pr_\cM^\sched(\lozenge \goal) \vartheta}{1+\Pr_\cM^\sched(\lozenge \goal)}.\]
Hence, ${\PE}^{\max}_{\cM}>\vartheta$ if and only if ${\CE}^{\max}_{\cN}>\vartheta$.
\end{proof}

Together with the Positivity-hardness of the threshold problem for partial expectations (Theorem \ref{thm:positivity_PE}), we conclude:

\begin{theorem}\label{thm:threshold_CE}
The Positivity problem is reducible in polynomial time to the following problem: Given an MDP $\cM$ and a rational $\vartheta$, decide whether ${\CE}^{\max}_{\cM}>\vartheta$.
\end{theorem}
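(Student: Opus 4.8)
The plan is to obtain this result simply by composing the two reductions that have just been established, exploiting the fact that polynomial-time many-one reductions are closed under composition. By Theorem \ref{thm:positivity_PE}, the Positivity problem reduces in polynomial time to the question whether $\PE^{\max}_{\cM}>\vartheta$ for a constructed MDP $\cM$ and a rational threshold $\vartheta$. By Lemma \ref{lem_app:pe_ce_inter-reducible}, this partial-expectation threshold problem in turn reduces in polynomial time to the conditional-expectation threshold problem, i.e.\ to the question whether $\CE^{\max}_{\cN}>\vartheta$ for a suitable MDP $\cN$. Chaining these two reductions yields a single polynomial-time reduction from the Positivity problem directly to the threshold problem for conditional expectations, which is exactly the claim.

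Concretely, first I would take the MDP $\cM$ and threshold $\vartheta$ produced by the reduction of Theorem \ref{thm:positivity_PE} from a given linear recurrence sequence. Then I would feed this pair $(\cM,\vartheta)$ into the construction of Lemma \ref{lem_app:pe_ce_inter-reducible}: add a fresh initial state $\sinit^\prime$ that branches with probability $1/2$ each into the old initial state $\sinit$ and into a freshly added terminal goal $\goal$, rename the old target to $\goal^\prime$, and reroute $\goal^\prime$ to the new $\goal$ via a single action of weight $+\vartheta$. The identity $\CE^{\sched}_{\cN}=(\PE^{\sched}_{\cM}+\Pr^{\sched}_{\cM}(\lozenge\goal)\cdot\vartheta)/(1+\Pr^{\sched}_{\cM}(\lozenge\goal))$ established in that lemma guarantees that $\CE^{\max}_{\cN}>\vartheta$ holds precisely when $\PE^{\max}_{\cM}>\vartheta$, which by Theorem \ref{thm:positivity_PE} holds precisely when the given linear recurrence sequence has a negative member.

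I expect no genuine obstacle here: all of the substantive work---the modular gadget construction, the encoding of the linear recurrence relation in $\cG_{\bar\alpha}$ and of the initial values in $\cP_{\bar\beta}$, the verification of the optimal scheduler behaviour in Lemma \ref{lem:optimal_scheduler}, and the polynomial-time computability of the threshold $\vartheta$ in Lemma \ref{thm:threshold_PE}---has already been carried out upstream. The only point that remains to be checked is that the intermediate instance $(\cM,\vartheta)$ has size polynomial in the original input and that the gadget of Lemma \ref{lem_app:pe_ce_inter-reducible} enlarges it only by a constant number of states and edges, so that the composed reduction still runs in polynomial time; this is immediate from the explicit constructions involved.
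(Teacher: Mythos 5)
Your proposal is correct and follows exactly the paper's own route: the paper derives Theorem~\ref{thm:threshold_CE} by composing the Positivity-hardness of the partial-expectation threshold problem (Theorem~\ref{thm:positivity_PE}) with the reduction of Lemma~\ref{lem_app:pe_ce_inter-reducible}, using precisely the construction and the identity $\CE^{\sched}_{\cN}=\bigl(\PE^{\sched}_{\cM}+\Pr^{\sched}_{\cM}(\lozenge\goal)\cdot\vartheta\bigr)/\bigl(1+\Pr^{\sched}_{\cM}(\lozenge\goal)\bigr)$ that you cite. Your additional remark about the polynomial size of the composed reduction is a harmless elaboration of what the paper leaves implicit.
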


\paragraph{Two-sided partial SSPP.}
To conclude this section, we prove the Positivity-hardness of a two-sided version of the partial SSPP with two non-negative weight functions. 
The key idea is that, instead of using arbitrary integer weights, we can  simulate the non-monotonic behavior of the accumulated weight along a path in the partial SSPP with arbitrary weights with two non-negative weight functions.
In the definition of the random variable $\oplus \Goal$, we can replace the choice that paths not reaching $\Goal$ are assigned weight $0$  by a second weight function.
Let $\cM=(S,\Act,\Pr,\sinit, \wgt_\goal, \wgt_\fail, \goal, \fail)$ be an MDP with two designated terminal states $\goal$ and $\fail$ and two non-negative weight functions $\wgt_\goal\colon S \times \Act\to \mathbb{N}$ and $\wgt_\fail\colon S\times \Act \to \mathbb{N}$.  Assume that the probability $\Pr_{\cM,\sinit}^{\min}(\lozenge \{\goal,\fail\})=1$.
Define the following random variable~$X$ on maximal paths $\zeta$:
\[
X(\zeta)=\begin{cases}
\wgt_\goal(\zeta) & \text{if $\zeta\vDash\lozenge\goal$,} \\
\wgt_\fail(\zeta) & \text{if $\zeta\vDash\lozenge\fail$.}
\end{cases}
\]
Due to the assumption that $\goal$ or $\fail$ is reached almost surely under any scheduler, the expected value $\mathbb{E}^{\sched}_{\cM,\sinit}(X)$ is well-defined for all schedulers $\sched$ for $\cM$. 
We call the value $\mathbb{E}^{\max}_{\cM,\sinit}(X)=\sup_\sched \mathbb{E}^{\sched}_{\cM,\sinit}(X)$ the optimal \emph{two-sided partial expectation}.
We can show that the threshold problem for the two-sided partial expectation is Positivity-hard as well by a small adjustment of the construction above.

\begin{theorem}\label{thm:positivity_two-sided}
The Positivity problem is polynomial-time reducible to the following problem:
Given an MDP $\cM=(S,\Act,\Pr,\sinit, \wgt_\goal, \wgt_\fail, \goal, \fail)$  as above and a rational $\vartheta$, decide whether $\mathbb{E}^{\max}_{\cM,\sinit}(X)>\vartheta$. 
\end{theorem}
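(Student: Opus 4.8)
The plan is to reduce from the (one-sided) partial SSPP established Positivity-hard in Theorem \ref{thm:positivity_PE}, reusing essentially the same MDP $\cM$ but replacing the arbitrary integer weights by a pair of non-negative weight functions. The key observation is that in the construction for Theorem \ref{thm:positivity_PE}, the accumulated weight along a run can go up and down, but at any point the running weight is a bounded integer whose sign and magnitude drive the optimal decisions. Since the paths under consideration terminate (reach $\goal$) almost surely under the relevant schedulers, I can split the net accumulated weight $\wgt(\zeta)$ of a terminating path into its positive and negative contributions: let $\wgt_\goal$ record all the weight gained along positive-weight steps and let $\wgt_\fail$ (suitably routed to the $\fail$ state) record the magnitude of the weight lost along negative-weight steps, so that $\oplus\Goal(\zeta) = \wgt_\goal(\zeta) - \wgt_\fail(\zeta)$ whenever $\goal$ is reached.

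First I would take the MDP $\cM$ from the proof of Theorem \ref{thm:positivity_PE}, with its gadgets $\cI$, $\cG_{\bar\alpha}$, and $\cP_{\bar\beta}$ glued at $t$, $s$, and $\goal$. I would introduce a second terminal state $\fail$ and define two non-negative weight functions: $\wgt_\goal(s,\alpha) \eqdef \max(\wgt(s,\alpha),0)$ and $\wgt_\fail(s,\alpha) \eqdef \max(-\wgt(s,\alpha),0) = -\min(\wgt(s,\alpha),0)$, so that $\wgt(s,\alpha) = \wgt_\goal(s,\alpha) - \wgt_\fail(s,\alpha)$ for every edge. The subtlety is that the two-sided random variable $X$ reads $\wgt_\goal$ on $\goal$-paths and $\wgt_\fail$ on $\fail$-paths, whereas I want a single expression $\wgt_\goal - \wgt_\fail$ accumulated before reaching $\goal$. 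The natural fix is to add, from every state, a small-probability escape to $\fail$ that is never optimal, together with terminal corrections, but a cleaner route is to route all weight accumulation into one function and use $\fail$ only as the repository for the negative part; concretely I would arrange the MDP so that reaching $\fail$ collects exactly the negative contribution accrued so far while reaching $\goal$ collects the positive contribution, and then check that $\mathbb{E}^{\sched}(X)$ equals the partial expectation of the original $\cM$ up to an affine, scheduler-independent transformation.

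The decisive step is to verify that under this encoding the optimal scheduler is forced to make the same weight-dependent choices at states $t$, $s$, and the initial-gadget state as in Lemma \ref{lem:optimal_scheduler}, so that Lemma \ref{lem:recurrence_PE} still yields $d(-(k{-}1)+n) = u_n$ and the linear recurrence is faithfully encoded. Since both weight functions are non-negative and termination into $\{\goal,\fail\}$ is almost sure under every scheduler by construction, $\mathbb{E}^{\sched}_{\cM,\sinit}(X)$ is well-defined for all $\sched$, and the optimality equation ($\ast$) carries over with the value at $\goal$ and $\fail$ being the respective accumulated weights. I would then transport the threshold computation of Lemma \ref{thm:threshold_PE} verbatim: the reference scheduler $\sched$ choosing $\tau$ and the actions of Lemma \ref{lem:optimal_scheduler} yields a rational value $\vartheta$ computable in polynomial time via the same Neumann-series argument, and $\mathbb{E}^{\max}_{\cM,\sinit}(X) > \vartheta$ holds if and only if $\sched$ is non-optimal, which by Lemma \ref{lem:recurrence_PE} happens exactly when the sequence $(u_n)_{n\geq 0}$ has a negative member.

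I expect the main obstacle to be the bookkeeping that guarantees $X$ coincides (up to a fixed affine shift) with the one-sided partial expectation on all relevant paths while keeping both weight functions in $\mathbb{N}$; in particular, one must ensure that splitting each edge weight into positive and negative parts does not create a spurious incentive for a scheduler to steer toward $\fail$ in order to harvest $\wgt_\fail$, since $X$ evaluates to $\wgt_\fail$ rather than $-\wgt_\fail$ on $\fail$-paths. The resolution is to make the $\fail$-branch either unreachable with positive probability under any scheduler that can be optimal, or to assign it weight values that render it dominated, so that optimal behavior still tracks the difference $\wgt_\goal - \wgt_\fail$ accumulated en route to $\goal$; once this domination is established, every earlier lemma applies essentially unchanged and the reduction follows.
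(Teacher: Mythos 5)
There is a genuine gap, and it lies exactly at the point you flag as the ``main obstacle'' but then wave away. Under your splitting $\wgt_\goal(s,\act)=\max(\wgt(s,\act),0)$, $\wgt_\fail(s,\act)=\max(-\wgt(s,\act),0)$, every maximal path $\zeta$ satisfies $X(\zeta)=\oplus\goal(\zeta)+N(\zeta)$, where $N(\zeta)$ is the \emph{gross} magnitude of the negative steps along $\zeta$ (on $\goal$-paths negative steps are counted as $0$ instead of being subtracted; on $\fail$-paths $X$ pays out $N(\zeta)$ while $\oplus\goal$ pays $0$). So $\mathbb{E}^{\sched}(X)=\PE^{\sched}+\mathbb{E}^{\sched}(N)$, and the correction $\mathbb{E}^{\sched}(N)$ is \emph{not} scheduler-independent: a scheduler that keeps choosing $\gamma$ at $t$ loops through $\cG_{\bar{\alpha}}$ and accrues unbounded-in-expectation gross negative movement for free, whereas a scheduler that exits via $\gamma_j$ accrues almost none. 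Consequently maximizing $X$ is not an affine transformation of maximizing the partial expectation, the delicate comparisons in Lemma \ref{lem:optimal_scheduler} (which hinge on negative steps actually costing something) no longer determine the optimal actions, and Lemma \ref{lem:recurrence_PE} and the threshold computation cannot be ``transported verbatim.'' Your two proposed resolutions (make $\fail$ unreachable under optimal play, or dominated) attack the wrong symptom: the distortion already occurs on $\goal$-paths, not only via the $\fail$-branch.

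The paper resolves this with two ideas your proposal lacks. First, instead of a sign-splitting it uses a \emph{uniform} shift: all weights in the construction are $\geq -k$, so it sets $\wgt_\goal=\wgt+k$ and $\wgt_\fail=k$ on every edge, giving $X(\zeta)=\oplus\goal(\zeta)+k\cdot\length(\zeta)$ on all paths --- the distortion now depends only on the path length, not on which steps were negative. Second, since $\mathbb{E}^{\sched}(\length)$ is still scheduler-dependent in the original MDP of Theorem \ref{thm:positivity_PE}, the paper replaces the gadget $\cP_{\bar{\beta}}$ by a retimed gadget $\cT_{\bar{\beta}}$ whose absorption probabilities alternate between $1-\alpha$ and $0$ per step exactly as in $\cG_{\bar{\alpha}}$, so that the expected time $T$ to reach $\{\goal,\fail\}$ is the same constant under \emph{every} scheduler (and computable in closed form). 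Only then does $\mathbb{E}^{\sched}(X)=\PE^{\sched}_{\cM'}+kT$ hold with a scheduler-independent constant $kT$, making the reduction with threshold $\vartheta+kT$ sound. Without the gadget redesign enforcing scheduler-independent expected termination time, no per-step reweighting of the original MDP --- neither yours nor the uniform shift --- yields a correct reduction.
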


\begin{sidefigure}[t]
     \centering
    \includegraphics[width=0.4\textwidth]{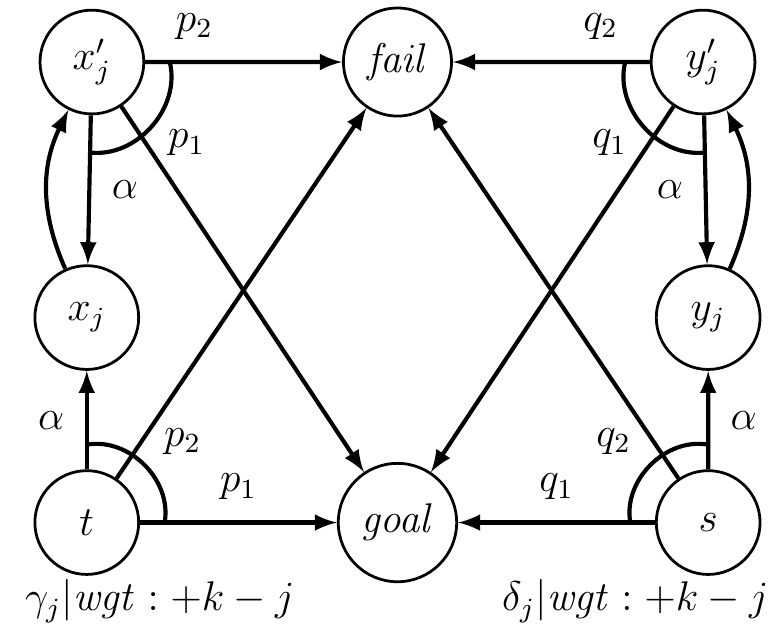}
        \caption{The  gadget $\cT_{\bar{\beta}}$ encoding  initial values in terms of two-sided partial expectations.}
        \label{fig:gadget_2_sided}
\end{sidefigure}

\begin{proof}
Given the parameters $\alpha_1,\dots, \alpha_k$ and $\beta_0,\dots,\beta_{k-1}$ of a rational linear recurrence sequence, we can construct an MDP $\cM^\prime=(S,\Act,\Pr,\sinit, \wgt, \goal, \fail)$ with one weight function $\wgt\colon S\times \Act \to \mathbb{Z}$ 
similar to the MDP $\cM$ depicted in Figure \ref{fig:full_MDP}.
W.l.o.g., we  again assume that   $\sum_i |\alpha_i|<\frac{1}{4}$ and that $0\leq \beta_j< \frac{1}{4k^{2k+2}}$ for all $j$. The non-negativity of the values
$\beta_j$ for all~$j$ can be assumed as the Positivity problem is trivial otherwise.
 The initial gadget and the gadget~$\cG_{\bar{\alpha}}$ are as before.
The gadget $\cP_{\bar{\beta}}$, however, is slightly modified and replaced by the gadget~x$\cT_{\bar{\beta}}$ depicted in Figure \ref{fig:gadget_2_sided}.
For this gadget, we define $\alpha=\sum_{i=1}^k |\alpha_i|$, $p_1=(1-\alpha) (\frac{1}{2k^{2(k-j)}}+\beta_j)$, $p_2=(1-\alpha)(1-(\frac{1}{2k^{2(k-j)}}+\beta_j))$,  $q_1=(1-\alpha) \frac{1}{2k^{2(k-j)}}$, and $q_2=(1-\alpha)(1-\frac{1}{2k^{2(k-j)}})$.
With the transitions as in the figure, the probability  to reach $\goal$ or $\fail$ and the weight accumulated does not change when choosing action $\gamma_j$ or $\delta_j$ compared to the gadget $\cP_{\bar{\beta}}$.
The only difference is that the expected time to reach $\goal$ or $\fail$ changes. 
The steps alternate between probability $1-\alpha$ and probability $0$ to reach $\goal$ or $\fail$ --  just as in the gadget $\cG_{\bar{\alpha}}$. In this way, it makes no difference for the expected time before reaching $\goal$ or $\fail$ when a scheduler stops choosing $\gamma$ and $\delta$.
We can, in fact, compute the expected time $T$ to reach $\goal$ or $\fail$ from $\sinit$ under any scheduler quite easily:
Reaching $t$ or $s$ takes $3$ steps in expectation. 
Afterwards, the number of steps taken is $1+2\ell$ with probability $\alpha^\ell\cdot (1-\alpha)$. In expectation, 
this yields
\[
\sum_{\ell=0}^\infty (1+2\ell) \alpha^\ell\cdot (1-\alpha) = \left( \sum_{\ell=0}^\infty 2(\ell+1) \alpha^\ell\cdot (1-\alpha)  \right) -1 = \frac{2}{1-\alpha} -1.
\]
additional steps. So,
\[
T=3+\frac{2}{1-\alpha}-1 = 2+\frac{2}{1-\alpha} .
\] 
The optimal scheduler $\sched$ for the partial expectation in $\cM^\prime$ is the same as in the MDP $\cM$ above. Also, the value $\vartheta$ of this scheduler can be computed as in Lemma \ref{thm:threshold_PE}.
So, $\PE^{\max}_{\cM^{\prime},\sinit} > \vartheta$ if and only if the given linear recurrence sequence is eventually negative.

Note that all weights in $\cM^\prime$ are $\geq -k$.
We define two new weight functions to obtain an MDP $\cN$ from $\cM^\prime$: We let $\wgt_\goal(s,\alpha)=\wgt(s,\alpha)+k$ and $\wgt_\fail(s,\alpha)=+k$ for all $(s,\alpha)\in S\times\Act$. Both weight functions take only non-negative integer values. 

Any scheduler $\sched$ for $\cM^\prime$ can be viewed as a scheduler for $\cN$, and vice versa, as the two MDPs only differ in the weight functions. 
Further, we observe that for each maximal path $\zeta$ ending in $\goal$ or $\fail$ in $\cM^\prime$ and at the same time in $\cN$, we have $X(\zeta)=\oplus \goal (\zeta)+k\cdot \length(\zeta)$.
(Recall that $\oplus \goal (\zeta)$ equals $\wgt(\zeta)$ if $\zeta$ reaches $\goal$ and $0$ if $\zeta$ reaches $\fail$.)
As the expected time before $\goal$ or $\fail$ is reached is constant, namely $T$ under any scheduler, it follows that for all schedulers $\tsched$ we have
\[
\mathbb{E}^{\tsched}_{\cN,\sinit}(X) = \PE^{\tsched}_{\cM^\prime,\sinit} + k\cdot T.
\]
Therefore, $\mathbb{E}^{\max}_{\cN,\sinit} (X)>\vartheta+ k\cdot T$ if and only if the given linear recurrence sequence eventually becomes negative. 
\end{proof}

\subsection{Conditional value-at-risk for accumulated weights}\label{sec:positivity_cvar}

Lastly, we aim to prove the Positivity-hardness of the threshold problem for the CVaR in this section. 
To this end, we provide a further direct reduction from the Positivity-problem to the threshold problem for the expected value of an auxiliary random variable closely related to the CVaR using our MDP-gadgets.

\paragraph{Conditional Value-at-Risk.}
Given an MDP $\cM=(S,\Act,P,\sinit,\wgt,\Goal)$ with a scheduler $\sched$, a random variable $X$ defined on runs of the MDP with values in $\mathbb{R}$ and a value $p\in [0,1]$, we define the {value-at-risk} as 
$\VaR^{\sched}_{p}(X) = \sup \{r\in \mathbb{R}| \Pr_{\cM}^\sched (X\leq r)\leq p\}$.
So, the value-at-risk is the point at which the cumulative distribution function of $X$ reaches or exceeds $p$. 
The {conditional value-at-risk} is now the expectation of $X$ under the condition that the outcome belongs to the $p$ worst outcomes -- in this case, the $p$ lowest outcomes.
Denote $\VaR_p^\sched(X)$ by $v$. 
Following the treatment of  random variables that are not continuous in general in \cite{kretinsky2018}, we define the conditional value-at-risk as follows:
\[\CVaR_p^\sched(X) = 1/p (  \Pr_{\cM}^{\sched}(X<v) \cdot \mathbb{E}_{\cM}^\sched(X|X<v) + (p- \Pr_{\cM}^{\sched}(X<v))\cdot v  ).\]
Outcomes of $X$ which are less than $v$  are treated differently to outcomes equal to $v$ as it is possible that the outcome $v$ has positive probability and we only want to 
 account exactly for the $p$ worst outcomes. Hence,  we take only $p- \Pr_{\cM}^{\sched}(X<v)$ of the outcomes which are exactly $v$ into account as well.
 To provide worst-case guarantees or to find risk-averse policies, we are  interested in the maximal and minimal conditional value-at-risk 
 \[\CVaR_p^{\max}(X)=\sup_\sched \CVaR_p^\sched(X) \text{  and }\CVaR_p^{\min}(X)=\inf_\sched \CVaR_p^\sched(X).\]
 In our formulation here, low outcomes are considered to be bad. Completely analogously, one can define the conditional value-at-risk for the highest $p$ outcomes.

The main result of the section is the following:

\begin{theorem}\label{thm:positivity_cvar}
The Positivity problem is polynomial-time reducible to the following problem:
Given an MDP $\cM$ and  rationals $\vartheta$ and  $p\in(0,1)$, decide whether \[\CVaR^{\max}_{p} (\rawdiaplus \goal)> \vartheta.\]
\end{theorem}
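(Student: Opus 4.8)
The plan is to combine the gadget-based framework developed for the partial SSPP with the Rockafellar--Uryasev representation of the conditional value-at-risk, thereby reducing the CVaR threshold problem to a threshold problem for the expected value of a suitable \emph{auxiliary} random variable. Write $X\eqdef\rawdiaplus\goal$ for the accumulated weight on reaching $\goal$. For every fixed $b\in\mathbb{R}$ one has $\CVaR_p^\sched(X)\geq b-\tfrac1p\,\mathbb{E}^\sched_{\cM}\big((b-X)^+\big)$, with equality exactly when $b=\VaR_p^\sched(X)$; here $x^+=\max(x,0)$, and a short computation shows this identity matches the non-continuous definition of $\CVaR_p^\sched$ used above. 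Exchanging suprema, $\CVaR^{\max}_p(X)=\sup_b\big(b-\tfrac1p\inf_\sched\mathbb{E}^\sched_\cM((b-X)^+)\big)$. The auxiliary random variable I would work with is therefore $Z_b\eqdef(b-X)^+$ for a fixed, suitably chosen threshold $b$, the construction being arranged so that the value-at-risk is pinned at this $b$ for every scheduler; then $\CVaR^{\max}_p(X)=b-\tfrac1p\inf_\sched\mathbb{E}^\sched_\cM(Z_b)$, and maximising the CVaR becomes a \emph{minimisation} threshold problem for the expectation of $Z_b$.

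For the construction I would reuse the initial gadget $\cI$ and the recurrence gadget $\cG_{\bar\alpha}$ from Section \ref{sec:gadget_recurrence}, together with a new initial-value gadget tailored to the expectation $\mathbb{E}^\sched_\cM(Z_b)$, glued at the states $s$, $t$ and $\goal$ as before. As for the partial SSPP, I would verify that the optimal values $V(r,w)$ of $\inf_\sched\mathbb{E}^\sched(Z_b)$ when starting in $r\in\{s,t\}$ with accumulated weight $w$ satisfy the optimality equation $(\ast)$, so that the presence of $\cG_{\bar\alpha}$ forces the difference $d(w)\eqdef V(t,w)-V(s,w)$ to obey $d(w)=\sum_{i=1}^k\alpha_i\,d(w-i)$ for large $w$. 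The new initial-value gadget is fine-tuned, just as $\cP_{\bar\beta}$ was for partial expectations, so that $d(w)=\beta_w$ for $0\le w<k$; under Assumption \ref{ass:1} this yields $d(w)=u_w$ for all $w\ge0$, and the optimal choice at the choice state of $\cI$ for accumulated weight $w$ (routing to $t$ via $\tau$ versus to $s$ via $\sigma$) records the sign of $u_w$.

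The remaining steps mirror the partial-expectation proof. The scheduler $\sched$ that always selects $\tau$ at the choice state and acts optimally thereafter is optimal if and only if $(u_n)_{n\ge0}$ is non-negative, and the value $\mathbb{E}^\sched_\cM(Z_b)$ it achieves, hence the induced threshold $\vartheta$, is rational and computable in polynomial time by the same convergent matrix (Neumann) series argument as in Lemma \ref{lem:compute_threshold_OC} and Lemma \ref{thm:threshold_PE}, using that $\Vert\tfrac1{2^k}A\Vert_\infty<1$. Consequently $\CVaR^{\max}_p(X)>\vartheta$ holds if and only if the given sequence has a negative member, which establishes the reduction.

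The main obstacle is the first step: pinning the value-at-risk at the fixed value $b$ \emph{uniformly over all schedulers}, so that the Rockafellar--Uryasev bound is tight and $\CVaR_p^\sched$ is genuinely an affine function of $\mathbb{E}^\sched(Z_b)$. Since the scheduler shapes the distribution of $X$, I would add a dedicated branch, entered with probability $1-p$, that deterministically reaches $\goal$ strictly on the high side of $b$, so that the bottom $p$ of the outcome mass is supplied entirely by the gadget construction and the $p$-quantile cannot be moved off $b$ by any scheduler's choices; the non-continuity of the distribution (a possible atom at $b$) must be absorbed carefully into the $(p-\Pr^\sched_\cM(X<b))\cdot b$ term of the CVaR definition. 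A secondary point, because $X=\rawdiaplus\goal$ must be almost surely defined, is that every gadget has to route all paths into $\goal$ rather than into a separate failure state, which I would arrange by letting the former failure transitions reach $\goal$ with an accumulated weight placed on the intended side of $b$.
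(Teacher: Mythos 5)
Your overall reduction scheme coincides with the paper's: the paper also pins the value-at-risk at a fixed point by an initial branch (an atom of mass $2/3$ at weight $0$, with $p=1/2$) and thereby turns $\CVaR^{\max}_p(\rawdiaplus\goal)$ into an affine function of the optimal expectation of a truncated auxiliary variable, namely $\rawdiaminus\goal(\zeta)=\min(\rawdiaplus\goal(\zeta),0)$, which is exactly $-Z_b$ for $b=0$ in your notation; your Rockafellar--Uryasev identity at $v=\VaR_p$ indeed agrees with the atom-aware definition of $\CVaR$ used in the paper. The subsequent ingredients you list (optimality equation $(\ast)$, reuse of $\cI$ and $\cG_{\bar{\alpha}}$, Neumann-series computation of the threshold) also match the paper's Lemma \ref{lem:Skolem_rawdiaminus}.

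The genuine gap is the initial-value gadget, which you dispose of by saying it is ``fine-tuned, just as $\cP_{\bar{\beta}}$ was,'' with the former failure transitions rerouted into $\goal$ on a chosen side of $b$. That adaptation fails, and the reason is the truncation itself. In $\cP_{\bar{\beta}}$ (Figure \ref{fig:gadget_initial_values}), the value $\beta_j$ enters the partial expectation only through paths that reach $\goal$ with \emph{positive} accumulated weight $+1$, with probability $\tfrac{1}{2k^{2(k-j)}}+\beta_j$. Any variable of the form $(b-X)^+$ is identically zero on paths ending above $b$, so if you reroute the failure transitions to the high side of $b$, the entire $x_j$/$y_j$ branch contributes $0$ to $\mathbb{E}^{\sched}(Z_b)$, the differences $d(w)$ collapse to $0$, and no linear recurrence sequence is encoded. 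If you instead reroute them to the low side of $b$, the contribution of those paths depends affinely on their landing weight, which shifts with the starting weight $w$ and with $j$; then $d(w)$ is no longer the constant $\beta_j$ required for the interplay with $\cG_{\bar{\alpha}}$, and moreover every action $\gamma_i$ now incurs a large unavoidable loss, so the recurrence action $\gamma$ (which escapes above $b$ with probability $1-\alpha$) becomes preferable at \emph{all} weight levels and the initial values are never realized by an optimal scheduler. This is precisely why the paper's gadget for this reduction (Figure \ref{fig:gadget_cvar}) is not a variant of $\cP_{\bar{\beta}}$ at all: it uses geometric self-loops that drive the weight below zero (a controlled tail loss), a three-way branching through $x_j$, $y_j$, $y_j'$, and it exploits the truncation qualitatively --- for weights just below $0$ the chance of climbing back above $0$ (and thus contributing $0$) outweighs the tail loss, making $\gamma_j,\delta_j$ optimal exactly there, while for non-negative weights the tail loss makes $\gamma,\delta$ optimal. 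Designing such a gadget and verifying this optimality structure is the actual technical content of the paper's proof of Theorem \ref{thm:positivity_cvar}, and it is the step your proposal leaves unsubstantiated.
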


We will use an auxiliary optimization problem to prove this result.
We begin with the following consideration: Given an MDP $\cM$ with initial state $\sinit$,
we construct a new MDP $\cN$.
We add a new initial state $\sinit^\prime$. In $\sinit^\prime$, there is only one action with weight $0$  enabled 
leading to $\sinit$ with probability $\frac{1}{3}$ and to $\goal$ with probability $\frac{2}{3}$. So, at least two thirds of the paths accumulate weight $0$ before reaching the goal. 
Hence, we can already say that $\VaR^\sched_{1/2}(\rawdiaplus \goal)=0$ in $\cN$ under any scheduler $\sched$. Note that schedulers for $\cM$ can be seen as schedulers for $\cN$ and vice versa.
This considerably simplifies the computation of the conditional value-at-risk in $\cN$. Define the random variable $\rawdiaminus \goal $ on paths $\zeta$ by
\[
\rawdiaminus \goal (\zeta)
=\min( \rawdiaplus \goal (\zeta),0).
\]
Now, the conditional value-at-risk for the probability value $1/2$ under a scheduler $\sched$ in $\cN$ is  given by 
$
\CVaR^\sched_{1/2}(\rawdiaplus \goal)= 2 \cdot \mathbb{E}^\sched_{\cN,\sinit}(\rawdiaminus \goal) = \frac{2}{3}\cdot \mathbb{E}^\sched_{\cM,\sinit}(\rawdiaminus \goal) 
$.
So, the result follows from the following lemma:

\begin{lemma}\label{lem:Skolem_rawdiaminus}
The Positivity problem is polynomial-time reducible to the following problem:
Given an MDP $\cM$ and a rational $\vartheta$, decide whether $\mathbb{E}^{\max}_{\cM,\sinit} (\rawdiaminus \goal)> \vartheta$.
\end{lemma}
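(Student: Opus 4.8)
The plan is to give a direct reduction following the modular gadget scheme of Section~\ref{sec:structure_MDP}, reusing the initial gadget $\cI$ and the recurrence gadget $\cG_{\bar\alpha}$ and building a new initial-values gadget tailored to $\rawdiaminus\goal$. For a state $r$ and an integer $w$, let $e(r,w)$ denote the optimal expected value of $\min(w+(\text{weight accumulated until }\goal\text{ is reached}),0)$ when the process is started in $r$ as if $r$ had been reached with accumulated weight $w$. Since $\rawdiaminus\goal$ depends only on the terminal accumulated weight, and since the construction will ensure that $\goal$ is reached almost surely, the values $e(r,w)$ satisfy the optimality equation $(\ast)$ of Section~\ref{sec:gadget_recurrence} with the boundary condition $e(\goal,w)=\min(w,0)$, exactly as the partial expectation satisfies it with boundary condition $e(\goal,w)=w$ (shown in \cite{chen2013}).

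The central idea that overcomes the specific difficulty of the truncation $\min(\cdot,0)$ is to \emph{reflect} the partial-SSPP construction so that the entire computation takes place at \emph{non-positive} weight levels. Concretely, I let the initial gadget reach the choice state with accumulated weight $-w$ (rather than $+w$) with probability $(1/2)^w$ for each $w\geq 1$, and I orient $\cG_{\bar\alpha}$ so that the index $n$ of $u_n$ is encoded at a level that \emph{decreases} as $n$ grows (replacing the weights $-i$ on the returning actions of $\cG_{\bar\alpha}$ by $+i$); the new initial-values gadget places the encodings of $\beta_0,\dots,\beta_{k-1}$ at the levels closest to $0$ and reaches $\goal$ only at non-positive weight. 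On this non-positive cone one has $\min(w,0)=w$, so $e$ coincides with the corresponding partial-expectation value, and the difference $d(w)=e(t,w)-e(s,w)$ obeys the linear recurrence exactly as in Lemma~\ref{lem:recurrence_PE}: the leftover probability $1-\sum_i|\alpha_i|$ routes both $t$ and $s$ to $\goal$ at the same non-positive weight, so its contribution cancels in $d(w)$, and the remaining terms reproduce $\sum_i\alpha_i\,d(w-i)$. Tuning the probabilities of the new gadget as in Lemma~\ref{lem:optimal_scheduler} yields $d$ at the bottom levels equal to $\beta_0,\dots,\beta_{k-1}$, so that $d$ encodes $(u_n)_{n\geq0}$.

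With the encoding in place, the reduction is completed as for the partial SSPP. Let $\sched$ be the scheduler that always moves to $t$ in the choice state and afterwards plays the weight-dependent optimal actions of the gadgets; it is optimal if and only if $(u_n)_{n\geq0}$ is non-negative, and if some $u_w<0$ then a scheduler deviating to $s$ at the corresponding weight level strictly improves on $\sched$. It then remains to show that $\vartheta\eqdef\mathbb{E}^{\sched}_{\cM,\sinit}(\rawdiaminus\goal)$ is rational and computable in polynomial time; since all relevant weights are non-positive, $\rawdiaminus\goal$ equals the accumulated weight on almost every path, and the value is obtained from the same convergent Neumann-series expression involving $\bigl(I_{2k}-\frac{1}{2^k}A\bigr)^{-1}$ as in Lemma~\ref{lem:compute_threshold_OC} and Lemma~\ref{thm:threshold_PE}, the row-sum bound $\Vert A\Vert_\infty<1$ again guaranteeing convergence. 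Consequently $\mathbb{E}^{\max}_{\cM,\sinit}(\rawdiaminus\goal)>\vartheta$ holds if and only if $\sched$ is not optimal, i.e.\ if and only if the given sequence eventually becomes negative.

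The step I expect to be the main obstacle is precisely the one forced by the truncation: guaranteeing that under \emph{every} scheduler the goal is reached almost surely and \emph{only at non-positive accumulated weight}, so that the inactive-truncation identity $\rawdiaminus\goal=\text{accumulated weight}$ holds on almost all paths and the partial-expectation analysis transfers. If a scheduler could instead reach $\goal$ with non-negative weight, it would secure the maximal possible value $0$ and the encoding of the recurrence would collapse; ruling this out requires a careful choice of weights and probabilities in the reflected initial-values gadget (and in the orientation of $\cG_{\bar\alpha}$), together with a scheduler-optimality argument analogous to Lemma~\ref{lem:optimal_scheduler} verifying that $\gamma$ and $\delta$ remain optimal at all sufficiently negative levels while the actions $\gamma_j$, $\delta_j$ take over only at the bottom levels.
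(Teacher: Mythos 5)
Your overall skeleton matches the paper's: a modular construction with $\cI$, $\cG_{\bar\alpha}$, a new initial-values gadget, the optimality equation $(\ast)$ with boundary condition $e(\goal,w)=\min(w,0)$, a reference scheduler $\sched$, and a Neumann-series computation of $\vartheta$. However, the one idea you add that departs from the paper --- reflecting the construction onto non-positive weight levels so that the truncation $\min(\cdot,0)$ is never active and the partial-expectation analysis transfers verbatim --- does not work, and the step you flag as ``the main obstacle'' is in fact fatal rather than a technicality. To make the recurrence run in the reflected orientation you must give the returning actions of $\cG_{\bar\alpha}$ weights $+i$; but then from $t$ or $s$ at \emph{any} level a scheduler may simply play $\gamma$/$\delta$ repeatedly and climb: each return strictly increases the weight, and once the accumulated weight is non-negative, playing $\gamma$/$\delta$ forever reaches $\goal$ almost surely at non-negative weight, securing the maximal possible value $0$ of $\rawdiaminus\goal$ from that point on. Hence your requirement that under \emph{every} scheduler the goal is reached only at non-positive weight is unachievable by construction, not merely delicate. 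Concretely, $e(t,w)=e(s,w)=0$ for all $w\geq 0$, so $d(w)=0$ there, and at the shallow levels (where the $\beta_j$ must be encoded --- the boundary of the whole recurrence) every scheduler has the option of one climbing step worth roughly $-(1-\alpha)$. The optimal values at exactly those levels are therefore contaminated by the truncation, so Lemmas \ref{lem:optimal_scheduler} and \ref{lem:recurrence_PE} cannot be invoked unchanged: proving that the gadget actions $\gamma_j,\delta_j$ beat climbing is a comparison against truncation-generated values and must be carried out for the truncated random variable itself.

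This is precisely where the paper's proof does genuinely new work, and in the opposite direction from yours: instead of trying to keep the truncation inactive, it keeps the original orientation of $\cG_{\bar\alpha}$ and the original initial gadget (encoding $u_n$ at level $-k+n$, so the recurrence runs at non-negative levels and the initial values sit at levels $-k,\dots,-1$), and designs the initial-values gadget of Figure \ref{fig:gadget_cvar} to \emph{exploit} the truncation: at the shallow negative levels the actions $\gamma_j,\delta_j$ reach $\goal$ with positive weight (contributing the maximal value $0$) with probability $1-\alpha$, and the quantitative ``tail loss'' estimates (values of order $\alpha\cdot(-3k+j+i-1)-\beta_i \geq -1+\tfrac{1}{5(k+1)}$ for the good actions versus $< -1+\tfrac{1}{5(k+1)}$ for the bad ones) establish the optimal decisions. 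Your proposal defers exactly this design and analysis to ``a careful choice of weights and probabilities,'' so as written it is incomplete at its crux; filling the gap would essentially mean reproducing the paper's gadget and its truncation-aware optimality argument rather than transferring the partial-SSPP lemmas.
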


\begin{proof}
The first important observation is that the optimal expectation $e(q,w)$ of $\rawdiaminus \goal$ for different starting states $q$ and starting weights $w$ satisfies equation ($\ast$) from Section \ref{sec:gadget_recurrence}, i.e.,
$e(q,w)=\sum_{r\in S} P(q,\alpha,r)\cdot e(r,w{+}\wgt(q,\alpha))$ if an optimal scheduler chooses actions $\alpha$ in state $q\not = \goal$ when the accumulated weight is $w$. The value $e(\goal,w)$ is $w$ if $w\leq 0$ and $0$ otherwise.
This allows us to reuse the gadget $\cG_{\bar{\alpha}}$ to encode a linear recurrence relation.

We again adjust the gadget encoding the initial values of a linear recurrence sequence.
So, let $k$ be a natural number, $\alpha_1,\dots,\alpha_k$ be rational coefficients of a linear recurrence sequence, and $\beta_0,\dots, \beta_{k-1}\geq 0$ the rational initial values. W.l.o.g. we again assume these values to be small using Assumption \ref{ass:1}, namely: $\sum_{1\leq i\leq k} |\alpha_i|\leq \frac{1}{5(k+1)}$ and for all $j$, $\beta_j\leq \frac{1}{3}\alpha$ where $\alpha=\sum_{1\leq i\leq k} |\alpha_i|$.

The new gadget that encodes the initial values of a linear recurrence sequence  is depicted in Figure \ref{fig:gadget_cvar}. In states $t$ and $s$, there is a choice between actions $\gamma_j$ and $\delta_j$, respectively, for $0\leq j \leq k-1$. 
After gluing together this gadget with the gadget $\cG_{\bar{\alpha}}$ at states $t$, $s$, and $\goal$,
we prove that the interplay between the gadgets is correct:
Let $0\leq j \leq k-1$. Starting with accumulated weight ${-}k{+}j$ in state $t$, the action $\gamma_j$ maximizes the partial expectation among the actions $\gamma_0,\dots,\gamma_{k-1}$. Likewise, $\delta_j$ is optimal when starting in $s$ with weight ${-}k{+}j$. If the accumulated weight is non-negative in state $s$ or $t$, then $\gamma$ or $\delta$ are optimal. The idea is that for positive starting weights, the tail loss of $\gamma_i$ and $\delta_i$ is relatively high while for weights just below~$0$, the chance to reach $\goal$ with positive weight again outweighs this tail loss.

\begin{figure}[t]
     \centering
            \includegraphics[width=0.8\linewidth]{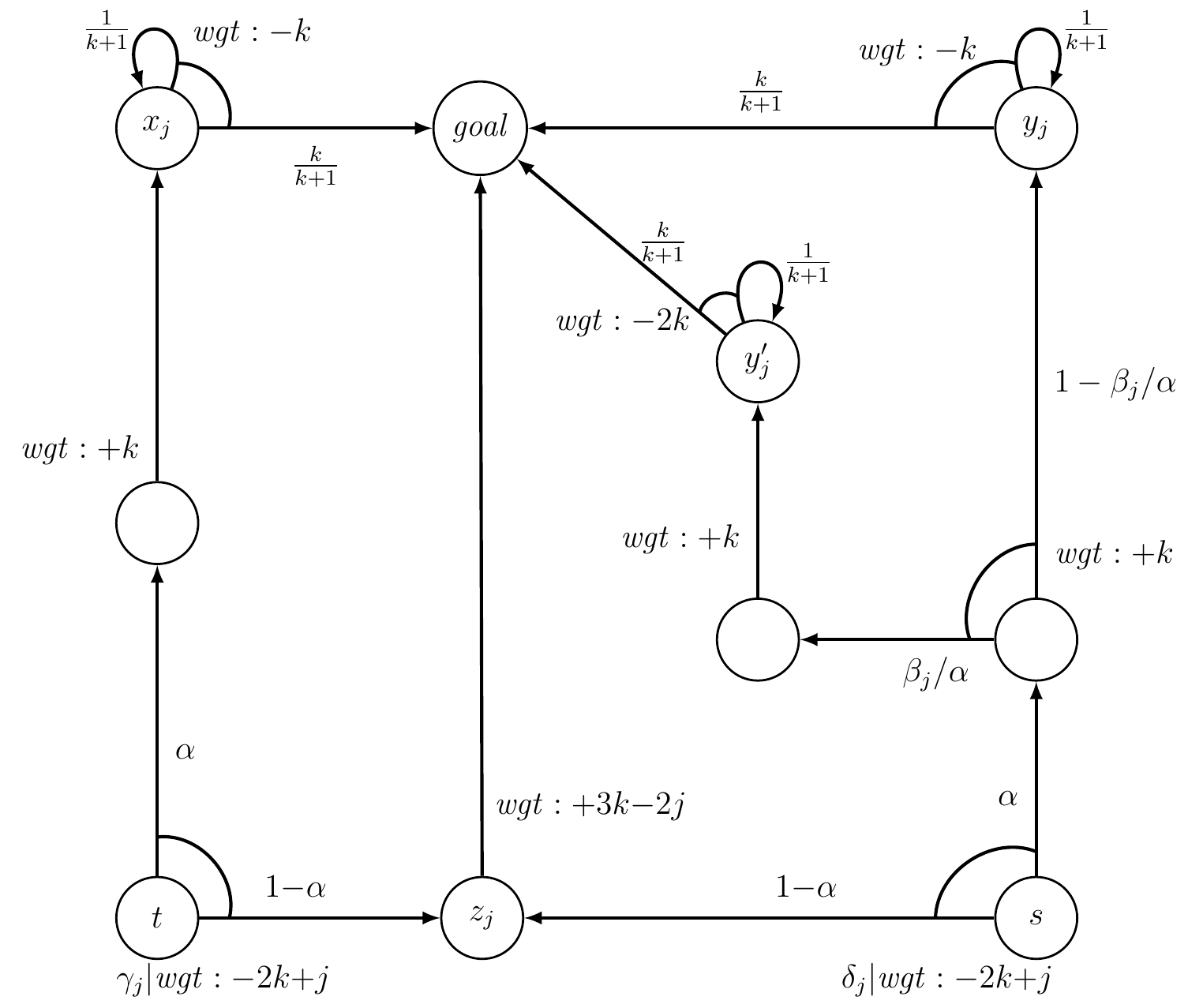}
 
        \caption[The gadget encoding initial values for the reduction to the threshold problem for the conditional value-at-risk.]{The gadget encoding initial values for the reduction to the threshold problem for the conditional value-at-risk. The gadget contains the depicted states and actions for each $0\leq j \leq k-1$.
The probability $\alpha$ is $\sum_{1\leq i \leq k} |\alpha_i|$.}
        \label{fig:gadget_cvar}
\end{figure}

First, we estimate the expectation of $\rawdiaminus \goal$ when choosing $\delta_i$ and $\delta$ while the accumulated weight is ${-}k{+}j$ in $s$. 
If $i> j$, then $\delta_i$ and $\delta$ lead to $\goal$ directly with probability $1{-}\alpha$ and weight $\leq -1$. 
So, the expectation is less than ${-}(1-\alpha)\leq {-}1{+}\frac{1}{5(k{+}1)}$. 

If $i\leq j$, then with probability $1{-}\alpha$ $\goal$ is reached with positive weight, hence $\rawdiaminus \goal$ is $0$ on these paths. 

With probability $\beta_i$, goal is reached via $y_j^\prime$. In this case all runs reach $\goal$ with negative weight. On the way to $y_j^\prime$ weight $2k$ is added, but afterwards subtracted again at least once. 
In expectation weight $2k$ is subtracted $\frac{k{+}1}{k}$ many times. Furthermore, ${-}2k{+}i$ is added to the starting weight of ${-}k{+}j$. So, these paths contribute $\beta_i\cdot(2k-2k\frac{k{+}1}{k}{-}3k{+}j{+}i)=({-}3k{+}j{+}i{-}2)\cdot\beta_i$ to the expectation of $\rawdiaminus \goal$. 

The remaining paths reach $\goal$ via $y_j$ and all reach $\goal$ with negative weight as well.
The probability to reach $y_j$ is $\alpha-\beta_j$. On the way to $y_j$, the initial weight of ${-}k{+}i$ is changed to ${-}2k{+}j{+}i$.
Afterwards, weight $-k$ is accumulated $\frac{k{+}1}{k}$-many times in expectation.
So,  these remaining paths contribute $({-}3k{+}j{+}i{-}1)\cdot(\alpha-\beta_i)$. So, all in all the expectation of $\rawdiaminus \goal$ in this situation is $\alpha {\cdot} ({-}3k{+}j{+}i{-}1){-}\beta_i$.
Now, as $\alpha\leq \frac{1}{5(k{+}1)}$ and $\beta_i\leq \frac{\alpha}{3}$ for all $i$,
we see that $\alpha {\cdot} ({-}3k{+}j{+}i{-}1){-}\beta_i\geq {-}(3k+2)\alpha\geq {-}1{+}\frac{1}{5(k{+}1)}$. The optimum with $i\leq j$ is obtained for $i=j$ as $\beta_i\leq \alpha/3$ for all $i$.
Hence
 indeed $\delta_j$ is the optimal action. For $\gamma_j$ the same proof with $\beta_i=0$ for all $i$ leads to the same result.

Now assume that the accumulated weight in $t$ or $s$ is $\ell\geq 0$. 
Then, all actions lead to $\goal$ with a positive weight with probability $1-\alpha$. 
In this case $\rawdiaminus \goal$ is $0$. 
However, a scheduler $\sched$ which always chooses $\gamma$ and $\delta$ is better than a scheduler choosing $\gamma_j$ or $\delta_j$ for any $j\leq k{-}1$. 
Under scheduler $\sched$ starting from $s$ or $t$ a run returns to $\{s,t\}$ with probability $\alpha$ while accumulating weight 
$\geq {-}k$ and the process is repeated. After choosing $\gamma_j$ or $\delta_j$ the run moves to $x_j$, $y_j$ or $y_j^\prime$ while accumulating a negative weight. From then on, in each step it will stay in that state with probability greater than $\alpha$ and accumulate weight $\leq {-}k$. Hence, the expectation of $\rawdiaminus \goal$ is lower under $\gamma_j$ or $\delta_j$ than under $\sched$. Therefore indeed $\gamma$ and $\delta$ are the best actions for non-negative accumulated weight in states $s$ and $t$.

Let now $e(t,w)$ and $e(s,w)$ denote the optimal expectations of $\rawdiaminus \goal$ when starting in $t$ or $s$ with weight $w$.
Further, let $d(w)=e(t,w)-e(s,w)$.
From the argument above, we also learn that
the difference $d(-k{+}j)$ is equal to $\beta_j$, for $0\leq j\leq k-1$ .
Put together with the linear recurrence encoded in $\cG_{\bar{\alpha}}$  this shows that $d({-k}+w) = u_w$ for all $w$ where $(u_n)_{n\in\mathbb{N}}$ is the linear recurrence sequence specified by the $\alpha_i$, $\beta_j$, $1\leq i\leq k$, and $0\leq j \leq k{-}1$. 

Finally, we add the same initial component as in the previous section to obtain an MDP~$\cM$. Let  $\sched$ be the scheduler always choosing $\tau$ in state $c$  and afterwards following the optimal actions as described above is optimal if and only if the linear recurrence sequence stays non-negative.
The remaining argument goes completely analogously to the proof of Theorem \ref{thm:positivity_oc-mdp}.
Grouping together the optimal values in vectors $v_n$ with $2k$ entries as done there, we can use the same Markov chain as in that proof to obtain a matrix $A$ such that $v_{n+1}=Av_n$. This allows us to compute the rational value $\vartheta=\mathbb{E}_{\cM,\sinit}^{\sched}(\rawdiaminus \goal)$ via a matrix series in polynomial time and 
$\mathbb{E}_{\cM,\sinit}^{\max}(\rawdiaminus \goal) > \vartheta$ if and only if the given linear recurrence sequence is eventually negative.
\end{proof}

By the discussion above, this lemma directly implies Theorem \ref{thm:positivity_cvar}.
With adaptions similar to the previous section, it is possible to obtain the analogous result for the minimal expectation of $\rawdiaminus \goal$. This implies that also the threshold problem  whether the minimal
conditional value-at-risk is less than a threshold $\vartheta$, $\CVaR^{\min}_{p} (\rawdiaplus \goal)<\vartheta$, is Positivity-hard.

\section{Conclusion}
\label{sec:conclusion}

The Positivity-hardness results established in this paper show that a series of problems on finite-state MDPs that have been studied and left open in the literature exhibit an inherent mathematical difficulty. 
A decidability result for any of these problems would imply a major break-through in analytic number theory.
At the heart of our Positivity-hardness proofs lies the construction of modular MDPs consisting of three gadgets. This construction provides a versatile proof strategy to establish Positivity-hardness results: It allowed us to provide  three direct reductions from the Positivity problem by constructing structurally identical MDPs that only differ in the gadget encoding the initial values. The further chains of reductions depicted in Figure \ref{fig:overview_positivity} established Positivity-hardness for a landscape of different problems on one-counter MDPs and integer-weighted MDPs.

The proof technique might be applicable to further threshold problems associated to optimization problems on MDPs. A main requirement for the direct applicability of the technique is that the optimal values $V(s,w)$ in terms of the current state $s$ and the weight $w$ accumulated so far, or a similar quantity that can be increased and decreased, satisfy an optimality equation of the form 
\[
V(s,w)=\max_{\alpha\in \Act(s)} \sum_{t\in S} P(s,\alpha,t)\cdot V(t,w+\wgt(s,\alpha)).
\]
In addition, the optimum must not be achievable with memoryless schedulers, but the optimal decisions have to depend on the accumulated weight to make it possible to encode initial values of a linear recurrence sequence. This combination of conditions is quite common as we have seen.

Furthermore,  our  Positivity-hardness results can be used to establish Positivity-hardness of further decision problems on MDPs, which are on first sight of a rather different nature:
In \cite{icalp2020,piribauer2021}, it is shown how our proof of the Positivity-hardness of the two-sided partial SSPP can be modified to prove the Positivity-hardness of two problems concerning the long-run satisfaction of path properties, namely the threshold problem for long-run probabilities and the model-checking problem of frequency-LTL. Both of these problems address the degree to which a property is satisfied by the sequence of suffixes of a run in order to analyze the long-run behavior of systems.
The long-run probability of a property $\varphi$ in an MDP $\cM$ under a scheduler~$\sched$ is the expected long-run average of the probability that a suffix generated by $\sched$ in $\cM$ satisfies~$\varphi$. Similarly, frequency-LTL extends LTL by an operator that requires a certain percentage of the suffixes of a run to satisfy a property.
Long-run probabilities and frequency-LTL in MDPs have been investigated in \cite{lics2019} and \cite{ForejtK15,ForejtKK15}, respectively, where decidable special cases of the mentioned decision problems have been identified. In general, however, the decidability status of these problems is open. 
The reductions in \cite{icalp2020,piribauer2021} show how the two-sided partial SSPP can be encoded into the long-run probability as well as the long-run frequency of the satisfaction of a simple regular co-safety property, i.e., the negation of a safety property, yielding Positivity-hardness for the threshold problem for long-run probabilities and the model-checking problem of frequency-LTL in MDPs.

It is worth mentioning that in the special case of Markov chains, several of the  problems investigated here are decidable: In Markov chains, partial and conditional expectations 
can be computed in polynomial time \cite{fossacs2019}.
Furthermore, one-counter Markov chains constitute a special case of recursive Markov chains, for which the threshold problem for the termination probability can be decided in polynomial space \cite{DBLP:journals/jacm/EtessamiY09}.
Remarkably however,  the threshold problem for the probability that the accumulated cost satisfies a Boolean combination of inequality constraints  in finite-state Markov chains is open \cite{haase2017computing}. 

Finally, the Positivity-hardness results  leave the possibility open that
some or all of the problems we studied are in fact harder than the Positivity problem. In particular, it could be the case that the problems are undecidable and that a proof of the undecidability would yield no implications for the Positivity problem.
For this reason, investigating whether some or all of the threshold problems are reducible to the Positivity problem constitutes a very interesting -- and challenging -- direction for future work.
Such an inter-reducibility result would show that studying any of the discussed optimization problems on MDPs could be a worthwhile direction of research to settle the decidability status of the Positivity-problem.
Some hope for an inter-reducibility result can be drawn from the fact that the optimal values are approximable for several of the problems -- for termination probabilities and expected termination times of one-counter MDPs, this was shown in \cite{brazdil2011,brazdil2012} and  for partial and conditional expectations in \cite{fossacs2019}. This indicates that there is at least a major difference to undecidable problems in a similar context such as the emptiness problem for probabilistic finite automata where the optimal value cannot be approximated \cite{paz1971,condon1989complexity}.

	\printbibliography

\end{document}